\def\rE{{\rm E}}
\def\FF{\mathbb F}
\def\RR{\mathbb R}
\def\NN{\mathbb N}
\newcommand{\Tr}{{\rm Tr}\,}
\def\mix{\mathop{\rm mix}}
\def\Ker{\mathop{\rm Ker}}
\def\Pr{{\rm Pr}}
\newtheorem{Lmm}{Lemma}
\newtheorem{Thm}{Theorem}
\newtheorem{Dfn}{Definition}
\newtheorem{Crl}{Corollary}
\newtheorem{rem}{Remark}
\newtheorem{proposition}{Proposition}
\def\Label#1{\label{#1}\ \text{[\ #1\ ]}\ }
\def\Label{\label}
\begin{document}
\title{Dual universality of hash functions and its applications to quantum cryptography}
\author{Toyohiro Tsurumaru and Masahito Hayashi
\thanks{
The material in this paper was presented in part
at QCRYPT 2011: First Annual Conference on Quantum Cryptography, Zurich, Switzerland, September 2011.
T. Tsurumaru is with
Mitsubishi Electric Corporation, Information Technology R\&D Center,
5-1-1 Ofuna, Kamakura-shi, Kanagawa, 247-8501, Japan (e-mail: Tsurumaru.Toyohiro@da.MitsubishiElectric.co.jp).
M. Hayashi is with Graduate School of Information Sciences, Tohoku University, Aoba-ku, Sendai, 980-8579, Japan, and
Centre for Quantum Technologies, National University of Singapore, 3 Science Drive 2, Singapore 117542.
(e-mail: hayashi@math.is.tohoku.ac.jp)}}

\maketitle

\begin{abstract}
In this paper, we introduce the concept of dual universality of hash functions and present its applications to quantum cryptography.
We begin by establishing the one-to-one correspondence between a linear function family ${\cal F}$ and a code family ${\cal C}$, and thereby defining $\varepsilon$-almost dual universal$_2$ hash functions, as a generalization of the conventional universal$_2$ hash functions.
Then we show that this generalized (and thus broader) class of hash functions is in fact sufficient for the security of quantum cryptography.
This result can be explained in two different formalisms.
First, by noting its relation to the $\delta$-biased family introduced by Dodis and Smith, we demonstrate that Renner's two-universal hashing lemma is generalized to our class of hash functions.
Next, we prove that the proof technique by Shor and Preskill can be applied to quantum key distribution (QKD) systems that use our generalized class of hash functions for privacy amplification.
While Shor-Preskill formalism requires an implementer of a QKD system to explicitly construct a linear code of the Calderbank-Shor-Steane type,
this result removes the existing difficulty of the construction a linear code of CSS code 
by replacing it by 
the combination of an ordinary classical error correcting code and our proposed hash function.
We also show that a similar result applies to the quantum wire-tap channel.

Finally we compare our results in the two formalisms and show that, in typical QKD scenarios,
the Shor-Preskill--type argument gives better security bounds
in terms of the trace distance and Holevo information, 
than the method based on the $\delta$-biased family.
\end{abstract}

\section{Introduction}

Extracting secure uniform random number is
an important task for cryptographic applications
with the presence of quantum leaked information as well as 
that of classical leaked information.
For the quantum setting, several extractors
are proposed, e.g., 
2-universal hashing \cite{Renner}, 
approximate 2-universal hashing \cite{TSSR11},
sample-and-hash \cite{KR07}, 
one-bit extractors \cite{KT08}, and 
Trevisan's extractor \cite{DPVR09}.
In this paper, we 
focus on universal$_2$ hash functions \cite{CW79}
which has a variety of 
cryptographic applications, for example, for the information theoretically secure signatures, 
the hash functions for 
for privacy amplification \cite{S02,BBCM,ILL}
and for the wire-tap channel\cite{Hayashi2,Hayashi5}.
The class of universal$_2$ hash function families 
is the largest class of families of hash functions 
among known classes of families of hash functions guaranteeing the strong security.
However, there might exist a larger class of 
hash functions guaranteeing the strong security.
If such a class exists, we might realize a strongly secure privacy 
amplification with a smaller complexity.
It is known that 
the class of universal$_2$ hash functions 
is included in the class of 
$\varepsilon$-almost universal$_2$ hash functions\cite{CW79,WC81}.
However, as is shown in Section \ref{s8-b}, 
there exists an example of
$\varepsilon$-almost universal$_2$ hash functions that cannot yield
the strong security.
Hence, we have to consider another type of generalization of
the class of universal$_2$ hash functions. 

In this paper, 
in order to seek such a larger class, we restrict our hash functions to linear functions on a finite-dimensional space over the finite field $\FF_2$
because a larger part of hash functions with a smaller complexity are linear.
Under the restriction, we can find a one-to-one correspondence between 
a hash function and a linear code by considering the kernel of the hash function.
Focusing on the dual code of the code corresponding to the given hash function,
we propose the class of 
$\varepsilon$-almost dual universal$_2$ hash functions
as a 
class of families of linear hash functions
satisfying the following conditions:
\begin{enumerate}
\item The class of families of hash functions contains the class of
universal$_2$ hash functions.
\item 
Any family of hash functions in this class yields the strong security
when the generating key rate is sufficiently small.
\end{enumerate}

Hence, the relation among class of families of hash functions
is summarized as Fig. \ref{fig:inclusion-universality}.

\begin{figure}[htbp]
\begin{center}
\scalebox{1.0}{\includegraphics[scale=0.4]{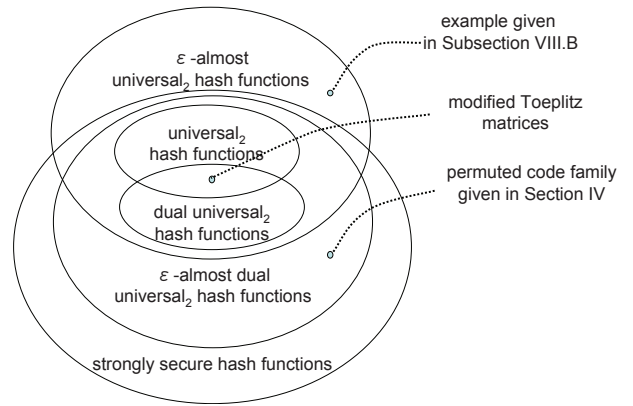}}
\end{center}
\caption{Relation among hash functions (when $\varepsilon$ increases as a polynomial of $n$).
The modified Toeplitz matrices are given by a concatenation $(X, I)$ of the Toeplitz matrix $X$ and the identity matrix $I$, mentioned in Section \ref{sec1}.
}
\Label{fig:inclusion-universality}
\end{figure}%

This fact can be shown by two different approaches.
In the first approach, 
we focus on the concept of the $\delta$-biased family,
which was introduced by Dodis and Smith \cite{DS05}.
Their results have also been extended to the quantum case (\cite{FS08}, or Lemma \ref{Lem6-1-q} of this paper).
Since the main purpose of their original results is to correct errors without leaking partial information,
they do not treat hash functions and privacy amplification.
In this paper, adding an appropriate discussion to their results concerning 
the $\delta$-biased family,
we show the strong security for the case where 
$\varepsilon$-almost dual universal$_2$ hash functions are applied 
in the privacy amplification with a sufficient sacrifice bits.
Since the bound (Lemma \ref{Lem6-3-q}) derived by this approach has a form similar to that by
Renner \cite{Renner},
we need to apply the method of smoothing \cite{cq-security}.
We call this approach the $\delta$-biased approach.

In the second approach,
we focus on the relation between the phase error probability and the leaked information given by the security proof \cite{H07,Hayashi3,Renes10} 
of a QKD protocol called the Bennett-Brassard 1984 (BB84) protocol \cite{BB84}.
The key point of this approach is the error correction in the phase basis by using a certain type of random coding.
Hence we call this approach the phase error correction approach.

While both approaches derive similar conclusions qualitatively,
the security bounds are different even when the same $\varepsilon$-almost dual universal$_2$ hash functions are applied.
In this respect, the phase error correction approach has two advantages over 
the $\delta$-biased approach.
As the first advantage, 
in the case of the BB84 QKD protocol via a depolarizing channel,
as is shown in Section \ref{s6-b},
the phase error correction approach yields better bounds
in terms of the trace distance and Holevo information, 
than the $\delta$-biased approach.

Next in order to explain the second advantage, we consider the case
where we apply the privacy amplification after the error correction.
In this setting, we treat a pair of two codes, i.e., 
the larger code for the error correction, and the smaller code for privacy amplification.
Then the second advantage of the phase error correction approach is that it can guarantee the strong security with a larger class of families of code pairs, than the $\delta$-biased approach.
In fact,  in order to guarantee the strong security in this setting, the $\delta$-biased approach requires 
$\varepsilon$-almost dual universal$_2$ hash functions
for a fixed error correction code.
However, 
in the phase error correction approach, we can relax this requirement for the family of code pairs.
That is, this approach guarantees the strong security with a larger class 
of families of code pairs.
As a concrete example of advantage of this concept, 
we note the construction of an appropriate deterministic hash function for a given error correction code, 
which needs the treatment of the security for such a larger class 
of families of code pairs.
That is, employing the phase error correction approach, 
we can show the existence of a deterministic hash function for 
a given error correction code that is
universally secure under the independent and identical condition.

The organization of this paper is as follows.
We begin in Section \ref{sec1} by reviewing the conventional universal hash functions, i.e., the properties of $\varepsilon$-almost universal$_2$ functions. Then we restrict ourselves to linear hash functions over a finite field $\FF_2^n$, and establish a one-to-one correspondence between a linear hash function family ${\cal F}$ and a linear code family ${\cal C}$, by using the simple fact that a kernel of a linear function is a linear space, and thus can be considered as a code.
This correspondence does not only allow us to define the code family ${\cal C}$ of a given universal hash function family ${\cal F}$, but also the dual code family ${\cal C^\perp}$ corresponding to it.
Under this setting, interestingly, a simple algebraic argument shows that the universality of ${\cal C}$ (i.e., the property of ${\cal C}$ being universal$_2$) also guarantees that of ${\cal C}^\perp$ (see Fig. \ref{fig:inclusion-universality}).
For example, (1) if ${\cal C}$ is universal$_2$, or equivalently, 1-almost universal$_2$, then ${\cal C}^\perp$ is 2-almost universal$_2$, but nevertheless, (2) for an $\varepsilon$-almost universal$_2$ code family ${\cal C}$ with $\varepsilon>1$, the dual code family ${\cal C}$ is not necessarily $\varepsilon$-almost universal$_2$, as can be seen from an explicit counterexample.
These results lead us to introduce a new class of hash functions called an $\varepsilon$-almost {\it dual} universal$_2$ hash function family, as a set of hash functions whose kernels form an $\varepsilon$-almost dual universal$_2$ code family.
This concept is indeed a generalization of the conventional universality$_2$,
since a universal$_2$ hash function family is a special case of our $\varepsilon$-almost {\it dual} universal$_2$ family.

In Section \ref{delta-biased}, we note a simple relation between our ``$\varepsilon$-almost {\it dual} universal$_2$ family'' and the concept of the ``$\delta$-biased family'', originally introduced by Dodis and Smith \cite{DS05} for correcting errors without leaking partial information.
By using this relation, we demonstrate that Renner's two-universal hashing lemma \cite[Lemma 5.4.3]{Renner} can be extended to the case where an $\varepsilon$-almost dual universal hash function family is used.
Note here that in Refs. \cite{DS05,FS08}, they did not refer this relation with privacy amplification.
This result means that the hashing lemma is valid for a broader class of hash functions than previously thought, since the conventional type of two-universal hash functions is a special case of our $\varepsilon$-almost dual universal$_2$ hash functions.

In Section \ref{s2-5}, we introduce the concept of the permuted code family, as the set of codes obtained by permuting bits of a given code $C$.
Then we show the existence of a code $C$, whose permuted family ${\cal C}_C$ is $(n+1)$-almost dual universal$_2$, with $n$ being the bit length of $C$.
The code $C$ of this type is particularly useful when the setting of our communication model is invariant under bit permutations, since the average performance of the code $C$ equals that of an $(n+1)$-almost dual universal$_2$ code family.
Due to this property, the permuted code family plays a key role in showing the existence of a deterministic hash function that works universally for different types of channels.

In Section \ref{s4}, as a preparation for later sections, we apply the results of Sections \ref{sec1} and \ref{s2-5}
to error correction. We show that a code $C\in{\cal C}$ serves as a good code when it is chosen randomly from an $\varepsilon$-almost universal$_2$ code family ${\cal C}$.

In Section \ref{s5}, we apply these results to the security proof of a QKD protocol called the Bennett-Brassard 1984 (BB84) protocol \cite{BB84}.
We use the proof technique of the Shor-Preskill--type, which reduces the security of a secret key to the error correcting property of the Calderbank-Shor-Steane (CSS) quantum error correcting code (e.g., \cite{SP00,GLLP,WMU06,H07}).
This proof technique is elegant and widely used, but also has a drawback.
That is, it requires the implementation of the classical CSS code in actual QKD systems, which can be difficult especially for large block lengths
(This is not the case for Renner's method, where universal$_2$ hash functions can be used for privacy amplification).
Our result solves this difficulty; even when one uses $\varepsilon$-almost {\it dual} universal$_2$ functions for privacy amplification, the security can be shown in the Shor-Preskill formalism.
Note here again that the conventional universal$_2$ function family is a special case of our $\varepsilon$-almost dual universal$_2$ families.

Then, in Section \ref{s6}, we apply our results on QKD to the quantum wire-tap channel.
In this model, a sender Alice has channels to two receivers, i.e.,
an authorized receiver Bob, and an unauthorized receiver Eve, often referred to as a wire-tapper.
The channels from Alice to Bob and to Eve are not necessarily restricted to any type, but we assume that they are both specified when we analyze the security.
The main issue here is to obtain an upper bound of leaked information with
with appropriate transmission rates.
The net transmission rate can be given as the information transmission rate $R'$ to Bob minus the sacrifice bit rate $R$.
The former rate can be treated in the framework of error correcting code.
The latter rate corresponds to a privacy amplification process.

Under these settings, in Section \ref{s6}, we consider a specific type of the quantum wire-tap channel where Alice and Bob are connected by the Pauli channnel.
By applying our results on QKD to this model, we show that an $\varepsilon$-almost dual universal$_2$ function family is sufficient for removing Eve's information.
Then by using the invariance of the channel under bit permutations, we also show the existence of a deterministic hash function that works universally, that is, the hash function whose construction does not depend on the phase error probability caused by the wire-tapper.
We also clarify that
our evaluation is better than
the $\delta$-biased approach based on given \cite{DS05,FS08,cq-security}.

Finally, in Section \ref{s8},
we discuss the relation with existing results.
In Subsection \ref{s8-a},
we summarize the relation with existing results.
In Subsection \ref{s8-b},
we provide an example of an $\varepsilon$-almost universal$_2$ hash function 
family that yields insecure bits.
In Section \ref{s8-c},
we consider the case where one applies the privacy amplification after the error correction.
Then, we show that
the phase error correction approach can guarantee the strong security 
with a larger class of families of code pairs
than the $\delta$-biased approach.

\section{Dual universality of a code family}
\Label{sec1}

\subsection{Linear universal hash functions as a linear code family}
We start by reviewing the basic properties of universal$_2$ hash functions. 
Consider sets $A$ and $B$, and also a function family ${\cal F}$ consisting of functions from $A$ to $B$;
that is, ${\cal F}$ is a set of function ${\cal F}=\{f_r|r\in I\}$ with $f_r: A\to B$, where $I$ denotes a set of indices $r$ of hash functions.
Our purpose is to select $f_r$ with an equal probability and use them as a hash function, and for this purpose, we always let $|A|\ge|B|\ge2$.
We say that a function family ${\cal F}$ is {\it $\varepsilon$-almost universal$_2$} \cite{CW79,WC81}, if, for any pair of different inputs $x_1$,$x_2$, the collision probability of their outputs is upper bounded as
\begin{align}
&{\rm Pr}\left[f_r(x_1)=f_r(x_2)\right] \nonumber \\
=&\frac1{|I|}\#\left\{\,r\in I\,|\,f_r(x_1)=f_r(x_2)\,\right\}
\le \frac{\varepsilon}{|B|}.
\Label{eq:def-universal-2}
\end{align}
The parameter $\varepsilon$ appearing in (\ref{eq:def-universal-2}) is shown to be confined in the region
\begin{equation}
\varepsilon\ge\frac{|A|-|B|}{|A|-1},
\Label{eq:epsilon-lower-bound}
\end{equation}
and in particular, a function family ${\cal F}$ attaining the equality of (\ref{eq:epsilon-lower-bound}) is called an {\it optimally universal$_2$ function family} \cite{S91}.
On the other hand, a family ${\cal F}$ with $\varepsilon=1$ is simply called a {\it universal$_2$} function family.

There are three important examples of universal$_2$ hash function families:
\begin{itemize}
\item {\bf Example 1}: Toeplitz matrices (see, e.g., \cite{MNP90}).
Let $\{M_r\,|\, r\in I\}$ be a set of all $m\times n$ Toeplitz matrices.
Then for an input $x\in \FF_2^n$, the output $y\in \FF_2^m$ of function $f_r$ is given by $y=xM_r$.
\item {\bf Example 2}: Modified Toeplitz matrices (see, e.g., \cite{Hayashi2}).
Let ${\cal T}=\{T_r\,|\, r\in I\}$ be a set of all $m\times (n-m)$ Toeplitz matrix.
Then let $M_r=(T_r,I_{m})$ be an $m\times n$ matrix defined by a concatenation of $T_r$ and the $m$-dimensional identity matrix $I_m$.
For an input $x\in \FF_2^n$, the output $y\in \FF_2^m$ of function $f_r$ is given by $y=xM_r$.
\end{itemize}
These (modified) Toeplitz matrices are particularly useful in practice, because there exists an efficient multiplication algorithm using the fast Fourier transform algorithm with complexity $O(n\log n)$ (see, e.g., \cite{MatrixTextbook}).

In this paper, we focus only on linear functions over a finite field $\FF_2$. We assume that sets $A$,$B$ are $\FF_2^n$, $\FF_2^m$ respectively with $n\ge m$, and $f_r$ are linear functions over $\FF_2$.
Note that, in this case, there is a kernel $C_r$ corresponding to each $f_r$, which is a vector space of $n-m$ dimensions or more.
Also note that, conversely, when given a vector subspace $C_r\subset\FF_2^n$ of $n-m$ dimensions or more, one can always construct a linear function
\begin{equation}
\tilde{f}_r: \FF_2^n\to \FF_2^n/C_r\cong \FF_2^{l_r}\ \ {\rm with}\ \ 
\max_r l_r= m.
\Label{eq:def-tilde-f}
\end{equation}
This means that, by considering $C_r$ as an error-correcting code\footnote{For the present, we take a standpoint that any vector subspace of $\FF_2^n$ is a code, whether or not it can actually correct errors.}, we can always identify a linear hash function $f_r$ and a error correcting code $C_r$.\footnote{Note that ${\rm dim}\,C_r={\rm dim}\,{\rm Ker}\,f_r=n-l_r$ is not a constant in general.
For example, for the function family defined by multiplication of all normal (i.e., unmodified) Toeplitz matrices of Example 1, ${\rm dim}\,C_r$ varies from $n-m$ to $n$ depending on $r\in I$.
The special case of ${\rm dim}\,C_r$ being a constant will be discussed in detail in Section \ref{subsec:surjective_linear_functions}.}

In this terminology, 
since $n- \min_r \dim C_r=m$,
the definition of $\varepsilon$-universal$_2$ function family of (\ref{eq:def-universal-2}) takes the form
\begin{equation}
\forall x\in \FF_2^n\setminus\{0\},\ \ {\rm Pr}\left[\tilde{f}_r(x)=0\right]\le 2^{-m}\varepsilon,
\end{equation}
which can further be rewritten as
\begin{equation}
\forall x\in \FF_2^n\setminus\{0\},\ \ {\rm Pr}\left[x\in C_r\right]\le 
2^{\min_r \dim C_r-n}\varepsilon.
\label{eq:universality_stated_with_code_family}
\end{equation}
This shows that the set of kernel ${\cal C}=\{C_r|r\in I\}$ contains sufficient information for determining if a function family ${\cal F}=\{f_r|r\in I\}$ is $\varepsilon$-almost universal$_2$ or not.

To see this in more detail, we give explicit constructions.
For later convenience, we denote a generating matrix of a code $C$ by $G(C)$, so that the rows of $G(C)$ are basis vectors of $C$. We also denote a parity check matrix of $C$ by $H(C)$, hence one may choose $H(C)=G(C^\perp)$.
If one wants to construct $C_r$ from $f_r$, let $x$ be a column vector, and define a linear function $f_r$ as $y=f_r(x)=M_rx$ by using an $m\times n$-matrix $M_r$.
Here $M_r$ corresponds to a parity check matrix of error-correcting code $C_r$,
and thus the row vectors of $M_r$ spans $C_r^\perp$.
Conversely, if one wants to construct a linear function $\tilde{f}_r:\FF_2^n\to\FF_2^m$ from a code $C_r$, do as follows:
First, let $l_r:=\dim C_r^\perp\le m$, and take a basis of $C_r^\perp\subset\FF_2^n$ as $\{u_1,\dots,u_{l_r}\}$, and a basis of $\FF_2^m$ as $\{v_1,\dots,v_m\}$.
Then define a matrix $\tilde{M}_r=\sum_{i=1}^{l_r}v_iu^T_i$, and let $\tilde{f}_r(x)=\tilde{M}_rx$.

It should be noted that, in fact, this construction of $\tilde{f}_r$ has an ambiguity that comes from choices of bases $\{u_i\}$ and $\{v_i\}$.
By the above procedure, even when one constructs $C_r$ from $f_r$, and then $\tilde{f}_r$ from the obtained $C_r$, $\tilde{f}_r$ and $f_r$ may not equal in general.
In this paper, however, we do not worry about this ambiguity, because
(i) the ambiguity does not affect the property of $\tilde{f}_r$ being $\varepsilon$-almost universal$_2$, and
(ii) the ambiguity is absent after all when we actually implement and operate universal hash functions for cryptographic purposes; in such cases, we never think of $C_r$ as a vector space, but rather specify matrices $M_r$ or basis sets of $C_r$ explicitly.
Note that a similar situation happens with error-correcting codes as well; i.e., it is convenient to interpret $C_r$ as a mathematical vector space when one analyzes the code theoretically, but in practice one can never implement a code as a program or a circuit without specifying the basis vectors, or equivalently, the parity check and the generating matrices.

\subsection{Dual universality of a code family}
\Label{sec:approx-duality}From these arguments, we define the universality of error-correcting codes as follows.
\begin{Dfn}
We define the minimum (respectively, maximum) dimension of a code family ${\cal C}=\{C_r|r\in I\}$ as  $t_{\min}:=\min_{r\in I}\dim C_r=\min_{r\in I}n-l_r$ (respectively, $t_{\max}:=\max_{r\in I}\dim C_r=\max_{r\in I}n-l_r$).
\end{Dfn}
\begin{Dfn}
We define the dual code family ${\cal C}^\perp$ of a given linear code family ${\cal C}=\{C_r|r\in I\}$ as the set of all dual codes of $C_r$. That is, ${\cal C}^\perp=\{C_r^\perp|r\in I\}$.
\end{Dfn}
\begin{Dfn}
We say that a linear code family ${\cal C}=\{\,C_r\subset\FF_2^n\,|\,r\in I\,\}$ of minimum dimension $t_{\min}$ is an $\varepsilon$-almost universal$_2$ code family
of minimum dimension $t_{\min}$ 
, if the following condition is satisfied
\begin{equation}
\forall x\in \FF_2^n\setminus\{0\},\ \ {\rm Pr}\left[x\in C_r\right]\le 2^{t_{\min}-n}\varepsilon.
\Label{eq:C-r-upperbound-1}
\end{equation}
Relaxing Condition \ref{eq:C-r-upperbound-1}, 
we say that a linear code family ${\cal C}=\{\,C_r\subset\FF_2^n\,|\,r\in I\,\}$ of maximum dimension $t_{\max}$ is an $\varepsilon$-almost universal$_2$ code family
of maximum dimension $t_{\min}$ 
, if the following condition is satisfied
\begin{equation}
\forall x\in \FF_2^n\setminus\{0\},\ \ {\rm Pr}\left[x\in C_r\right]\le 2^{t_{\max}-n}\varepsilon.
\Label{eq:C-r-upperbound-2}
\end{equation}
\end{Dfn}

As in the case of a universal$_2$ function family, $\varepsilon$ is bounded from below by (\ref{eq:epsilon-lower-bound}) as $\varepsilon\ge(2^n-2^{n-t})/(2^n-1)$. For the case where $\varepsilon$ achieves this minimum, we say that ${\cal C}$ is optimally universal$_2$. Similarly, if $\varepsilon=1$, we call ${\cal C}$ a universal$_2$ code family.

We also introduce the notion of dual universality as follows.
\begin{Dfn}
We say that a code family ${\cal C}$ is $\varepsilon$-almost dual universal$_2$
of maximum (minimum) dimension $t$
, if the dual family ${\cal C}^\perp$ is $\varepsilon$-almost universal$_2$
of minimum (maximum) dimension $t$.
\end{Dfn}
Hence, accordingly,
\begin{Dfn}
\label{Dfn:def_dual_function_family}
A linear function family ${\cal F}=\{f_r|r\in I\}$ is $\varepsilon$-almost dual universal$_2$, if the kernels $C_r$ of $f_r$ form an $\varepsilon$-almost dual universal$_2$ code family.
\end{Dfn}
An explicit example of a dual universal$_2$ function family (with $\varepsilon=1$) can be given by the modified Toeplitz matrices 
(Example 2)
mentioned earlier \cite{H07}, i.e., a concatenation $(X, I)$ of the Toeplitz matrix $X$ and the identity matrix $I$.
This example is particularly useful in practice because it is both universal$_2$ and dual universal$_2$ (c.f., Fig. \ref{fig:inclusion-universality}), and also because there exists an efficient algorithm with complexity $O(n\log n)$.

Indeed, since Condition (\ref{eq:C-r-upperbound-1}) coincides with
(\ref{eq:universality_stated_with_code_family}),
it seems it is enough to use only Condition (\ref{eq:C-r-upperbound-1}).
In the case of Example 1,
a large part of 
Kernels of $M_r$ takes
their dimension to be the maximum dimension 
$n-m$ of the code family.
Then, 
Kernels of $M_r$ forms 
an $\varepsilon$-almost universal$_2$ code family 
of maximum dimension $n-m$ with $\varepsilon=1$.

However, when we consider 
$\varepsilon$-almost dual universal$_2$ family of hash functions,
our situation becomes more complex.
In the case of Example 1,
a large part of 
dual codes of Kernels of $M_r$ takes
their dimension to be the minimum dimension 
$m$ of the code family.
In this case, the vector $x$ belongs to the 
dual code of Kernel of $M_r$ 
if and only if
$x$ can be written as a linear combination of row vectors of $M_r$.
Hence, we can show that
\begin{eqnarray*}
{\rm Pr}[x \in (\Ker M_r)^{\perp}]
\le 2^{m-n},
\end{eqnarray*}
which implies that
$\{M_r|r \in I\}$ is  
an $\varepsilon$-almost $_2$ dual universal$_2$ 
code family function family with $\varepsilon=1$.
Hence, Condition (\ref{eq:C-r-upperbound-2}) is essential for 
$\varepsilon$-almost dual universality$_2$.

With these preliminaries, we can present the following main theorem of this section:
\begin{Thm}
\Label{thm:almost-universal}
Given an $\varepsilon$-almost universal$_2$ code family ${\cal C}$ of minimum dimension $t$, the dual code family ${\cal C}^\perp$ is a $2(1-2^{t-n}\varepsilon)+(\varepsilon-1)2^t$-almost universal$_2$ code family with maximum dimension $n-t$. That is, for $\forall x\in \FF_2^n\setminus\{0\}$, the dual code family ${\cal C}^\perp$ satisfies
\begin{equation}
{\rm Pr}\left[x\in C_r^\perp\right]\le 
(1-2^{t-n}\varepsilon)2^{-t+1}+\varepsilon-1.
\Label{eq:inequality-thm1}
\end{equation}
In other words, the code family ${\cal C}$ is also $2(1-2^{t-n}\varepsilon)+(\varepsilon-1)2^t$-almost dual universal$_2$.
\end{Thm}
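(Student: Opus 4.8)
The plan is to fix an arbitrary nonzero $x\in\FF_2^n$ and to estimate ${\rm Pr}[x\in C_r^\perp]$ by counting, for each $r$, how many codewords of $C_r$ are orthogonal to $x$. The natural point of contact with the hypothesis is that the event $x\in C_r^\perp$ is equivalent to the $\FF_2$-linear functional $y\mapsto x\cdot y$ vanishing identically on $C_r$, whereas the universality bound (\ref{eq:C-r-upperbound-1}) controls ${\rm Pr}[y\in C_r]$ for each \emph{individual} nonzero $y$. I would therefore bridge these two descriptions through the counting quantity $N_0(r):=\#\{\,y\in C_r\mid y\neq 0,\ x\cdot y=0\,\}$ and bound its average $\frac1{|I|}\sum_{r\in I}N_0(r)$ from both sides.

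For the lower bound, I would use that the restriction of $y\mapsto x\cdot y$ to $C_r$ is either identically zero or surjective onto $\FF_2$. In the first case $x\in C_r^\perp$ and $N_0(r)=|C_r|-1$; in the second case $x\notin C_r^\perp$ and $N_0(r)=|C_r|/2-1$. Invoking $|C_r|\ge 2^t$ (minimum dimension $t$) in both cases gives $N_0(r)\ge 2^t-1$ when $x\in C_r^\perp$ and $N_0(r)\ge 2^{t-1}-1$ otherwise, so averaging over $r$ yields $\frac1{|I|}\sum_{r}N_0(r)\ge (2^{t-1}-1)+2^{t-1}\,{\rm Pr}[x\in C_r^\perp]$.

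For the upper bound, I would exchange the order of counting: the average of $N_0(r)$ equals $\sum_{y\neq 0,\ x\cdot y=0}{\rm Pr}[y\in C_r]$. Since $x\neq 0$, the hyperplane $x\cdot y=0$ contains exactly $2^{n-1}-1$ nonzero vectors, and the hypothesis bounds each summand by $2^{t-n}\varepsilon$, so $\frac1{|I|}\sum_{r}N_0(r)\le (2^{n-1}-1)2^{t-n}\varepsilon$. Combining the two estimates and solving the resulting linear inequality for ${\rm Pr}[x\in C_r^\perp]$ reproduces (\ref{eq:inequality-thm1}); the dimension bookkeeping then matches the definition (\ref{eq:C-r-upperbound-2}) with $t_{\max}=n-t$, since $\dim C_r^\perp=n-\dim C_r\le n-t$.

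The step I expect to be most delicate is the lower bound on $N_0(r)$: because $\dim C_r$ need not be constant across the family, I cannot evaluate the average of $N_0(r)$ exactly and must instead replace $|C_r|$ by its minimum $2^t$ while keeping the inequality pointing the right way. This is legitimate precisely because the surviving coefficient of ${\rm Pr}[x\in C_r^\perp]$, namely $(2^t-1)-(2^{t-1}-1)=2^{t-1}$, is positive, so weakening $|C_r|$ to $2^t$ only weakens the lower bound without flipping its sign; this is also the structural reason why the \emph{minimum} dimension $t$, rather than $t_{\max}$, is the correct quantity to feed into the estimate.
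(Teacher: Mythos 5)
Your proposal is correct and follows essentially the same route as the paper's own proof: your counting quantity $N_0(r)$ is exactly the paper's $\#\left(C_r \cap V_x \setminus \{0\}\right)$, your dichotomy (the functional $y\mapsto x\cdot y$ vanishes on $C_r$ versus is surjective) matches the paper's cases (i) and (ii), and the double-counting upper bound over the hyperplane plus the averaged lower bound yields the identical linear inequality $2^{t-n}(2^{n-1}-1)\varepsilon \ge 2^{t-1}+p_x 2^{t-1}-1$. Your closing remark on why the \emph{minimum} dimension is the right quantity is a sound reading of the same point implicit in the paper's argument.
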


\begin{proof}
For $x,y\in \FF_2^n$, let
\begin{eqnarray}
p_x&:=&{\rm Pr}\left[x\in C_r^\perp\right],\\
V_x&:=&\left\{y \in \FF_2^n| (x,y)=0 \right\}=\{x,0\}^\perp,
\end{eqnarray}
where $(x,y)$ denotes the inner product of $x,y$.
Since $\#(V_x\setminus \{0\})=2^{n-1}-1$, 
\begin{align}
& 2^{t-n}\varepsilon (2^{n-1}-1)
=\sum_{y \in V_x\setminus \{0\}} 
2^{t-n}\varepsilon \nonumber \\
\ge &  
\sum_{y \in V_x\setminus \{0\}} {\rm Pr}\left[y\in C_r\right] .
\Label{Ha46}
\end{align}
Now, (i) If $x \in C_r^\perp$,
it means that $C_r \subset V_x$, and we have $\dim (C_r\cap V_x)=\dim C_r\ge t$.
Hence it follows that
$\# (C_r \cap V_x \setminus \{0\})=\# (C_r \setminus \{0\})\ge 2^t-1$.
On the other hand, (ii) If $x \notin C_r^\perp$,
we have $\dim (C_r\cap V_x)\ge t-1$, and thus 
$\# (C_r \cap V_x \setminus \{0\}) \ge 2^{t-1}-1$.
Because
$\sum_{y \in V_x\setminus \{0\}} {\rm Pr}\left[y\in C_r\right]$
is equal to the average of the number of 
$\# (C_r \cap V_x \setminus \{0\})$,
relations (i) and (ii) yields
\begin{align}
\sum_{y \in V_x\setminus \{0\}} {\rm Pr}\left[y\in C_r\right] 
\ge & p_x(2^t-1)+ (1-p_x)(2^{t-1}-1) \nonumber \\
 =  & 2^{t-1}+ p_x 2^{t-1}-1\Label{Ha45}.
\end{align}
Combining (\ref{Ha46}) and (\ref{Ha45}),
we have $2^{t-n}(2^{n-1}-1)\varepsilon \ge 2^{t-1}+ p_x 2^{t-1}-1$,
which leads to inequality (\ref{eq:inequality-thm1}).
\end{proof}

\begin{Thm}
\Label{thm:equality}
Inequality (\ref{eq:inequality-thm1}) of Theorem \ref{thm:almost-universal} is tight.
That is, 
for an integer $t \le n$,
an element $x\in \FF_2^n \setminus \{0\}$, and 
a positive real number $\varepsilon\le \frac{2-2^{1-t}}{1-2^{1-n}}$,
there exists an $\varepsilon$-almost universal$_2$ code family ${\cal C}$ with minimum dimension $t$ 
satisfying the equality of (\ref{eq:inequality-thm1}).
\end{Thm}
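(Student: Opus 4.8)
The plan is to reverse-engineer the proof of Theorem~\ref{thm:almost-universal} and build a family saturating both inequalities used there. Equality in (\ref{eq:inequality-thm1}) for a fixed nonzero $x$ is equivalent to equality in both (\ref{Ha46}) and (\ref{Ha45}). Equality in (\ref{Ha46}) forces ${\rm Pr}[y\in C_r]=2^{t-n}\varepsilon$ for \emph{every} nonzero $y\in V_x=\{x\}^\perp$, i.e. the membership probability must be constant on the whole hyperplane $V_x$ and saturate the universality bound there. Equality in (\ref{Ha45}) forces every $C_r$ to have dimension exactly $t$. The latter is in fact automatic once all codes are $t$-dimensional: such a $C_r$ either lies in $V_x$ (then $x\in C_r^\perp$ and $\#(C_r\setminus\{0\})=2^t-1$) or meets $V_x$ in a $(t-1)$-dimensional subspace (then $x\notin C_r^\perp$ and $\#(C_r\cap V_x\setminus\{0\})=2^{t-1}-1$), which are exactly cases (i) and (ii) of that proof. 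So the task reduces to constructing a family of exactly $t$-dimensional codes whose membership probability is the constant $2^{t-n}\varepsilon$ on $V_x\setminus\{0\}$ and does not exceed it elsewhere.

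First I would record two extremal ensembles, each having constant membership probability on $V_x\setminus\{0\}$ by symmetry. Let ${\cal U}_{\rm all}$ be the uniform distribution over all $t$-dimensional subspaces of $\FF_2^n$; transitivity of $GL_n(\FF_2)$ on nonzero vectors gives ${\rm Pr}[z\in C_r]=(2^t-1)/(2^n-1)$ for every nonzero $z$, so ${\cal U}_{\rm all}$ is optimally universal$_2$ with parameter $\varepsilon_{\min}:=(2^n-2^{n-t})/(2^n-1)$. Let ${\cal U}_{V_x}$ be the uniform distribution over all $t$-dimensional subspaces of the $(n-1)$-dimensional space $V_x$; here ${\rm Pr}[y\in C_r]=(2^t-1)/(2^{n-1}-1)$ for nonzero $y\in V_x$ and $0$ for $z\notin V_x$, corresponding to $\varepsilon_{\max}:=(2^n-2^{n-t})/(2^{n-1}-1)=\frac{2-2^{1-t}}{1-2^{1-n}}$, the upper endpoint appearing in the statement. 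These two probabilities follow from short computations with the Gaussian binomial coefficients $\binom{n-1}{t-1}_2/\binom{n}{t}_2$ and $\binom{n-2}{t-1}_2/\binom{n-1}{t}_2$.

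Next I would take the convex combination: draw $C_r$ from ${\cal U}_{V_x}$ with probability $\lambda$ and from ${\cal U}_{\rm all}$ with probability $1-\lambda$. For nonzero $y\in V_x$ this yields the constant value ${\rm Pr}[y\in C_r]=(1-\lambda)\frac{2^t-1}{2^n-1}+\lambda\frac{2^t-1}{2^{n-1}-1}$, while for $z\notin V_x$ only the ${\cal U}_{\rm all}$ part contributes, giving the strictly smaller $(1-\lambda)\frac{2^t-1}{2^n-1}$. Hence the maximal membership probability over all nonzero vectors is attained precisely on $V_x$, so the family is $\varepsilon$-almost universal$_2$ of minimum dimension $t$ with $2^{t-n}\varepsilon$ equal to the displayed constant. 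This constant is continuous and strictly increasing in $\lambda$, running bijectively over $[\varepsilon_{\min},\varepsilon_{\max}]$ as $\lambda$ ranges over $[0,1]$; solving for $\lambda$ realizes any prescribed admissible $\varepsilon$ (note $\varepsilon\ge\varepsilon_{\min}$ is forced by (\ref{eq:epsilon-lower-bound})). By construction the conditions above hold, so (\ref{Ha46}) and (\ref{Ha45}) are simultaneously tight and (\ref{eq:inequality-thm1}) holds with equality at this $x$.

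The main obstacle I anticipate is not the algebra but realizing this convex combination as a bona fide code family selected uniformly from an index set: a finite uniformly-indexed family produces only rational membership probabilities, so to hit a given real $\varepsilon$ one lists each subspace of the two ensembles with appropriate integer multiplicities implementing the weight $\lambda$ (exact for rational $\lambda$, and enough to establish tightness of the bound). A secondary check is that $V_x$ contains $t$-dimensional subspaces at all, i.e. $t\le n-1$; the degenerate case $t=n$, where the only code is $\FF_2^n$, is trivial and can be verified directly.
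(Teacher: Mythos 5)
Your proposal is correct, and it shares the paper's broad strategy --- a two-component mixture whose first component is exactly the paper's family ${\cal A}$, the uniform ensemble of $t$-dimensional subspaces of the hyperplane $V_x$ --- but both the second component and the verification are genuinely different. The paper's second ensemble ${\cal B}$ is engineered so that \emph{none} of its codes lies inside $V_x$: each $B_r$ is a $(t-1)$-dimensional subspace of $V_x$ extended by a vector $z\notin V_x$, giving membership probability $\frac{2^{t-1}-1}{2^{n-1}-1}$ on $V_x\setminus\{0\}$ and $2^{t-n}$ off $V_x$. The payoff is that $x\in C_r^\perp$ occurs exactly when ${\cal A}$ is selected, so $\Pr[x\in C_r^\perp]$ \emph{is} the mixing weight $p$, which the paper simply sets equal to the right-hand side of (\ref{eq:inequality-thm1}); the only remaining work is checking $\varepsilon$-almost universality from the three probability formulas. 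Your ${\cal U}_{\rm all}$ is the more canonical ensemble, but its codes can lie inside $V_x$, so $\Pr[x\in C_r^\perp]=\lambda+(1-\lambda)q$ with $q>0$ is awkward to compute directly --- which is precisely why you route the verification through the equality conditions of (\ref{Ha46}) and (\ref{Ha45}) instead. That detour is sound and is the real added value of your argument: you show that tightness of (\ref{eq:inequality-thm1}) is \emph{equivalent} to (a) all codes having dimension exactly $t$ and (b) the membership probability being constant at the cap $2^{t-n}\varepsilon$ on $V_x\setminus\{0\}$, which characterizes every tight example (the paper's included) and makes correctness of any construction meeting (a) and (b) automatic. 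The paper's choice buys a shorter computation; yours buys a structural explanation of why the bound is tight. Two peripheral remarks: your rational-multiplicity realization of $\lambda$ is the same device the paper uses for its mixing probability $p$, so there is no gap there relative to the paper; and your flagged edge case $t=n$ is actually one the paper silently skips (both ${\cal A}$ and ${\cal U}_{V_x}$ are empty there), though "trivial" should be read with the caveat that for $t=n$ the unique family $\{\FF_2^n\}$ gives $\Pr[x\in C_r^\perp]=0$, so equality in (\ref{eq:inequality-thm1}) holds only at $\varepsilon=1$.
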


In the above theorem,
the real number 
$\varepsilon =\frac{2-2^{1-t}}{1-2^{1-n}}$
is the maximum number satisfying 
$(1-2^{t-n}\varepsilon)2^{-t+1}+\varepsilon-1 \le 1$.

\begin{proof}
Fix $x\in \FF_2^n$. Then define a code family ${\cal A}=\{A_r\}$ in $\FF_2^n$ as follows.
Choose randomly an $t$-dimensional subspace of $V_x=\{y \in \FF_2^n| (x,y)=0 \}$.
That is, select $t$ linearly independent elements from $V_x$ randomly, and let them span a subspace $A_r$. Then one has:
\begin{equation}
y\in V_x\setminus\{0\},\ \ \Pr\left[y\in A_r\right]=\frac{2^t-1}{2^{n-1}-1}.
\Label{eq:a-ineq}
\end{equation}

We also define another code family ${\cal B}=\{B_r\}$ as follows.
First choose a $t-1$-dimensional subspace of $V_x$ randomly, and then include an additional basis element $z\not\in V_x$ to it, so that they form an $t$-dimensional subspace in total. Then the following inequalities hold:
\begin{eqnarray}
y\in V_x\setminus\{0\},&\ &\Pr\left[y\in B_r\right]=\frac{2^{t-1}-1}{2^{n-1}-1},\Label{eq:b-ineq1}\\
y\not\in V_x,&\ &\Pr\left[y\in B_r\right]=2^{t-n}.\Label{eq:b-ineq2}
\end{eqnarray}

Finally, define a code family ${\cal C}=\{C_r\}$ by combining ${\cal A}$ with probability $p$, and ${\cal B}$ with probability $1-p$, where $p$ is defined by
\begin{equation}
p:=\left(1-2^{t-n}\varepsilon\right)2^{-t+1}+\varepsilon-1.
\end{equation}
One may wonder that this construction using probability $p$ deviates from our definition of universal$_2$ code family that each element $C_r$ is chosen with the uniform probability.
One way to cure this problem is to include multiple copies of ${\cal A}$ and ${\cal B}$ in  ${\cal C}$.
For example, if $p=a/b$ with $a,b\in\NN$, then construct ${\cal C}$ as a combination of $a$ copies of ${\cal A}$ and $b-a$ copies of ${\cal B}$.

\par From (\ref{eq:a-ineq}), (\ref{eq:b-ineq1}), and (\ref{eq:b-ineq2}), it is straightforward to see that ${\cal C}$ is $\varepsilon$-almost universal$_2$.
Also note, since $x\in C_r^\perp$ holds only when ${\cal A}$ is chosen, we have
\begin{equation}
\Pr\left[x\in C_r^\perp\right] = p.
\end{equation}
Hence, ${\cal C}$ indeed attains the equality of (\ref{eq:inequality-thm1}).
\end{proof}

We give some useful examples of Theorems \ref{thm:almost-universal} and \ref{thm:equality}.
We apply these results to several communication models in later sections.
\begin{Crl}\Label{Hacor}
The following relations hold for a code family ${\cal C}$ and the dual family ${\cal C}^\perp$:
\begin{enumerate}
\item If ${\cal C}$ is optimally universal$_2$, ${\cal C}^\perp$ is also optimally universal$_2$.
In other words, an optimally universal$_2$ family ${\cal C}$ is also optimally dual universal$_2$.
\item If ${\cal C}$ is universal$_2$ (i.e., $1$-almost universal$_2$), ${\cal C}^\perp$ is $2$-almost universal$_2$.
In other words, a universal$_2$ family ${\cal C}$ is also $2$-almost dual universal$_2$.
\item For $\varepsilon>1$, however, an $\varepsilon$-almost universal$_2$ family ${\cal C}$  is not necessarily $\varepsilon'$-almost dual universal$_2$.
That is, there is an example of an $\varepsilon$-almost universal$_2$ family ${\cal C}$ with $\max_x\Pr[x\in C_r^\perp]=1$.
\end{enumerate}
\end{Crl}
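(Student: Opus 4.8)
The plan is to derive parts 1 and 2 as immediate specializations of Theorem \ref{thm:almost-universal}, and to settle part 3 with an explicit counterexample borrowed from the proof of Theorem \ref{thm:equality}. Throughout I write $t$ for the minimum dimension of ${\cal C}$, so that ${\cal C}^\perp$ has maximum dimension $n-t$.

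For part 2, I would simply set $\varepsilon=1$ in the bound (\ref{eq:inequality-thm1}), which yields $\Pr[x\in C_r^\perp]\le (1-2^{t-n})2^{-t+1}$. Since being $\varepsilon'$-almost universal$_2$ of maximum dimension $n-t$ means $\Pr[x\in C_r^\perp]\le 2^{-t}\varepsilon'$, it suffices to check that $2(1-2^{t-n})\le \varepsilon'$; because $2^{t-n}>0$, this holds with $\varepsilon'=2$, so ${\cal C}^\perp$ is $2$-almost universal$_2$.

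For part 1, I would substitute the optimal value $\varepsilon=(2^n-2^{n-t})/(2^n-1)$ into (\ref{eq:inequality-thm1}). The computation is routine but the bookkeeping is the one place to stay careful: first $2^{t-n}\varepsilon$ simplifies to $(2^t-1)/(2^n-1)$, so that $1-2^{t-n}\varepsilon=(2^n-2^t)/(2^n-1)$, and after combining the two terms of (\ref{eq:inequality-thm1}) the bound collapses to
\[
\Pr[x\in C_r^\perp]\le \frac{2^{n-t}-1}{2^n-1}.
\]
I would then recognize the right-hand side as exactly $2^{-t}\varepsilon'_{\mathrm{opt}}$, where $\varepsilon'_{\mathrm{opt}}=(2^n-2^t)/(2^n-1)$ is the optimal universality constant for a family of dimension $n-t$. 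Since no family can beat the optimal constant, attaining it as an upper bound forces ${\cal C}^\perp$ to be optimally universal$_2$, which is the claim.

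For part 3, I would reuse the family ${\cal A}=\{A_r\}$ from the proof of Theorem \ref{thm:equality}: fix $x\neq 0$ and let each $A_r$ be a random $t$-dimensional subspace of the hyperplane $V_x=\{y\mid (x,y)=0\}$. By (\ref{eq:a-ineq}), $\max_{y\neq 0}\Pr[y\in A_r]=(2^t-1)/(2^{n-1}-1)$, so ${\cal A}$ is $\varepsilon$-almost universal$_2$ with $\varepsilon=(2^n-2^{n-t})/(2^{n-1}-1)$, which exceeds $1$. On the other hand, every $A_r\subset V_x=\{x,0\}^\perp$ forces $x\in A_r^\perp$ for all $r$, whence $\Pr[x\in A_r^\perp]=1$ and dual universality fails completely. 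The only genuinely delicate step is the algebraic collapse in part 1: one must keep the minimum-dimension convention for ${\cal C}$ and the maximum-dimension convention for ${\cal C}^\perp$ straight, and verify that the dual's optimal constant really corresponds to dimension $n-t$, so that the two closed forms coincide term by term.
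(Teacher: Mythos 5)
Your proof is correct and follows essentially the same route as the paper: items 1 and 2 are obtained by specializing Theorem \ref{thm:almost-universal} (the paper dismisses these as ``obvious,'' and your algebra for item 1, collapsing the bound to $(2^{n-t}-1)/(2^n-1)=2^{-t}\varepsilon'_{\mathrm{opt}}$, just fills in the omitted details), while your counterexample ${\cal A}$ for item 3 is exactly the family the paper invokes via Theorem \ref{thm:equality} with $\varepsilon$ chosen so that the right-hand side of (\ref{eq:inequality-thm1}) equals $1$, since that choice makes $p=1$ and the construction there degenerates to ${\cal A}$ alone.
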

\begin{proof}
Items 1 and 2 are obvious.
For item 3, choose $\varepsilon$ so that the right hand side of (\ref{eq:inequality-thm1}) equals 1.
\end{proof}

\subsection{Case of sujective linear function family}
\label{subsec:surjective_linear_functions}
Some linear function families ${\cal F}=\{f_r:\FF_2^n\to \FF_2^m\,|\,r\in I\}$ consist only of surjective functions $f_r$, i.e., functions $f_r$ satisfying ${\rm Im}\,f_r=\FF_2^m$ for all $r\in I$.
In this case, it is straightforward to show that the dimension of the corresponding code family ${\cal C}=\{C_r\,|\,r\in I\}$ is constant: ${\rm dim}\,C_r= {\rm dim\ Ker}\,f_r=n-m$.

The goal of this subsection is to demonstrate that, for these particular families, the definitions and the theorems of the previous section concerning dual universal$_2$ functions can be greatly simplified.
We take this particular case, because we believe that it provides an intuitive picture on results of the previous subsections;
e.g., the dual universality can be discussed directly without mentioning the corresponding code family ${\cal C}$.
However, at the same time, it should also be noted that there are many useful examples of {\it non-surjective} hash function families including Toeplitz matrices of Example 1.
Hence in the rest of paper, we do {\it not} restrict ourselves to surjective function family;
instead we consider general linear hash functions as defined in the previous subsection.

We begin by defining duality of surjective function families:
\begin{Dfn}
\label{def:dual_functions}
Given two surjective linear functions $f:\FF_2^n\to \FF_2^m$ and $g:\FF_2^n\to \FF_2^{n-m}$, we say that $f$ and $g$ are dual functions if ${\rm Ker}\, f= ({\rm Ker}\, g)^\perp$, or equivalently, if ${\rm Ker}\, g= ({\rm Ker}\, f)^\perp$.
\end{Dfn}

We note that a similar definition can be found in Ref. \cite{Renes10}.
It is straightforward to generalize this notion to function families:
\begin{Dfn}
Given two function families consisting only of surjective functions and having the same index $r\in I$,
\begin{eqnarray*}
{\cal F}&=&\{f_r:\FF_2^n\to \FF_2^{m}\,|\,r\in I\},\\
{\cal G}&=&\{g_r:\FF_2^n\to \FF_2^{n-m}\,|\,r\in I\},
\end{eqnarray*}
we say that ${\cal F}$ and ${\cal G}$ are dual families, if $f_r$ and $g_r$ are dual functions for all $r\in I$.
\end{Dfn}

Recall from Definition \ref{Dfn:def_dual_function_family} that a function family ${\cal F}$ is $\varepsilon$-almost dual universal$_2$ iff the corresponding code family ${\cal C}=\{C_r^\perp\,|\,r\in I\}=\{({\rm Ker}\,f_r)^\perp\,|\,r\in I\}$ is $\varepsilon$-almost universal$_2$.
For a dual pair of surjective families ${\cal F}$ and ${\cal G}$, this is equivalent to the condition that ${\cal C}^\perp=\{{\rm Ker}\,g_r|r\in I\}$ is $\varepsilon$-almost universal$_2$.
Then by noting the definition of universality$_2$ given in (\ref{eq:universality_stated_with_code_family}), we can redefine the universality of surjective families in a simpler way:
\begin{Dfn}
A surjective function family ${\cal F}$ is $\varepsilon$-almost universal$_2$, iff its dual function family ${\cal G}$ is $\varepsilon$-almost  universal$_2$.
\end{Dfn}

Theorem \ref{thm:almost-universal} can also be simplified as:
\begin{Crl}
If a surjective function family ${\cal F}=\{f_r:\FF_2^n\to \FF_2^{m}\,|\,r\in I\}$ is $\varepsilon$-almost universal$_2$, then its dual function family ${\cal G}=\{g_r:\FF_2^n\to \FF_2^{n-m}\,|\,r\in I\}$ is $2(1-2^{-m}\varepsilon)+(\varepsilon-1)2^{n-m}$-almost  universal$_2$
\end{Crl}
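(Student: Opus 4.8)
The plan is to derive this corollary as a direct specialization of Theorem \ref{thm:almost-universal}, exploiting the fact that surjectivity forces the relevant code dimensions to be constant, so that the distinction between minimum and maximum dimension collapses. First I would translate the hypothesis into the language of code families. Since each $f_r$ is surjective, $\dim C_r=\dim \Ker f_r=n-m$ for every $r$, so the code family ${\cal C}=\{C_r\,|\,r\in I\}$ has $t_{\min}=t_{\max}=n-m$. The statement that ${\cal F}$ is $\varepsilon$-almost universal$_2$ is, by (\ref{eq:universality_stated_with_code_family}), exactly the statement that ${\cal C}$ is an $\varepsilon$-almost universal$_2$ code family of minimum dimension $t=n-m$.

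Next I would invoke Theorem \ref{thm:almost-universal} with $t=n-m$. It immediately yields that the dual code family ${\cal C}^\perp=\{C_r^\perp\,|\,r\in I\}$ is $\varepsilon'$-almost universal$_2$ of maximum dimension $n-t=m$, where the constant is obtained by substituting $2^{t-n}=2^{-m}$ and $2^t=2^{n-m}$ into $2(1-2^{t-n}\varepsilon)+(\varepsilon-1)2^t$, giving precisely $\varepsilon'=2(1-2^{-m}\varepsilon)+(\varepsilon-1)2^{n-m}$.

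Finally I would translate this conclusion back into the function-family language. Because $g_r$ is dual to $f_r$, its kernel is $\Ker g_r=(\Ker f_r)^\perp=C_r^\perp$, which again has constant dimension $m$. Hence $\{\Ker g_r\}={\cal C}^\perp$ has $t_{\min}=t_{\max}=m$, so the maximum-dimension bound (\ref{eq:C-r-upperbound-2}) coincides with the minimum-dimension bound (\ref{eq:C-r-upperbound-1}), both reading $\Pr\left[x\in C_r^\perp\right]\le 2^{m-n}\varepsilon'$. By (\ref{eq:universality_stated_with_code_family}) applied to ${\cal G}$, this is exactly the assertion that ${\cal G}$ is $\varepsilon'$-almost universal$_2$, which completes the argument.

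As for the main obstacle: there is essentially no analytic difficulty here, since everything reduces to Theorem \ref{thm:almost-universal}. The only point requiring care is the bookkeeping of the dimension labels, namely verifying that the constancy of $\dim C_r$ collapses the minimum- and maximum-dimension versions of the definitions onto one another, so that the hypothesis and conclusion of Theorem \ref{thm:almost-universal} line up with the function-family notion of universality without an off-by-one shift in the exponents $2^{-m}$ and $2^{n-m}$.
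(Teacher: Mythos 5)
Your proposal is correct and follows exactly the route the paper intends: the paper states this corollary as an immediate specialization of Theorem \ref{thm:almost-universal} (giving no separate proof), and your argument simply spells out the required bookkeeping — surjectivity forces $\dim C_r = n-m$ constant, Theorem \ref{thm:almost-universal} with $t=n-m$ gives the constant $2(1-2^{-m}\varepsilon)+(\varepsilon-1)2^{n-m}$ for ${\cal C}^\perp$ of maximum dimension $m$, and the identification $\Ker g_r = (\Ker f_r)^\perp$ translates this back to ${\cal G}$. The substitutions and the collapse of the minimum/maximum-dimension distinction are all verified correctly.
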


It is convenient to consider these statements in terms of matrices.
Take an arbitrary pair of surjective linear functions, $f:\FF_2^n\to\FF_2^m$ and $g_r:\FF_2^n\to\FF_2^{n-m}$.
Then $f$ can be written as a matrix multiplication $y=xM$, with input $x$ and output $y$, and with $M$ being an $m\times n$ matrix.
Similarly, $g$ can also be expressed as $y=xN$ with an $(n-m)\times n$ matrix $N$.
Since the row vectors of $M$, $N$ form a basis of  $({\rm Ker}\,f)^\perp$, $({\rm Ker}\,g)^\perp$, respectively, we conclude that $f$ and $g$ are dual functions iff $MN^T=0$.

Hence, a straightforward way of constructing a pair ${\cal F}$, ${\cal G}$ of dual family is as follows:
First choose a code family ${\cal C}=\{C_r^\perp\,|\,r\in I\}$ of a fixed dimension.
Then define functions $f_r$ by $y=xG(C_r)$ with $G(C_r)$ being the generating matrix of $C_r$,
and $g_r$ by $y=xH(C_r)$ with $H(C_r)$ being the parity check matrix.
In this case, if ${\cal F}$ is $\varepsilon$-universal$_2$, then one can guarantee that ${\cal G}$ is $\varepsilon'$-universal$_2$, with $\varepsilon$ and $\varepsilon'$ related as in Theorem \ref{thm:almost-universal}.

One useful example that fits this construction is the family of all modified Toeplitz matrices, given as Example 2.
In this case, the presence of the identity matrix $I_{m}$ maximizes rank $M_r$ and guarantees the surjectivity of the corresponding linear function.
It is easy to see that the dual families are defined by $N_r=(I_{n-m},T_r^T)$, which is another class of modified Toeplitz matrices (note $M_rN_r^T=0$).

Still, it should also be noted that there are many useful examples of {\it non-surjective} hash function family.
For example, for the normal Toeplitz matrices of Example 1, the rank of $T_r$ ranges from zero to $m$ depending on $r$ (consider the case where its rows are periodic).
Hence in the rest of this paper, we do not restrict ourselves to surjective function family;
instead we consider general linear hash functions as defined in the previous subsection.

\subsection{Generalization to subcode, extended code, and code pair families}
\Label{sec:generalize-subcode}
For the application to quantum key distribution, it is convenient to generalize the concept of a universal$_2$ code family to those ${\cal C}=\{C_{2,r}\}$ consisting solely of extended codes of $C_1$.
\begin{Dfn}
Let $C_1\subset \FF_2^n$ be a fixed $m$-dimensional code.
A code family ${\cal C}_2=\{C_{2,r}\,|\,r\in I\}$ is called an extended code family of $C_1$,
if each $C_{2,r}$ is an extended code of $C_1$, i.e., $\forall r\in I,$ $C_1\subset C_{2,r}$.
An extended code family ${\cal C}$ of $C_1$
is called an $\varepsilon$-almost universal$_2$ extended code family of $C_1$
with minimum (or maximum) dimension $t$, 
if 
\[\forall x\in \FF_2^n \setminus C_1,
\ \Pr\left[x\in C_{2,r}\right]
=
\Pr\left[ [x] \subset C_{2,r} \right]
\le2^{t-n}\varepsilon,\]
where $[x]$ denotes the coset with the representative $x$ in 
$\FF_2^n / C_1$.
\end{Dfn}
By considering a universality of a dual code family of such extended code family, we are naturally led to the following definition of universal$_2$ subcode families.
\begin{Dfn}
Let $C_1\subset \FF_2^n$ be a fixed $m$-dimensional code.
A code family ${\cal C}_2=\{C_{2,r}\,|\,r\in I\}$ is called a
subcode family of $C_1$,
if each $C_{2,r}$ is a subcode of $C_1$, i.e., $\forall r\in I,$ 
$ C_{2,r}\subset C_1$.
A subcode family ${\cal C}_2$ 
of $C_1$
is called an $\varepsilon$-almost universal$_2$ subcode family of $C_1$ 
with minimum (or maximum) dimension $t$, if 
\[\forall x\in C_1\setminus\{0\},\ 
\Pr\left[x\in C_{2,r}\right]
\le 2^{t-m}\varepsilon.\]
\end{Dfn}

\begin{Dfn}
Let $C_1\subset \FF_2^n$ be a fixed $m$-dimensional code.
A code family ${\cal C}_2=\{C_{2,r}\,|\,r\in I\}$ is called a
subcode family of $C_1$,
if each $C_{2,r}$ is a subcode of $C_1$, i.e., $\forall r\in I,$ 
$ C_{2,r}\subset C_1$.
A subcode family ${\cal C}_2$ 
of $C_1$
is called an $\varepsilon$-almost dual universal$_2$ subcode family of $C_1$ 
with minimum (or maximum) dimension $t$, if 
the extended code family ${\cal C}^{\perp}$ of $C_1^{\perp}$
is an $\varepsilon$-almost universal$_2$ extended code family of $C_1^{\perp}$
with maximum (or minimum) dimension $n-t$.
Similarly,
an extended code family ${\cal C}$ of $C_1$
is called an $\varepsilon$-almost dual universal$_2$ extended code family of $C_1$
with minimum (or maximum) dimension $t$, 
if
a subcode family ${\cal C}_2^{\perp}$ 
of $C_1^{\perp}$
is called an $\varepsilon$-almost universal$_2$ subcode family of $C_1$ 
with maximum (or minimum) dimension $n-t$.
\end{Dfn}

One explicit construction of ${\cal C}_2$ is to first let ${\cal D}=\{D_r\in\FF_2^m|r\in I\}$ be a universal$_2$ code family with minimum dimension $t$, and then define generating matrix of $C_{2,r}\in{\cal C}_2$ by $G(C_{2,r}):=G(D_r)G(C_1)$.
For these types of codes as well, we can prove a theorem similar to Theorems
\ref{thm:almost-universal} and \ref{thm:equality}.
\begin{Thm}
\Label{thm:subcode-dual}
Let $C_1\subset \FF_2^n$ be a fixed $m$-dimensional code, 
and ${\cal C}_2$ be an $\varepsilon$-almost universal$_2$ subcode 
family ${\cal C}_2$ of $C_1$ with minimum dimension $t \le m$.
Then the dual code family ${\cal C}_2^\perp$ is a $2(1-2^{t-m}\varepsilon)+(\varepsilon-1)2^t$-almost universal$_2$ extended code (subcode) family of $C_1^\perp$ with maximum dimension $n-t$.
That is,
\begin{align}
\forall x \in \FF_2 \setminus C_1^{\perp},\ 
\Pr\left[x\in C_{2,r}^{\perp}\right]
\le 
(1-2^{t-m}\varepsilon)2^{-t+1}+\varepsilon-1 .
\Label{Ha30}
\end{align}
In other words, the subcode family ${\cal C}_2$ is also a $2(1-2^{t-m}\varepsilon)+(\varepsilon-1)2^t$
-almost dual universal$_2$ extended code family of $C_1$.

Moreover,
for an integer $t \le m$,
an element $x \in \FF_2 \setminus C_1^{\perp}$,
and
a positive real number $\varepsilon
\le \frac{2-2^{1-t}}{1-2^{1-m}}$,
there exists an $\varepsilon$-almost universal$_2$ subcode 
family ${\cal C}_2$ of $C_1$ 
with minimum dimension $t$ 
satisfying the equality of (\ref{Ha30}).
\end{Thm}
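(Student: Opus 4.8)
The plan is to adapt the counting argument of Theorem \ref{thm:almost-universal} essentially verbatim, with the ambient space $\FF_2^n$ replaced throughout by the fixed $m$-dimensional code $C_1$. The key observation is that every $C_{2,r}$ lies inside $C_1$, so only vectors of $C_1$ can contribute to the collision counts, and the universality hypothesis on ${\cal C}_2$ is stated precisely for $y\in C_1\setminus\{0\}$. Equivalently, one may fix a basis of $C_1$ to obtain an isomorphism $C_1\cong\FF_2^m$ together with the nondegenerate pairing between $C_1$ and the quotient $\FF_2^n/C_1^\perp$; under this identification a subcode family of $C_1$ becomes an ordinary code family in $\FF_2^m$ of minimum dimension $t$, the ``internal'' dual is realized inside $\FF_2^n/C_1^\perp$, and Theorem \ref{thm:almost-universal} applies directly with $n$ replaced by $m$. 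I would present the direct counting version, as it most transparently shows why the exponent becomes $m$ rather than $n$.

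Concretely, I would fix $x\in\FF_2^n\setminus C_1^\perp$ and set $p_x:=\Pr[x\in C_{2,r}^\perp]$ and $V_x:=\{y\in\FF_2^n\mid(x,y)=0\}$. Since $x\notin C_1^\perp$, the functional $y\mapsto(x,y)$ is nonzero on $C_1$, so $V_x\cap C_1$ is $(m-1)$-dimensional and $\#(V_x\cap C_1\setminus\{0\})=2^{m-1}-1$. The hypothesis then gives $\sum_{y\in V_x\cap C_1\setminus\{0\}}\Pr[y\in C_{2,r}]\le 2^{t-m}\varepsilon(2^{m-1}-1)$, while the same sum equals the average of $\#(C_{2,r}\cap V_x\setminus\{0\})$ because $C_{2,r}\subset C_1$. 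The case analysis is identical to that of Theorem \ref{thm:almost-universal}: if $x\in C_{2,r}^\perp$ then $C_{2,r}\subset V_x$, so the count is $\ge 2^t-1$; otherwise $\dim(C_{2,r}\cap V_x)\ge t-1$ and the count is $\ge 2^{t-1}-1$. Averaging yields the lower bound $p_x(2^t-1)+(1-p_x)(2^{t-1}-1)$, and combining with the upper bound and solving for $p_x$ reproduces inequality (\ref{Ha30}); the maximum-dimension claim $n-t$ is just $\dim C_{2,r}^\perp=n-\dim C_{2,r}\le n-t$.

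For the tightness (``Moreover'') claim I would mirror the construction in the proof of Theorem \ref{thm:equality}, again carried out inside $C_1$. Fixing the target $x\notin C_1^\perp$, I would build ${\cal A}$ by drawing a random $t$-dimensional subspace of the $(m-1)$-dimensional space $V_x\cap C_1$, and ${\cal B}$ by drawing a random $(t-1)$-dimensional subspace of $V_x\cap C_1$ and adjoining one basis vector $z\in C_1$ with $(x,z)=1$. Mixing ${\cal A}$ with probability $p:=(1-2^{t-m}\varepsilon)2^{-t+1}+\varepsilon-1$ and ${\cal B}$ with probability $1-p$ produces an $\varepsilon$-almost universal$_2$ subcode family of minimum dimension $t$ for which $x\in C_{2,r}^\perp$ occurs exactly when ${\cal A}$ is chosen (since $z\notin V_x$ forces $B_r\not\subset V_x$), so $\Pr[x\in C_{2,r}^\perp]=p$ attains equality. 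The constraint $\varepsilon\le(2-2^{1-t})/(1-2^{1-m})$ is precisely what keeps $p\le1$, matching Theorem \ref{thm:equality} with $n$ replaced by $m$.

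I expect no genuine obstacle, only one point requiring care: correctly tracking where the dual is taken. The code $C_{2,r}^\perp$ is the dual in the full space $\FF_2^n$ (hence an extended code of $C_1^\perp$), whereas the universality of $C_{2,r}$ is an $m$-dimensional property internal to $C_1$. Confining the summation to $C_1\cap V_x$ and invoking $x\notin C_1^\perp$ to pin down its dimension as $m-1$ is exactly what makes the exponent $m$ (rather than $n$) appear throughout, which is the only structural difference between (\ref{Ha30}) and (\ref{eq:inequality-thm1}).
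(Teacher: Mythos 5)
Your proposal is correct and is essentially the paper's own proof: the paper disposes of this theorem in one line by invoking the identification $C_1\cong\FF_2^n/C_1^{\perp}\cong\FF_2^m$ and declaring that the proofs of Theorems \ref{thm:almost-universal} and \ref{thm:equality} apply with $\FF_2^m$ in place of $\FF_2^n$, which is precisely the reduction you describe and then carry out explicitly. Your unpacked version (restricting $V_x$ to $V_x\cap C_1$, using $x\notin C_1^{\perp}$ to get $\dim(V_x\cap C_1)=m-1$, and rebuilding the ${\cal A}$/${\cal B}$ mixture inside $C_1$) supplies the details the paper leaves implicit, including the point about where the dual is taken, but introduces no new idea.
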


\begin{proof} 
For an $\varepsilon$-almost universal$_2$ subcode 
(extended code) family ${\cal C}_2$ of $C_1$,
the equivalence relations 
$C_1 \cong \FF_2^n/C_1^{\perp} \cong \FF_2^{m}$
hold.
The proofs of the above theorems with $\FF_2^m$
can be applied to this theorem.
\end{proof}

\begin{Thm}
\Label{thm:excode-dual}
Let $C_1\subset \FF_2^n$ be a fixed $m$-dimensional code, 
and ${\cal C}_2$ be an $\varepsilon$-almost universal$_2$ extended code
family ${\cal C}_2$ of $C_1$ with minimum dimension $t \ge m$.
Then the dual code family ${\cal C}_2^\perp$ is a $2(1-2^{t-n}\varepsilon)+(\varepsilon-1)2^{t-m}$-almost 
universal$_2$ subcode family of $C_1^\perp$ with maximum dimension $n-t$.
That is,
\begin{align}
\Pr\left[x\in C_{2,r}^{\perp}\right] 
\le 
(1-2^{t-n}\varepsilon)2^{-t+m+1}+\varepsilon-1
\Label{Ha31}
\end{align}
for $
\forall 
x \in C_1^{\perp}\setminus\{0\}$.
In other words, the extended code family ${\cal C}_2$ is also a $2(1-2^{t-n}\varepsilon)+(\varepsilon-1)2^{t-m}$
-almost dual universal$_2$ subcode family of $C_1$.

Furthermore,
for an integer $m \le t \le n$,
an element $x \in C_1^{\perp}\setminus\{0\}$,
and
a positive real number $\varepsilon
\le \frac{2-2^{1-t+m}}{1-2^{1-n+m}}$,
there exists an $\varepsilon$-almost universal$_2$ 
extended code
family ${\cal C}_2$ of $C_1$ 
with minimum dimension $t$ 
satisfying the equality of (\ref{Ha31}).
\end{Thm}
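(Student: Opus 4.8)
The plan is to reduce Theorem \ref{thm:excode-dual} to the already-established Theorem \ref{thm:almost-universal} by passing to the quotient space $\FF_2^n/C_1$, in direct analogy with the proof of Theorem \ref{thm:subcode-dual}. Since every $C_{2,r}$ contains the fixed code $C_1$, the images $\bar C_{2,r}:=C_{2,r}/C_1$ are well-defined subspaces of the $(n-m)$-dimensional quotient $\FF_2^n/C_1\cong\FF_2^{n-m}$, with $\dim\bar C_{2,r}=\dim C_{2,r}-m\ge t-m$. The extended-code universality hypothesis translates verbatim: for $x\in\FF_2^n\setminus C_1$, i.e.\ for nonzero $[x]\in\FF_2^n/C_1$, one has $\Pr[[x]\in\bar C_{2,r}]=\Pr[x\in C_{2,r}]\le 2^{t-n}\varepsilon=2^{(t-m)-(n-m)}\varepsilon$. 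Hence $\{\bar C_{2,r}\}$ is an $\varepsilon$-almost universal$_2$ code family in $\FF_2^{n-m}$ with minimum dimension $t-m$.

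Next I would apply Theorem \ref{thm:almost-universal} with the substitutions $n\mapsto n-m$ and $t\mapsto t-m$. This immediately yields that the dual family of $\{\bar C_{2,r}\}$ in $\FF_2^{n-m}$ is $2(1-2^{t-n}\varepsilon)+(\varepsilon-1)2^{t-m}$-almost universal$_2$ with maximum dimension $(n-m)-(t-m)=n-t$, obeying the bound $(1-2^{t-n}\varepsilon)2^{-(t-m)+1}+\varepsilon-1=(1-2^{t-n}\varepsilon)2^{-t+m+1}+\varepsilon-1$, which is exactly the right-hand side of (\ref{Ha31}).

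The remaining---and most delicate---step is to identify this abstract dual family with $\{C_{2,r}^\perp\}$ sitting inside $C_1^\perp$, so that the bound can be read as a statement about $C_{2,r}^\perp$. For this I would use the non-degenerate pairing between the quotient and the orthogonal complement, $\langle[x],y\rangle:=(x,y)$ for $[x]\in\FF_2^n/C_1$ and $y\in C_1^\perp$; this is well-defined precisely because $y\perp C_1$, and it furnishes the canonical isomorphism $(\FF_2^n/C_1)^*\cong C_1^\perp\cong\FF_2^{n-m}$. Under this pairing the annihilator of $\bar C_{2,r}$ equals $\{y\in C_1^\perp:(x,y)=0\ \forall x\in C_{2,r}\}=C_1^\perp\cap C_{2,r}^\perp$, which coincides with $C_{2,r}^\perp$ because $C_1\subset C_{2,r}$ forces $C_{2,r}^\perp\subset C_1^\perp$. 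Thus nonzero vectors of the dual space correspond exactly to $x\in C_1^\perp\setminus\{0\}$, and the inequality of the previous paragraph becomes (\ref{Ha31}) verbatim. The main obstacle is simply keeping the two dualities straight---the self-pairing on $\FF_2^{n-m}$ used inside Theorem \ref{thm:almost-universal} versus the quotient/complement pairing above---and checking that the annihilator really is $C_{2,r}^\perp$ rather than a proper extension of it.

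Finally, for the tightness claim I would push the explicit construction of Theorem \ref{thm:equality} through the same reduction: applying that construction in $\FF_2^{n-m}$ with parameter $t-m$ produces a family $\{\bar C_{2,r}\}$ attaining equality, and pulling back along the quotient map gives the required extended code family of $C_1$ meeting (\ref{Ha31}) with equality. The admissible range $\varepsilon\le\frac{2-2^{1-t+m}}{1-2^{1-n+m}}$ is obtained directly from the range $\varepsilon\le\frac{2-2^{1-(t-m)}}{1-2^{1-(n-m)}}$ of Theorem \ref{thm:equality} under the same substitutions.
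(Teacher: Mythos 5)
Your proposal is correct and follows essentially the same route as the paper's own proof, which reduces the statement to Theorems \ref{thm:almost-universal} and \ref{thm:equality} via the equivalences $\FF_2^n/C_1 \cong C_1^{\perp} \cong \FF_2^{n-m}$, regarding $C_{2,r}/C_1$ as subspaces of $\FF_2^{n-m}$ with minimum dimension $t-m$. The paper states this reduction in only a few lines, whereas you carefully spell out the details it leaves implicit---in particular the verification that, under the quotient/complement pairing, the annihilator of $C_{2,r}/C_1$ really is $C_{2,r}^\perp$---so your write-up is a fleshed-out version of the same argument.
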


\begin{proof}
Similarly,
for an $\varepsilon$-almost universal$_2$ 
extended code family ${\cal C}_2$ of $C_1$,
the equivalence relations 
$\FF_2^n/C_1 \cong C_1^{\perp} \cong \FF_2^{n-m}$
hold.
Under this equivalence, 
$C_{2,r}/C_1$ can be regarded as subspace of $\FF_2^{n-m}$
with the minimum dimension $t-m$.
The proofs of the above theorems with $\FF_2^{n-m}$
and the minimum dimension $t-m$
can be applied to this theorem.
\end{proof}

Furthermore, when the code $C_1$ is randomly chosen,
the concept of 
an extended code family $\{C_{2,r}\}_r$ can be 
generalized to the following way.
In this case, 
we define the property ``$\varepsilon$-almost universal$_2$''
for 
a family of a pair of codes $\{ C_{1,r}\subset C_{2,r} \}_r$.
\begin{Dfn}
A family of a pair of codes $\{ C_{1,r}\subset C_{2,r} \}_r$
is called an {\it $\varepsilon$-almost universal$_2$ code pair family}
with minimum (or maximum) dimension $t$
when it 
satisfies the condition 
\begin{align*}
t = \min_{r} \dim C_{2,r} (\max_{r} \dim C_{2,r})\\
\forall x\in \FF_2^n \setminus \{0\} ,
\ \Pr\left[x\in C_{2,r}\setminus C_{1,r} \right]
\le 2^{t-n}\varepsilon.
\end{align*}
\end{Dfn}

Since
any $\varepsilon$-almost universal$_2$ extended code family $\{C_{2,r}\}_r$ of the code $C_1$
gives 
an 
$\varepsilon$-almost universal$_2$ code pair family
$\{ C_{1}\subset C_{2,r} \}_r$,
the concept ``$\varepsilon$-almost universal$_2$ code pair family''
is generalization of
``$\varepsilon$-almost universal$_2$ extended code family''.

Considering the dual codes, we obtain the following definition.
\begin{Dfn}
a family of a pair of codes $\{ C_{1,r}\subset C_{2,r} \}_r$
is called an {\it $\varepsilon$-almost dual universal$_2$ pair family}
with maximum (or minimum) dimension $t$
if
a family of a pair of codes $\{ C_{2,r}^{\perp}\subset C_{1,r}^{\perp} \}_r$
is an $\varepsilon$-almost universal$_2$ code pair family
with minimum (or maximum) dimension $n-t$.
\end{Dfn}
Since
any $\varepsilon$-almost dual universal$_2$ subcode family $\{C_{2,r}\}_r$ of the code $C_1$
gives 
an 
$\varepsilon$-almost dual universal$_2$ code pair family
$\{ C_{2,r}\subset C_{1} \}_r$,
the concept ``$\varepsilon$-almost dual universal$_2$ code pair family''
is generalization of
``$\varepsilon$-almost dual universal$_2$ subcode family''.

\section{The $\delta$-biased family}\Label{delta-biased}
Next, according to Dodis and Smith\cite{DS05},
we introduce $\delta$-biased family of random variables $\{W_{r}\}$.
For a given $\delta>0$,
a family of random variables $\{W_{r}\}$ on $\FF_2^n$
is called {\it $\delta$-biased}
when the inequality
\begin{align}
\rE_{r} (\rE_{W_{r}} (-1)^{x\cdot W_{r}})^2 \le \delta^2
\end{align}
holds for any $x\in \FF_2^n$, $x\ne 0$.

We denote the random variable subject to the uniform distribution on a code $C\in \FF_2^n$ by $W_C$.
Then,
\begin{align}
\rE_{W_C} (-1)^{x\cdot W_{C}}
=
\left\{
\begin{array}{ll}
0 & \hbox{ if } x \notin C^{\perp} \\
1 & \hbox{ if } x \in C^{\perp} .
\end{array}
\right.
\end{align}
Using this relation, we obtain the following lemma. 
\begin{Lmm}\Label{Lem6-0}
When a code family ${\cal C}=\{C_{r}\subset \FF_2^n\}_r$ with minimum dimension $n-m$ is $\varepsilon$-almost dual universal,
the family of random variables $\{W_{C_{r}}\}$ on $\FF_2^n$
is $\sqrt{\varepsilon 2^{-m}}$-biased.
\end{Lmm}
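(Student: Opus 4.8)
The plan is to reduce the $\delta$-biased condition to the dual-universality bound of the previous section by means of the character-sum relation displayed just above the lemma. First I would fix a nonzero $x\in\FF_2^n$ and a single index $r$, and note that the displayed relation for $\rE_{W_{C_r}}(-1)^{x\cdot W_{C_r}}$ says this quantity equals $1$ when $x\in C_r^\perp$ and $0$ otherwise; in particular it is $\{0,1\}$-valued, hence unchanged under squaring, so
\[
\bigl(\rE_{W_{C_r}}(-1)^{x\cdot W_{C_r}}\bigr)^2 = \rE_{W_{C_r}}(-1)^{x\cdot W_{C_r}}
\]
is exactly the indicator of the event $x\in C_r^\perp$.

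Next I would average this identity over the index $r$. Since the expectation of an indicator is the probability of the corresponding event, this yields the exact equality
\[
\rE_r\bigl(\rE_{W_{C_r}}(-1)^{x\cdot W_{C_r}}\bigr)^2 = \Pr\left[x\in C_r^\perp\right],
\]
valid for every $x\ne 0$. This is the crucial reduction: the left-hand side is precisely the quantity appearing in the definition of a $\delta$-biased family, so it remains only to bound $\Pr[x\in C_r^\perp]$.

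Finally I would invoke the hypothesis. By definition, ${\cal C}$ being $\varepsilon$-almost dual universal$_2$ means that its dual family ${\cal C}^\perp=\{C_r^\perp\}$ is $\varepsilon$-almost universal$_2$; unfolding the universal$_2$ bound (\ref{eq:C-r-upperbound-1}) for ${\cal C}^\perp$ with the dimension induced from the minimum dimension $n-m$ of ${\cal C}$ gives $\Pr[x\in C_r^\perp]\le \varepsilon 2^{-m}$ for all $x\ne0$. Substituting this into the equality above yields $\rE_r\bigl(\rE_{W_{C_r}}(-1)^{x\cdot W_{C_r}}\bigr)^2\le \varepsilon 2^{-m}=(\sqrt{\varepsilon 2^{-m}})^2$, which is exactly the defining inequality of a $\sqrt{\varepsilon 2^{-m}}$-biased family.

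The computation is short, and the only place demanding care—indeed the sole obstacle—is the dimension bookkeeping in the last step: one must track through the duality $\dim C_r^\perp=n-\dim C_r$, together with the accompanying interchange of minimum and maximum dimension, to be certain that the exponent in the universal$_2$ bound for ${\cal C}^\perp$ comes out as $2^{-m}$ (and not, say, $2^{m-n}$), so that $\delta$ is correctly identified as $\sqrt{\varepsilon 2^{-m}}$. Everything else follows from the single algebraic fact that the normalized character sum of the uniform distribution on a linear subspace is the indicator of membership in its dual.
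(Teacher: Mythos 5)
Your reduction is the same as the paper's own: the paper proves this lemma only by the remark ``Using this relation, we obtain the following lemma,'' and the intended argument is exactly your first three steps --- the normalized character sum of $W_{C_r}$ is the indicator of $x\in C_r^\perp$, squaring leaves a $\{0,1\}$-valued quantity unchanged, and averaging over $r$ gives $\rE_r\bigl(\rE_{W_{C_r}}(-1)^{x\cdot W_{C_r}}\bigr)^2=\Pr\left[x\in C_r^\perp\right]$ for each $x\ne 0$. Up to here your argument is correct and complete, and the lemma is reduced to the single inequality $\Pr\left[x\in C_r^\perp\right]\le\varepsilon 2^{-m}$.

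The gap is in the step you yourself flag as the crux, and the recipe you give there produces the wrong exponent. If ``minimum dimension $n-m$'' is read as a statement about the codes $C_r$ themselves, i.e.\ $\min_r\dim C_r=n-m$, then the duality relation $\dim C_r^\perp=n-\dim C_r$ together with the min/max interchange --- exactly the tracking you propose --- says that ${\cal C}^\perp$ has \emph{maximum} dimension $m$, so the universal$_2$ condition (\ref{eq:C-r-upperbound-2}) applied to ${\cal C}^\perp$ yields $\Pr\left[x\in C_r^\perp\right]\le 2^{m-n}\varepsilon$, which is precisely the exponent $2^{m-n}$ you declare must not appear. The claimed bound $\varepsilon 2^{-m}$ follows only if the universal$_2$ condition for ${\cal C}^\perp$ is taken with dimension parameter $n-m$, i.e.\ condition (\ref{eq:C-r-upperbound-2}) with $t_{\max}=n-m$; this is what the paper's definition of dual universality literally prescribes (the number $t$ is handed to ${\cal C}^\perp$ unchanged, only the words ``minimum'' and ``maximum'' are swapped), and it forces the codes $C_r$ themselves to have dimension at least $m$, not $n-m$. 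That this is the intended reading can be checked against the proof of Lemma \ref{Lem6-3-q}, where the same conversion is invoked with $P_W(w)=2^{-m}$ on $C_r$ (hence $\dim C_r=m$) and the resulting bound $\varepsilon 2^{-m}$ in (\ref{12-18-2}); the convention-free form of the conversion is $\delta^2=\varepsilon\, 2^{-\dim C_r}=\varepsilon/|C_r|$, namely the dual family's universal$_2$ bound evaluated at its own dimension $n-\dim C_r$. So your proof must replace the appeal to ``the dimension induced from the minimum dimension $n-m$ of ${\cal C}$'' (and the citation of (\ref{eq:C-r-upperbound-1})) by either this computation or an explicit appeal to the paper's cross-labelled definition; as written, the bookkeeping you describe and the conclusion you draw from it contradict each other.
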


Hence an $\varepsilon$-almost dual universal$_2$ code family
yields a $\delta$-biased family.
For 
a partially eavesdropped random viable $A$ and
a $\delta$-biased family of random variables $\{W_r\}_r$ that is independent from Eve's random variable,
Dodis and Smith \cite{DS05}
proposed the protocol 
\begin{align}
(A,W_r)\mapsto A+W_r \Label{2-2-1}
\end{align}
for error correction with leaking partial information.
In order to evaluate the leaked information of this protocol,
they showed the classical version of the following lemma (Lemma \ref{Lem6-1-q}).
Fehr and Schaffner \cite{FS08} 
extended it to the quantum case
in order to discuss the property of the protocol against a quantum attacker.

In this section, with the help of Lemmas \ref{Lem6-0} and \ref{Lem6-1-q}, 
we evaluate the leaked information 
after the privacy amplification by an $\varepsilon$-almost dual universal$_2$ code family.

Given 
a classical-quantum state $\rho^{A,E}=
\sum_{a}P^A(a)| a \rangle\langle a|\otimes \rho_{a}^E$ 
on ${\cal H}_A \otimes {\cal H}_E$,
and
a normalized state $\sigma^E$ on ${\cal H}_E$,
Renner \cite{Renner} defines
\begin{align}
d_1(A :E| \rho^{A,E} )
&:=
\|\rho^{A,E}-\rho_{\mix}^A\otimes \rho^E \|_1 ,
\end{align}
and
\begin{align*}
& d_2(A :E| \rho^{A,E} \|\sigma^E) \\
:= &
2^{-H_2(A|E|\rho^{A,E}\|\sigma^E)}
-\frac{1}{|{\cal A}|}\Tr ((\sigma^E )^{-1/4} \rho^E (\sigma^E )^{-1/4} )^2
\\
& H_2(A|E|\rho^{A,E}\|\sigma^E) \\
:= & -\log_2 
\Tr 
((I \otimes \sigma^{E})^{-1/4}
\rho^{A,E}
(I \otimes \sigma^{E})^{-1/4})^2 \\
& H_{\min}(A|E|\rho^{A,E}\|\sigma^E) \\
:= & -\log_2 \|  
(I \otimes \sigma^{E})^{-1/2}
\rho^{A,E}
(I \otimes \sigma^{E})^{-1/2} \| .
\end{align*}
As relations among these quantities,
Renner \cite[Lemma 5.2.3]{Renner} shows 
\begin{align}
d_1(A :E| \rho^{A,E} )
\le &
\sqrt{|{\cal A}|}
\sqrt{d_2(A :E| \rho^{A,E} \|\sigma^E)}
\Label{1-3-1} \\
H_2(A|E|\rho\|\sigma)
\ge &
H_{\min}(A|E|\rho\|\sigma) 
\Label{1-3-2} 
\end{align}

For a distribution $P^{W}$ on ${\cal A}$,
we define another 
classical-quantum state 
$\rho^{A,E}* P^{W}:=
\sum_{w}P^{W}(w) \sum_{a}P^A(a)| a+w\rangle\langle a+w|\otimes \rho_{a}^E$, which describes the output state of the protocol 
(\ref{2-2-1}).
Then, the following lemma holds.
\begin{Lmm}[{\cite[Theorem 3.2]{FS08}}]\Label{Lem6-1-q}
For any c-q sub-state $\rho^{A,E}$ on ${\cal H}_A \otimes {\cal H}_E$
and 
any state $\sigma^E$ on ${\cal H}_E$,
a $\delta$-biased
family of random variables $\{W_{r}\}$ on ${\cal A}$
satisfies
\begin{align}
\rE_{r} d_2(A :E| \rho^{A,E}* P^{W_{r}} \|\sigma^E)
\le
\delta^2
2^{-H_{2}(A|E|\rho^{A,E}\|\sigma^E)}.
\end{align}
\end{Lmm}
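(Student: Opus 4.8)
The plan is to reduce the statement to a Parseval identity for the Fourier transform over the additive group $\FF_2^n={\cal A}$. First I would introduce the whitened operator $\bar\rho := (I\otimes\sigma^E)^{-1/4}\rho^{A,E}(I\otimes\sigma^E)^{-1/4}$. Because $\rho^{A,E}$ is classical on $A$, this operator is block-diagonal, $\bar\rho=\sum_a |a\rangle\langle a|\otimes X_a$ with $X_a:=P^A(a)(\sigma^E)^{-1/4}\rho_a^E(\sigma^E)^{-1/4}$ Hermitian, and by definition $2^{-H_2(A|E|\rho^{A,E}\|\sigma^E)}=\Tr\bar\rho^2=\sum_a \Tr X_a^2$. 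Two observations set up the whole argument. First, the operation $*P^{W_r}$ leaves the $E$-marginal unchanged, since $\Tr_A(\rho^{A,E}*P^{W_r})=\sum_a P^A(a)\rho_a^E=\rho^E$; hence the term subtracted in $d_2$ is the same for the input and the output state and equals $\frac1{|{\cal A}|}\Tr(\sum_a X_a)^2$. Second, the whitened output operator is again block-diagonal, $\sum_b|b\rangle\langle b|\otimes Y_b$ with $Y_b=\sum_a P^{W_r}(b-a)X_a$, i.e.\ the convolution of the distribution $P^{W_r}$ with the operator-valued function $a\mapsto X_a$.

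Next I would pass to the Fourier side. Define $\hat X_x:=\sum_a (-1)^{x\cdot a}X_a$ for $x\in\FF_2^n$; each $\hat X_x$ is Hermitian, so $\Tr\hat X_x^2=\|\hat X_x\|_2^2\ge 0$, and the character sum $\rE_{W_r}(-1)^{x\cdot W_r}$ is exactly the Fourier coefficient of $P^{W_r}$ appearing in the definition of the $\delta$-biased family. Applying the convolution theorem together with Parseval's identity on $\FF_2^n$ (recall $|{\cal A}|=2^n$), the squared Hilbert--Schmidt norm of the whitened output operator satisfies
\begin{align*}
\sum_b \Tr Y_b^2
=\frac1{|{\cal A}|}\sum_{x}\left(\rE_{W_r}(-1)^{x\cdot W_r}\right)^2\Tr\hat X_x^2 .
\end{align*}
The crucial alignment is that the $x=0$ term of this sum is $\frac1{|{\cal A}|}\Tr(\sum_a X_a)^2$, precisely the quantity subtracted in forming $d_2$. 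These cancel, leaving
\begin{align*}
d_2(A :E| \rho^{A,E}* P^{W_r} \|\sigma^E)
=\frac1{|{\cal A}|}\sum_{x\ne 0}\left(\rE_{W_r}(-1)^{x\cdot W_r}\right)^2\Tr\hat X_x^2 .
\end{align*}

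Finally I would take the expectation over $r$, apply the $\delta$-biased bound $\rE_r(\rE_{W_r}(-1)^{x\cdot W_r})^2\le\delta^2$ term by term (legitimate because every $\Tr\hat X_x^2\ge 0$), and then re-insert the nonnegative $x=0$ term to complete the sum, obtaining
\begin{align*}
\rE_r\, d_2(A :E| \rho^{A,E}* P^{W_r} \|\sigma^E)
\le\frac{\delta^2}{|{\cal A}|}\sum_{x}\Tr\hat X_x^2
=\delta^2\sum_a \Tr X_a^2
=\delta^2\, 2^{-H_2(A|E|\rho^{A,E}\|\sigma^E)},
\end{align*}
where the last Parseval step closes the loop. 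I expect the main obstacle to be largely bookkeeping: fixing Fourier conventions so that the convolution theorem and Parseval hold for operator-valued functions, and verifying the exact cancellation of the $x=0$ mode against the term subtracted in $d_2$. That alignment is what makes the bound tight and is the conceptual heart of the argument. A minor technical point worth flagging is the possible non-invertibility of $\sigma^E$, handled as usual by restricting to its support and using a generalized inverse.
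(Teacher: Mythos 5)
Your proof is correct. Note that the paper itself offers no proof of this lemma---it is imported verbatim as Theorem 3.2 of Fehr and Schaffner \cite{FS08}---and your argument (whitening, operator-valued Fourier transform over $\FF_2^n$, convolution theorem plus Parseval, cancellation of the $x=0$ mode against the term subtracted in $d_2$, and a termwise application of the $\delta$-bias bound justified by $\Tr \hat X_x^2 \ge 0$) is essentially a reconstruction of the proof given in that reference, so it matches the approach the paper relies on.
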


Based on the above lemma, we can evaluate
the average performance of the privacy amplification by
$\varepsilon$-almost dual universal$_2$ code family
as follows.

\begin{Lmm}\Label{Lem6-3-q}
Given a classical-quantum state $\rho^{A,E}$ 
on ${\cal H}_A \otimes {\cal H}_E$
and a state $\sigma^E$ on ${\cal H}_E$.
When $\{C_{r}\}$ is a $\varepsilon$-almost dual universal$_2$ 
code family with minimum dimension $m$,
the family of hash functions $\{f_{C_{r}}\}_{r}$ 
satisfies
\begin{align}
\rE_{r} d_2(f_{C_{r}}(A):E| \rho^{A,E}\|\sigma^E)
\le
\varepsilon 
2^{-{H}_{2}(A|E|\rho^{A,E}\|\sigma^E )}.
\Label{12-5-9-q}
\end{align}
That is, 
any $\varepsilon$-almost dual universal$_2$ 
hash function family $\{f_{r}\}_{r}$
satisfies the above inequality.
\end{Lmm}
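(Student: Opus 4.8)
The plan is to reduce privacy amplification by $\{f_{C_r}\}$ to the Dodis--Smith masking protocol (\ref{2-2-1}), whose leakage is already controlled by Lemma \ref{Lem6-1-q}, and to feed in the dual universality hypothesis through Lemma \ref{Lem6-0}. The bridge is an exact identity relating the $d_2$ quantity of the \emph{hashed} state to that of the \emph{masked} state built from the same code $C_r$.

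Concretely, writing $\rho^{A,E}=\sum_a P^A(a)\ket{a}\bra{a}\otimes\rho^E_a$ and setting $S_a:=(\sigma^E)^{-1/4}P^A(a)\rho^E_a(\sigma^E)^{-1/4}$, one has $2^{-H_2(A|E|\rho^{A,E}\|\sigma^E)}=\sum_a\Tr S_a^2$, and by Parseval this equals $2^{-n}\sum_x\Tr\hat S(x)^2$ with $\hat S(x):=\sum_a(-1)^{x\cdot a}S_a$. First I would compute both sides by hand. Masking by $W_{C_r}$ makes the diagonal blocks of $\rho^{A,E}*P^{W_{C_r}}$ constant along each coset of $C_r$ and rescaled by $2^{-\dim C_r}$, whereas the hash $f_{C_r}$ merges each coset of $C_r$ into one output symbol; carrying out this coset bookkeeping for the two $\ell_2$-type quantities yields the key identity
\begin{align}
d_2(f_{C_r}(A):E|\rho^{A,E}\|\sigma^E)
=2^{\dim C_r}\,d_2(A:E|\rho^{A,E}*P^{W_{C_r}}\|\sigma^E)
\Label{eq:hash-mask-identity}
\end{align}
for each fixed $r$. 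In Fourier form the left-hand side equals $2^{\dim C_r-n}\sum_{x\in C_r^\perp\setminus\{0\}}\Tr\hat S(x)^2$ and the right-hand side is $2^{-\dim C_r}$ times this expression, which makes the identity and the eventual cancellation transparent.

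With (\ref{eq:hash-mask-identity}) in hand the assembly is short. Since $\{C_r\}$ has minimum dimension $m$, its dual $\{C_r^\perp\}$ has maximum dimension $n-m$, so Lemma \ref{Lem6-0} gives that $\{W_{C_r}\}$ is $\delta$-biased with $\delta^2=\varepsilon 2^{-m}$. Taking $\rE_r$ of (\ref{eq:hash-mask-identity}), pulling out the dimension factor, and applying Lemma \ref{Lem6-1-q} to the masked term,
\begin{align}
\rE_r d_2(f_{C_r}(A):E|\rho^{A,E}\|\sigma^E)
&=2^{m}\,\rE_r d_2(A:E|\rho^{A,E}*P^{W_{C_r}}\|\sigma^E)\nonumber\\
&\le 2^{m}\,\delta^2\,2^{-H_2(A|E|\rho^{A,E}\|\sigma^E)},
\end{align}
and substituting $\delta^2=\varepsilon 2^{-m}$ cancels $2^m$ against $2^{-m}$, leaving exactly (\ref{12-5-9-q}).

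The main obstacle is the identity (\ref{eq:hash-mask-identity}): one must check that the $2^{-\dim C_r}$ rescaling of the coset-constant blocks of the masked state and the change of output-alphabet size from $2^n$ to $2^{n-\dim C_r}$ (which enters the two subtracted terms of $d_2$ through the $1/|{\cal A}|$ factor) combine into the single clean power $2^{\dim C_r}$. The only other delicate point is the dimension bookkeeping in the expectation: the factor extracted is $2^{\dim C_r}$, so the argument is cleanest when $\dim C_r$ is constant and equal to $m$; for a genuinely varying dimension one reduces to this case, or instead bounds $\rE_r[2^{\dim C_r}\mathbf{1}[x\in C_r^\perp]]\le\varepsilon$ directly from the maximum-dimension dual universality bound $\Pr[x\in C_r^\perp]\le 2^{-m}\varepsilon$ before summing over $x$.
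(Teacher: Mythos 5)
Your proposal is correct and follows essentially the same route as the paper's proof: both feed Lemma \ref{Lem6-0} into Lemma \ref{Lem6-1-q} and then convert the masked-state bound into the hashed-state bound via the identity $d_2(A:E|\rho^{A,E}*P^{W_{C_r}}\|\sigma^E)=2^{-m}\,d_2(f_{C_r}(A):E|\rho^{A,E}\|\sigma^E)$, which is exactly your (\ref{eq:hash-mask-identity}) rearranged, and which the paper establishes by the coset decomposition ${\cal A}\cong{\cal A}/C_r\times C_r$ rather than by your Fourier/Parseval bookkeeping of the same fact. The only stylistic difference is that the paper writes the masked state as a tensor product of the maximally mixed state on $C_r$ with the hashed state and reads the factor $2^{-m}$ directly off the definition of $d_2$, while implicitly treating $\dim C_r$ as constant equal to $m$ --- the varying-dimension subtlety you flag at the end is glossed over in the paper's own proof as well.
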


Using (\ref{1-3-1}) and (\ref{1-3-2}),
we obtain
\begin{align}
\rE_{r} d_1(f_{C_{r}}(A):E| \rho^{A,E})
\le &
\varepsilon 
2^{\frac{n-m}{2}-\frac{1}{2}{H}_{2}(A|E|\rho^{A,E}\|\sigma^E )}\nonumber \\
\le &
\varepsilon 
2^{\frac{n-m}{2}-\frac{1}{2}{H}_{\min}(A|E|\rho^{A,E}\|\sigma^E )}.
\end{align}

Thus we have obtained the $\varepsilon$-almost dual universal version of Theorem 5.5.1 of Renner \cite{Renner}.
Hence, the two-universal hashing lemma and other results as given in Renner \cite{Renner} can be generalized to our $\varepsilon$-almost dual universal hash functions.
Note here that, as we have shown in Section \ref{sec1}, the conventional universal$_2$ function family is a special case of our $\varepsilon$-almost dual universal$_2$ families.
In the following, 
in order to distinguish the method given in Sections \ref{s5} and \ref{s6},
we call this approach to 
the privacy amplification by
$\varepsilon$-almost dual universal$_2$ code family,
the $\delta$-biased approach.



\begin{proof}
Due to Lemma \ref{Lem6-1-q},
we obtain
\begin{align}
\rE_{r} d_2(A :E|
\rho^{A,E}* P^{W_{C_{r}}} \|\sigma^E)
\le
\varepsilon 2^{-m}
2^{-{H}_{2}(A|E|\rho\|\sigma)}.\Label{12-18-2}
\end{align}

Now, we focus on the relation ${\cal A} \cong {\cal A}/C \times C\cong f_{C} \times C$
for any code $C$.
That is, any $a\in {\cal A}$ can be uniquely specified by a coset element $[a]=a+C$ and a codeword $w\in C$.
We regard $[a]$ as the hash value $f(a)$ of $a$.
Then, for $P_W(w)=2^{-m}$, we obtain
\begin{eqnarray*}
\lefteqn{\rho^{A,E}* P^{W}
=
\sum_{w \in C}
2^{-m}
\sum_{a}P^A(a)| a+w\rangle_A \langle a+w|\otimes \rho_{a}^E}\\
&=&
\sum_{w \in C}
2^{-m}|w\rangle_{W} \langle w|
\otimes 
\sum_{[a] \in {\cal A}/C}
P^{A}([a])| [a]\rangle_{F} \langle [a]|\otimes \rho_{[a]}^E\\
&= & \sum_{w \in C}
2^{-m}|w\rangle_W \langle w|
\otimes 
\rho^{f_{C}(A),E}.
\end{eqnarray*}
In the second and the third lines, we used a new set of basis such that $|a\rangle_A=|w\rangle_W\otimes|[a]\rangle_F$.
Probability $P^A([a])$ denotes that of a coset element $[a]$ occurring: $P^A([a]):=\sum_{w\in C}P^A(a+w)$,
and similarly, $\rho_{[a]}^E$ the mixed state corresponding to $[a]$, i.e., $\rho_{[a]}^E:=\sum_{w\in C}\rho^E_{a+w}$.
Then by the definition of $d_2$, we have
\begin{align*}
& d_2(A :E|
\rho^{A,E}* P^{W_{C}} \|\sigma^E) \\
=&
2^{-m}
d_2(f_{C}(A) :E|\rho^{f_{C}(A),E}\|\sigma^E) \\
=&
2^{-m}
d_2(f_{C}(A) :E|\rho^{A,E}\|\sigma^E).
\end{align*}
Therefore, (\ref{12-18-2}) implies
\begin{align*}
\rE_{r} 2^{-m}
d_2(f_{C_{r}}(A) :E|\rho^{A,E}\|\sigma^E)
\le
\varepsilon 2^{-m}
2^{-{H}_{2}(A|E|\rho^{A,E}\|\sigma^E )},
\end{align*}
which implies (\ref{12-5-9-q}).
\end{proof}

\begin{rem}
One might think that
the concept of ``$\varepsilon$-almost dual universal$_2$ hash function  family"
is not needed because of the correspondence between an $\varepsilon$-almost dual universal$_2$ hash function family and a $\delta$-biased family given in Lemma \ref{Lem6-0}.
However, if we replace the terminology 
``$\varepsilon$-almost dual universal$_2$ hash function family"
by the terminology ``$\delta$-biased family",
we make a serious confusion
by the following reasons.
\begin{enumerate}
\item The concept of the ``$\delta$-biased family" is defined for a family of random variables
while the concept of the ``$\varepsilon$-almost dual universal$_2$ hash function family"
is defined for a family of hash functions.
It is confusing to use the terminology ``$\delta$-biased family'' for describing a family of hash functions.

\item
The correspondence holds only when a $\delta$-biased family is given as the uniform distribution on a code.
Other $\delta$-biased families do not necessarily have such correspondence.

\item
If we study hash functions only in terms of the concept of the $\delta$-biased family, their relation with universal$_2$ hash functions family becomes obscure.
\end{enumerate}
\end{rem}

\section{Permuted code family}\Label{s2-5}
In some applications, our setting is invariant under permutations of the order of bits
in $\FF^n_2$.
For example, in wire-tap channels which we consider in later sections, independent and identically distributed (i.i.d.) channels are assumed and thus the protocol is invariant under permutations of bits.
Then a code $C \subset \FF^n_2$ has the same performance as any bit-permuted code of $C$.

In order to formulate such situations, we introduce 
the {\it permuted code family} of a code $C$
as a code family consisting of bit-permuted codes of $C$
\begin{eqnarray}
{\cal C}_C:=\{ \sigma(C) |\sigma \in S_n\}.
\end{eqnarray}
Here $S_n$ denotes the symmetric group of degree $n$,
and $\sigma(i)=j$ means that $\sigma\in S_n$ maps $i$ to $j$, where $i,j\in\{1,\dots,n\}$.
The code $\sigma(C)$ is the one obtained by permuting bits of $C$ by a permutation $\sigma$;
if $x=(x_1,\dots,x_n)\in C$, then $x^\sigma:=(x_{\sigma(1)},\dots,x_{\sigma(n)})\in \sigma(C)$.

In what follows, we denote the distribution of the Hamming weight $k$ of codewords in $C$ by
$\Pr_C$; that is, the number of codewords with weight $k$ contained in $C$ is $|C|\Pr_C(k)$.
In order to characterize the permuted code family ${\cal C}_C$,
when the dimension of a code $C$ is $t$,
we define
\begin{eqnarray}
\varepsilon_k(C):=&
\frac{|C|\Pr_C(k)}{{n \choose k}} 2^{-t+n}
= 
\frac{2^n \Pr_C(k)}{{n \choose k}} \\
\varepsilon(C):=&
\max_{1 \le k \le n}\varepsilon_k(C).
\label{eq:epsilon_permuted_code}
\end{eqnarray}

\begin{Lmm}\Label{ht1}
The permuted code family ${\cal C}_C$ is $\varepsilon(C)$-almost universal$_2$ code family.
\end{Lmm}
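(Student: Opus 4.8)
The plan is to reduce the whole statement to a single symmetry computation for a uniformly random $\sigma\in S_n$. Every member $\sigma(C)$ of the family has the same dimension $t=\dim C$, so $t_{\min}=t$, and by the definition of an $\varepsilon$-almost universal$_2$ code family of minimum dimension $t$ it suffices to prove that for every nonzero $x\in\FF_2^n$ one has $\Pr[x\in\sigma(C)]\le 2^{t-n}\varepsilon(C)$, the probability being taken over $\sigma$ drawn uniformly from $S_n$. I would therefore fix an arbitrary $x\neq 0$ and denote its Hamming weight by $k$, so that $1\le k\le n$.

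The key facts are that coordinate permutations preserve Hamming weight and that $S_n$ acts transitively on the weight-$k$ vectors. First I would rewrite the event using the convention of the excerpt: $x\in\sigma(C)$ holds iff $x^{\sigma^{-1}}\in C$, so that $\Pr[x\in\sigma(C)]=\frac1{n!}\#\{\sigma\in S_n\mid x^{\sigma^{-1}}\in C\}$. Since $x^{\sigma^{-1}}$ always has weight $k$, only codewords of $C$ of weight exactly $k$ can be hit, of which there are $|C|\Pr_C(k)$. Next, for each such codeword $y$, the number of permutations with $x^{\sigma^{-1}}=y$ equals the stabilizer size $k!\,(n-k)!$, independently of $y$; this is precisely orbit-stabilizer applied to the transitive action on weight-$k$ vectors.

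Summing over the $|C|\Pr_C(k)$ codewords and dividing by $n!$ then gives the exact value $\Pr[x\in\sigma(C)]=\dfrac{|C|\Pr_C(k)}{{n\choose k}}$, which by the definition (\ref{eq:epsilon_permuted_code}) of $\varepsilon_k(C)$ equals $2^{t-n}\varepsilon_k(C)$. Finally I would use $\varepsilon_k(C)\le\max_{1\le k\le n}\varepsilon_k(C)=\varepsilon(C)$ to obtain $\Pr[x\in\sigma(C)]\le 2^{t-n}\varepsilon(C)$, which is exactly the required universality$_2$ bound.

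The only delicate point is the permutation bookkeeping. One must verify the equivalence $x\in\sigma(C)\iff x^{\sigma^{-1}}\in C$ under the stated action, and confirm that the count $k!\,(n-k)!$ of permutations sending $x$ to a fixed weight-$k$ target is genuinely independent of the target, which is just transitivity of the $S_n$-action on weight-$k$ vectors together with the fact that the stabilizer of a weight-$k$ vector is $S_k\times S_{n-k}$. Everything else is direct substitution, and notably no union bound or smoothing is needed, since the probability is evaluated exactly rather than merely estimated.
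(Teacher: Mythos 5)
Your proof is correct and follows essentially the same route as the paper: both exploit the fact that permutations preserve Hamming weight and that $S_n$ acts transitively on weight-$k$ vectors to compute the exact containment probability $\frac{|C|\Pr_C(k)}{\binom{n}{k}}=2^{t-n}\varepsilon_k(C)$, then take the maximum over $k$. The only difference is presentational — you make the symmetry argument explicit via orbit-stabilizer counting ($k!\,(n-k)!$ permutations per target codeword), whereas the paper phrases it as an averaging/double-counting argument over the family — so no substantive gap or divergence exists.
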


\begin{proof}
Any code $C' \in  {\cal C}_C$ has the weight distribution $\Pr_C$.
By averaging them over all $C' \in  {\cal C}_C$, we see that code family ${\cal C}$ also has the weight distribution $\Pr_C$. That is, a code $C'\in{\cal C}_C$ contains $2^t \Pr_C(k)$ elements of weight $k$ on average.
On the other hand, the number of elements $x\in \FF_2^n$ with weight $k$ is ${n \choose k}$, and due to the symmetry of ${\cal C}_C$ under bit permutations, each of them is contained in some $C' \in  {\cal C}_C$ with the same probability.
Thus, 
an element $x\in \FF_2^n$ with weight $k$ belongs to the code $C' \in  {\cal C}_C$ with the probability $\frac{|C| \Pr_C(k)}{{n \choose k}}$.
By taking the maximum with respect to $k$, 
we can show that
any element $x\in \FF_2^n$ belongs to the code $C' \in  {\cal C}_C$ with the probability $\varepsilon(C) 2^{t-n}$.
Hence, we obtain the desired argument.
\end{proof}

\begin{Thm}\Label{ht2}
For any $1\le t \le n$,
there exists a $t$-dimensional code $C \in \FF_2^n$ such that
$\varepsilon(C)
\le n+1$.
\end{Thm}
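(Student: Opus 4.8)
The plan is to use the probabilistic method: I would take $C$ to be a uniformly random $t$-dimensional subspace of $\FF_2^n$ and show that, with positive probability, all of the quantities $\varepsilon_1(C),\dots,\varepsilon_n(C)$ are simultaneously bounded by $n+1$. Since $\varepsilon(C)=\max_{1\le k\le n}\varepsilon_k(C)$, this immediately produces a code of the desired type.

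First I would compute the expectation of $\varepsilon_k(C)$ over this ensemble. Writing $A_k(C)$ for the number of weight-$k$ codewords of $C$, so that $A_k(C)=2^t\Pr_C(k)$ and hence $\varepsilon_k(C)=2^{n-t}A_k(C)/\binom{n}{k}$, the only input needed is that for any fixed nonzero $x\in\FF_2^n$ one has $\Pr[x\in C]=(2^t-1)/(2^n-1)$, which holds because $GL_n(\FF_2)$ acts transitively on the nonzero vectors and therefore every such $x$ lies in the same number of $t$-dimensional subspaces. Summing over the $\binom{n}{k}$ vectors of weight $k$ (all nonzero, as $k\ge1$) gives $\rE[A_k(C)]=\binom{n}{k}(2^t-1)/(2^n-1)$, so that
\begin{equation*}
\rE[\varepsilon_k(C)]=2^{n-t}\,\frac{2^t-1}{2^n-1}=\frac{2^n-2^{n-t}}{2^n-1}\le 1
\end{equation*}
for every $k=1,\dots,n$, where the inequality uses $t\le n$.

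Next I would combine Markov's inequality with a union bound. Because $\varepsilon_k(C)\ge0$, Markov yields $\Pr[\varepsilon_k(C)\ge n+1]\le \rE[\varepsilon_k(C)]/(n+1)\le 1/(n+1)$. Taking the union over the $n$ weight classes $k=1,\dots,n$ gives $\Pr[\exists\,k:\varepsilon_k(C)\ge n+1]\le n/(n+1)<1$. Hence the complementary event, namely $\varepsilon_k(C)< n+1$ for all $k$, occurs with positive probability, and so at least one $t$-dimensional code $C$ satisfies $\varepsilon(C)\le n+1$, as claimed.

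The argument is short, and I do not expect a genuine obstacle beyond bookkeeping. The one point that must be handled carefully is the choice of probabilistic model: one should use the uniform distribution over \emph{subspaces} of dimension exactly $t$ (so that $\dim C=t$ is guaranteed and $\Pr[x\in C]$ is the same for all nonzero $x$), rather than a random generator matrix whose rank could drop. With that in place, the bound $n+1$ appears precisely because there are $n$ weight classes $k=1,\dots,n$, each contributing a term of mean at most one to the union bound.
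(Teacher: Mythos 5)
Your proof is correct and takes essentially the same route as the paper: establish $\rE[\varepsilon_k(C)]\le 1$ for each weight class, then apply Markov's inequality and a union bound over the $n$ classes $k=1,\dots,n$ to get failure probability at most $n/(n+1)<1$. The only difference is that you instantiate a concrete ensemble (uniform over all $t$-dimensional subspaces of $\FF_2^n$, which is exactly the optimally universal$_2$ family), whereas the paper draws $C$ from an arbitrary universal$_2$ code family; your choice makes the expectation computation exact and cleanly guarantees $\dim C = t$, but conceptually the arguments coincide.
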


\begin{proof}
Let ${\cal C}$ be a universal$_2$ code family.
Then,
$\rE \varepsilon_k(C)\le 1$. The Markov inequality yields
\begin{eqnarray}
\Pr\{
\varepsilon_k(C) \ge n+1 \}
\le \frac{1}{n+1},
\end{eqnarray}
and thus
\begin{align*}
& \Pr
\{ \varepsilon_1(C) < n+1, \ldots,  \varepsilon_n(C) < n+1\}^c \\
=&
\Pr
\bigcup_{1\le k \le n}
\{\varepsilon_k(C) \ge n+1 \}
\le \frac{n}{n+1}.
\end{align*}
Hence, there exists a code $C$ such that
\begin{eqnarray}
\varepsilon_k(C) < n+1
\end{eqnarray}
for $k=1, \ldots, n$.
\end{proof}

Combining Lemma \ref{ht1} and Theorem \ref{ht2}, 
we obtain the following proposition.
\begin{proposition}\Label{2-7-1}
For any $1 \le t \le n$,
there exists a $t$-dimensional code $C$ such that 
the permuted code family ${\cal C}_C$ is $n+1$-almost universal$_2$.
\end{proposition}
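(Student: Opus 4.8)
The plan is to obtain the proposition as a direct consequence of the two preceding results, Lemma~\ref{ht1} and Theorem~\ref{ht2}, so that essentially all the work is in chaining them together correctly. First I would fix the target dimension $1 \le t \le n$ and apply Theorem~\ref{ht2} to produce a concrete $t$-dimensional code $C \subset \FF_2^n$ whose weight-distribution parameter obeys $\varepsilon(C) \le n+1$. This is the only place where a nontrivial choice of code is made; everything downstream is a formal consequence. The substance there is the averaging argument already carried out in Theorem~\ref{ht2}, where one starts from a universal$_2$ code family so that $\rE\,\varepsilon_k(C) \le 1$, applies Markov's inequality to each weight $k$, and then a union bound over $k = 1, \dots, n$ to guarantee that some single code $C$ satisfies $\varepsilon_k(C) < n+1$ simultaneously for all $k$.

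Second I would feed this same code $C$ into Lemma~\ref{ht1}, which asserts that its permuted family ${\cal C}_C = \{\sigma(C) \mid \sigma \in S_n\}$ is $\varepsilon(C)$-almost universal$_2$. At this point the permuted family is already known to be $\varepsilon(C)$-almost universal$_2$ with $\varepsilon(C) \le n+1$, so we are one short observation away from the claim.

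The final step is the only one that deserves an explicit remark: the defining inequality of an $\varepsilon$-almost universal$_2$ code family, namely $\Pr[x \in C_r] \le 2^{t-n}\varepsilon$ for all $x \in \FF_2^n \setminus \{0\}$, is monotone in $\varepsilon$, since enlarging $\varepsilon$ only weakens the bound. Hence being $\varepsilon(C)$-almost universal$_2$ with $\varepsilon(C) \le n+1$ immediately implies being $(n+1)$-almost universal$_2$, which is exactly the stated conclusion. I do not anticipate any genuine obstacle: the real content lives in Theorem~\ref{ht2} (the Markov-plus-union-bound averaging over a universal$_2$ family) and in Lemma~\ref{ht1} (the permutation-symmetry counting that identifies $\varepsilon(C)$ as the universality constant of ${\cal C}_C$), both already established, and the proposition is merely their composition together with the trivial monotonicity in $\varepsilon$.
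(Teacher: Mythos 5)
Your proposal is correct and is essentially the paper's own proof: the paper obtains Proposition~\ref{2-7-1} exactly by combining Theorem~\ref{ht2} (existence of a $t$-dimensional $C$ with $\varepsilon(C)\le n+1$) with Lemma~\ref{ht1} (${\cal C}_C$ is $\varepsilon(C)$-almost universal$_2$). Your explicit remark about monotonicity in $\varepsilon$ is a step the paper leaves implicit, but it is the same argument.
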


Indeed, Shulman et al. \cite{SF99}
discussed the average of decoding error probability under the permuted code family.
However, 
we do not consider the average of decoding error probability, here.
We show the relation with the concept of $\varepsilon$-almost universal$_2$
while they did not treat the relation with the concept.

Similarly, we can define the permuted code pair family for a given pair of codes $C_2\subset C_1$
as the family of code pairs ${\cal C}_{C_2\subset C_1}:=\{ \sigma(C_2) \subset \sigma(C_1) |\sigma \in S_n\}$. 
We define $\varepsilon(C_1/C_2):=\max_{1\le k \le n}\varepsilon_k(C_1)-\varepsilon_k(C_2)\frac{|C_2|}{|C_1|}$.
As a generalization of Lemma \ref{ht1},
we obtain the following lemma.

\begin{Lmm}\Label{ht1-2}
The permuted code pair family ${\cal C}_{C_1/C_2}$ is $\varepsilon(C_1/C_2)$-almost universal$_2$ code pair family.
\end{Lmm}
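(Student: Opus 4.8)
The plan is to reuse the symmetry argument already carried out in the proof of Lemma \ref{ht1}, and then to exploit the nesting $\sigma(C_2)\subset\sigma(C_1)$ to split the target probability into a difference of two single-code probabilities. Write $t_1=\dim C_1$ and $t_2=\dim C_2$. Since permutations preserve dimension, every member of the permuted pair family has larger code $\sigma(C_1)$ of the fixed dimension $t_1$, and this $t_1$ is the dimension $t$ appearing in the definition of an $\varepsilon$-almost universal$_2$ code pair family.

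First I would fix a nonzero $x\in\FF_2^n$, let $k\ge 1$ be its Hamming weight, and let $\Pr[\,\cdot\,]$ denote probability over a uniformly random $\sigma\in S_n$. Exactly as in the proof of Lemma \ref{ht1}, $S_n$ acts transitively on the weight-$k$ vectors, so by symmetry every weight-$k$ vector lies in $\sigma(C)$ with the same probability; counting the weight-$k$ codewords of $C_i$ then gives
$$\Pr[x\in\sigma(C_i)]=\frac{2^{t_i}\Pr_{C_i}(k)}{\binom{n}{k}}=2^{t_i-n}\varepsilon_k(C_i),\qquad i=1,2,$$
where the last equality is merely the definition of $\varepsilon_k$.

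Next, since $C_2\subset C_1$ forces $\sigma(C_2)\subset\sigma(C_1)$ for every $\sigma$, the event $\{x\in\sigma(C_2)\}$ is contained in the event $\{x\in\sigma(C_1)\}$. Hence $\{x\in\sigma(C_1)\setminus\sigma(C_2)\}$ is the set difference of these two nested events, and its probability is
$$\Pr[x\in\sigma(C_1)\setminus\sigma(C_2)]=2^{t_1-n}\varepsilon_k(C_1)-2^{t_2-n}\varepsilon_k(C_2).$$
Factoring out $2^{t_1-n}$ and using $2^{t_2-t_1}=|C_2|/|C_1|$ rewrites the right-hand side as $2^{t_1-n}\bigl(\varepsilon_k(C_1)-\frac{|C_2|}{|C_1|}\varepsilon_k(C_2)\bigr)$. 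Maximizing over $1\le k\le n$ and invoking the definition $\varepsilon(C_1/C_2)=\max_{1\le k\le n}\varepsilon_k(C_1)-\varepsilon_k(C_2)\frac{|C_2|}{|C_1|}$ yields $\Pr[x\in\sigma(C_1)\setminus\sigma(C_2)]\le 2^{t_1-n}\varepsilon(C_1/C_2)$, which is precisely the $\varepsilon$-almost universal$_2$ code pair condition with $t=t_1$.

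I do not expect a genuine obstacle here: the only nontrivial ingredient is the transitivity/symmetry step, and that is already established in the proof of Lemma \ref{ht1} and can simply be cited. The one bookkeeping subtlety worth stating explicitly is the index convention — the ``larger'' code of the pair is $\sigma(C_1)$, so the dimension relevant to the code pair definition is $\dim C_1$ rather than $\dim C_2$ — together with the observation that the per-$k$ difference $\varepsilon_k(C_1)-\frac{|C_2|}{|C_1|}\varepsilon_k(C_2)$ stays nonnegative (because $\sigma(C_2)\subset\sigma(C_1)$ makes the split of probabilities valid), so that passing to the maximum over $k$ is legitimate.
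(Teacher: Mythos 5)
Your proof is correct and takes essentially the same route as the paper: the paper disposes of this lemma by invoking ``the same discussion'' as the permutation-symmetry counting argument of Lemma \ref{ht1} (its citation of Lemma \ref{ht2} there is evidently a typo), and your write-up is precisely that argument spelled out for the pair, with the only additional ingredient being the nested-events identity $\Pr\left[x\in\sigma(C_1)\setminus\sigma(C_2)\right]=\Pr\left[x\in\sigma(C_1)\right]-\Pr\left[x\in\sigma(C_2)\right]$, which is valid since $\sigma(C_2)\subset\sigma(C_1)$. Your bookkeeping of the index convention (the dimension $t$ in the code pair definition is that of the larger code $\sigma(C_1)$) is also consistent with the paper's definition.
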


This lemma can be shown by the same discussion as the proof of Lemma \ref{ht2}.
Furthermore, we can show the following theorem.

\begin{Thm}\Label{ht2-2}
For any $t \le n$ and a code $C_2$,
there exists a $t$-dimensional code $C_1 \in \FF_2^n$ such that
$C_2 \subset C_1$ and 
$\varepsilon(C_1/C_2)\le n+1$.
\end{Thm}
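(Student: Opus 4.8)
The plan is to imitate the probabilistic argument used in the proof of Theorem \ref{ht2}, but applied to the code pair $(C_2,C_1)$ rather than to a single code, so I will need a first-moment estimate followed by a Markov/union-bound step. Write $s:=\dim C_2$; the statement implicitly requires $s\le t$, since otherwise no $t$-dimensional $C_1$ can contain $C_2$. The first thing I would do is rewrite the quantity $\varepsilon_k(C_1)-\varepsilon_k(C_2)\frac{|C_2|}{|C_1|}$ so that its probabilistic meaning is transparent. Letting $N_k$ denote the number of weight-$k$ words lying in $C_1\setminus C_2$, a direct substitution of the definitions of $\varepsilon_k(C_1)$ and $\varepsilon_k(C_2)$, together with $C_2\subset C_1$, shows that
\begin{align}
\varepsilon_k(C_1)-\varepsilon_k(C_2)\frac{|C_2|}{|C_1|}
=\frac{2^n}{\binom{n}{k}}\cdot\frac{N_k}{|C_1|},
\end{align}
so that only codewords of $C_1$ lying \emph{outside} $C_2$ contribute. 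This is the pair analogue of the identity $\varepsilon_k(C)=\frac{2^n\Pr_C(k)}{\binom nk}$, and it is the crucial bookkeeping step.

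Next I would introduce the randomness. I choose the outer code $C_1$ so that the quotient $C_1/C_2$ is a uniformly random $(t-s)$-dimensional subspace of $\FF_2^n/C_2\cong\FF_2^{n-s}$; equivalently, $\{C_1\}$ is a universal$_2$ extended code family of $C_2$ of dimension $t$ in the sense of Section \ref{sec:generalize-subcode}. Randomizing the quotient rather than $C_1$ itself automatically preserves the containment $C_2\subset C_1$ and fixes $\dim C_1=t$. For a fixed $x\notin C_2$ the image $[x]$ is a nonzero element of the quotient, so the universal$_2$ property there gives $\Pr[x\in C_1\setminus C_2]=\Pr[[x]\in C_1/C_2]\le 2^{(t-s)-(n-s)}=2^{t-n}$, while any $x\in C_2$ contributes nothing to $N_k$. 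Summing over the $\binom nk$ words of weight $k$ yields $\rE[N_k]\le\binom nk 2^{t-n}$, and hence, by the identity above,
\begin{align}
\rE\!\left[\varepsilon_k(C_1)-\varepsilon_k(C_2)\frac{|C_2|}{|C_1|}\right]\le 1
\end{align}
for every $k=1,\dots,n$, exactly as in the single-code case of Theorem \ref{ht2}.

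From here the argument closes as before. The Markov inequality gives $\Pr\{\varepsilon_k(C_1/C_2)\ge n+1\}\le\frac1{n+1}$ for each $k$, and a union bound over $k=1,\dots,n$ bounds the probability that some $k$ fails by $\frac{n}{n+1}<1$. Therefore there exists a choice of $C_1$ in the support of the construction, and hence a genuine $t$-dimensional code containing $C_2$, for which $\varepsilon_k(C_1/C_2)<n+1$ simultaneously for all $k$, i.e.\ $\varepsilon(C_1/C_2)\le n+1$.

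I expect the main obstacle to be the first step and the matching choice of random family: one must check that the term $-\varepsilon_k(C_2)\frac{|C_2|}{|C_1|}$ precisely cancels the contribution of the codewords of $C_2$, so that only $C_1\setminus C_2$ survives, and that randomizing the quotient $C_1/C_2$ delivers the clean bound $2^{t-n}$ in the ambient space $\FF_2^n$. Once this reduction to the quotient is in place, the moment computation and the Markov/union-bound step are routine and mirror the proof of Theorem \ref{ht2}.
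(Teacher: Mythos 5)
Your proposal is correct and takes essentially the same approach as the paper: the paper's own (one-line) proof is precisely to choose $C_1$ from a universal$_2$ extended code family of $C_2$ and repeat the first-moment/Markov/union-bound argument of Theorem \ref{ht2}. You have simply made explicit the details the paper leaves implicit, namely the identity expressing $\varepsilon_k(C_1)-\varepsilon_k(C_2)\frac{|C_2|}{|C_1|}$ in terms of the weight-$k$ words of $C_1\setminus C_2$ and the resulting bound $\rE\left[\varepsilon_k(C_1/C_2)\right]\le 1$.
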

This theorem can be shown in the same way as Theorem \ref{ht2} 
by choosing the code $C_1$ from a 
universal$_2$ extended code family of $C_2$.

Combining Lemma \ref{ht1-2} and Theorem \ref{ht2-2}, 
we obtain the following proposition.
\begin{proposition}\Label{2-7-2}
For any $1 \le t \le n$ and a code $C_2$,
there exists a $t$-dimensional extended code $C_1$ of $C_2$
such that 
the permuted code pair family ${\cal C}_{C_1/C_2}$ 
is an $n+1$-almost universal$_2$ code pair family.
\end{proposition}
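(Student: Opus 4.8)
The plan is to read off this proposition as an immediate combination of Theorem~\ref{ht2-2} and Lemma~\ref{ht1-2}, in exact parallel with the way Proposition~\ref{2-7-1} was obtained from Theorem~\ref{ht2} and Lemma~\ref{ht1}. Because both ingredients are already in hand, no genuinely new argument is required; the work consists only of chaining the two statements and checking that their parameters line up.

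First I would apply Theorem~\ref{ht2-2} to the given integer $t\le n$ and the fixed code $C_2$. It yields a $t$-dimensional extended code $C_1\supset C_2$ with $\varepsilon(C_1/C_2)\le n+1$. This is the step carrying all the real content: one draws $C_1$ at random from a universal$_2$ extended code family of $C_2$, so that for each weight $k$ the expectation $\rE\,[\varepsilon_k(C_1)-\varepsilon_k(C_2)|C_2|/|C_1|]$ is at most $1$ (this is where the universal$_2$ property of the extended family enters), and then a Markov inequality combined with a union bound over the $n$ weight classes $k=1,\dots,n$, exactly as in the proof of Theorem~\ref{ht2}, forces the existence of a single realization for which every term, and hence the maximum $\varepsilon(C_1/C_2)$, stays below $n+1$.

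Next I would feed this particular pair $C_2\subset C_1$ into Lemma~\ref{ht1-2}, which asserts that the permuted code pair family ${\cal C}_{C_1/C_2}$ is $\varepsilon(C_1/C_2)$-almost universal$_2$. Here I would observe that every $\sigma(C_1)$ with $\sigma\in S_n$ has the same dimension $t$, since a bit permutation is a linear isomorphism, so the family has constant dimension $t$ and the minimum/maximum-dimension distinction is vacuous. Since the defining bound $\Pr[x\in C_{2,r}\setminus C_{1,r}]\le 2^{t-n}\varepsilon$ is monotone in the parameter $\varepsilon$, the estimate $\varepsilon(C_1/C_2)\le n+1$ upgrades the conclusion of Lemma~\ref{ht1-2} to ``$(n+1)$-almost universal$_2$'', which is exactly the claim. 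I do not expect a real obstacle here, as the two cited results do all the heavy lifting; the only points demanding care are purely bookkeeping, namely confirming that the $\varepsilon$-parameter in Lemma~\ref{ht1-2} is literally the quantity $\varepsilon(C_1/C_2)$ bounded in Theorem~\ref{ht2-2} and that the common dimension of the permuted pairs is the $t$ appearing in the statement, both of which hold directly by the definitions of $\varepsilon(C_1/C_2)$ and of the permuted code pair family.
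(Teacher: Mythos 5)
Your proposal is correct and matches the paper's own proof, which obtains Proposition \ref{2-7-2} precisely by combining Theorem \ref{ht2-2} (existence of $C_1\supset C_2$ with $\varepsilon(C_1/C_2)\le n+1$, proved via the same Markov-plus-union-bound argument over weight classes that you describe) with Lemma \ref{ht1-2} (the permuted code pair family is $\varepsilon(C_1/C_2)$-almost universal$_2$). Your additional bookkeeping remarks—that permutations preserve dimension and that the universality bound is monotone in $\varepsilon$—are exactly the implicit checks the paper relies on.
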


Considering the dual codes, we obtain the following 
proposition.

\begin{proposition}\Label{2-7-3}
For any $1 \le t \le n$ and a code $C_2$,
there exists a $t$-dimensional subcode $C_1$ of $C_2$
such that 
the permuted code pair family ${\cal C}_{C_2/C_1}$ 
is an $n+1$-almost dual universal$_2$ code pair family.
\end{proposition}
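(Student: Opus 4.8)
The plan is to derive Proposition \ref{2-7-3} from Proposition \ref{2-7-2} by pure dualization, exactly as the preceding sentence suggests. The single structural fact that makes this work is that a bit permutation is an orthogonal transformation of $\FF_2^n$, hence commutes with taking duals: since $(x^\sigma,y^\sigma)=\sum_i x_{\sigma(i)}y_{\sigma(i)}=\sum_j x_j y_j=(x,y)$ for every $\sigma\in S_n$ and $x,y\in\FF_2^n$, one checks directly that $\sigma(C)^\perp=\sigma(C^\perp)$ for every code $C$ and every $\sigma$.

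First I would rewrite the conclusion using the definition of an $\varepsilon$-almost dual universal$_2$ code pair family. Writing ${\cal C}_{C_2/C_1}=\{\sigma(C_1)\subset\sigma(C_2)\mid\sigma\in S_n\}$ in the form $\{C_{1,r}\subset C_{2,r}\}_r$ with $C_{1,r}=\sigma(C_1)$ and $C_{2,r}=\sigma(C_2)$, this family is $(n+1)$-almost dual universal$_2$ if and only if the dual pair family $\{\sigma(C_2)^\perp\subset\sigma(C_1)^\perp\mid\sigma\in S_n\}$ is $(n+1)$-almost universal$_2$. Applying $\sigma(C)^\perp=\sigma(C^\perp)$ and retaining the same index $\sigma$, this dual family is precisely the permuted code pair family $\{\sigma(C_2^\perp)\subset\sigma(C_1^\perp)\}={\cal C}_{C_1^\perp/C_2^\perp}$ of the fixed base pair $C_2^\perp\subset C_1^\perp$. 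Thus the whole statement reduces to producing a $t$-dimensional subcode $C_1\subset C_2$ for which ${\cal C}_{C_1^\perp/C_2^\perp}$ is $(n+1)$-almost universal$_2$.

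Next I would invoke Proposition \ref{2-7-2} with base code $C_2^\perp$ and target dimension $n-t$: it supplies an $(n-t)$-dimensional extended code $D$ of $C_2^\perp$ (so $C_2^\perp\subset D$) whose permuted code pair family ${\cal C}_{D/C_2^\perp}$ is $(n+1)$-almost universal$_2$. Setting $C_1:=D^\perp$ gives $\dim C_1=n-(n-t)=t$, and dualizing $C_2^\perp\subset D$ yields $C_1=D^\perp\subset(C_2^\perp)^\perp=C_2$, so $C_1$ is a $t$-dimensional subcode of $C_2$. Since $D=C_1^\perp$, the family ${\cal C}_{D/C_2^\perp}$ equals ${\cal C}_{C_1^\perp/C_2^\perp}$, which is $(n+1)$-almost universal$_2$ by construction; by the equivalence of the previous step, ${\cal C}_{C_2/C_1}$ is therefore $(n+1)$-almost dual universal$_2$.

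The argument is essentially a definitional chase, so the only step demanding genuine care — the ``hard part'' — is verifying that dualization sends a permuted code pair family to another permuted code pair family \emph{with the same index set}, which is what lets me apply Proposition \ref{2-7-2} verbatim; this rests entirely on the identity $\sigma(C)^\perp=\sigma(C^\perp)$. I would also track the dimension bookkeeping (an $(n-t)$-dimensional extension of $C_2^\perp$ dualizes to a $t$-dimensional subcode of $C_2$) and record the admissible range $\dim C_2^\perp\le n-t$, i.e. $t\le\dim C_2$, so that the extended code furnished by Proposition \ref{2-7-2} actually exists.
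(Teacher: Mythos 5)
Your proposal is correct and follows exactly the paper's own route: the paper proves Proposition \ref{2-7-3} in one line by ``substituting $C_2^{\perp}$ and $C_1^{\perp}$ into $C_2$ and $C_1$ in Proposition \ref{2-7-2}'', which is precisely your dualization argument. Your write-up merely makes explicit what the paper leaves implicit — the identity $\sigma(C)^\perp=\sigma(C^\perp)$, the dimension bookkeeping $t \leftrightarrow n-t$, and the admissible range $t\le\dim C_2$ — all of which are correct.
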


Proposition \ref{2-7-3} can be shown by
substituting $C_2^{\perp}$ and $C_1^{\perp}$ into 
$C_2$ and $C_1$ in Proposition \ref{2-7-2}.
In later sections, we use these results for showing the existence of deterministic hash function that work universally for 
quantum wire-tap channels.

\section{Application to error correcting codes}\Label{s4}
In this section, as a preliminary for later section, we apply the results of Section \ref{sec1} to error correction.
We use a code $C\in{\cal C}$ chosen randomly from an $\varepsilon$-almost universal$_2$ code family ${\cal C}$ for error correction, and show that it indeed serves as a good code.
As previous work, for example, Brassard and Salvail applied universal$_2$ codes in the context of information reconciliation (Ref. \cite{BS93}, Theorem 6).
Muramatsu and Miyake have also studied a similar problem using a somewhat generalized definition of universal hash functions \cite{MM1}.
Here we present a much simpler evaluation by employing a more restrictive condition for the family of codes than \cite{MM1}.

We consider a noisy channel with the additive noise, and denote the probability that the noise $x\in \FF_2^n$ occurs by $P^X(x)$.
We also denote by $\hat{P}^X(k)$ the probability that an error with the Hamming weight $k$ occurs.
In this channel, the sender Alice uses an $\varepsilon$-almost universal$_2$ code family as error correcting codes.
The receiver Bob applies the maximum likelihood decoder to his bits.
In order to evaluate the performance of the decoder, we focus on the decoding error probability, i.e., the probability that the decoder makes a wrong guess.
We denote this probability for a fixed code $C$ by $P_e(C)$. From now on, 
we often treat a code $C$ as a random variable that is randomly chosen with the equal probability from the $\varepsilon$-almost universal$_2$ code family ${\cal C}$.
For example, we denote the expectation of variable $A$ with respect to the random variable $C$ as $\rE_{C\in {\cal C}} A$.
In this notation, the main purpose of this section is to evaluate $\rE_{C\in {\cal C}} P_e(C)$, i.e., the average of $P_e(C)$ when $C$ is randomly chosen from ${\cal C}$.

First, for the sake of simplicity, we evaluate performance of the minimum Hamming distance decoder.
Note that the decoding error probability of this decoder, $E_{C\in{\cal C}}P_{\rm hd}(C)$, can be used as an upper bound on $E_{C\in{\cal C}}P_e(C)$, since the maximum-likelihood decoder provides the minimum decoding error probability $P_e(C)$.
We assume that our $\varepsilon$-almost universal$_2$ code family ${\cal C}$ has the maximum dimension $t_{\max}$;
hence the decoder outputs $t_{\max}$ bits, and the code rate is $R=t_{\max}/n$.
Now suppose that a bit flip $x$ of Hamming weight $k$ occurs in the channel (i.e., an input $w$ is mapped to $w+x$).
In this case, success and failure of the decode by the minimum decoding
is written by $P_{\rm hd}(x,C)$.
That is, the success (the failure) is denoted by $P_{\rm hd}(x,C)=0$
($P_{\rm hd}(x,C)=1$).
Then the decoder fails
if there exists another code element $y \in C$ with Hamming weight $\le k$;
in other words, if $\{y\in \FF_2^n\, :\,|y|\le k, y\ne x\} \cap (C \setminus \{0\}) \neq \emptyset$.
Then, 
\begin{align}
P_{\rm hd}(x,C) 
=&
1[\{y\in \FF_2^n\, :\,|y|\le k, y\ne x\} \cap (C \setminus \{0\}) \neq \emptyset]
\nonumber  \\
\le & 
\sum_{y:|y|\le k, y\ne x} 1[y\in (C \setminus \{0\})],
\label{H-11-7-1}
\end{align}
where $1[A]$ is the indicator function defined 
to be $1$ when $A$ is valid and to be $0$ otherwise.
For a fixed element $y$, 
due to Condition (\ref{eq:C-r-upperbound-2}), 
any $\varepsilon$-almost universal$_2$ code family ${\cal C}$ satisfies
\begin{eqnarray}
E_{C\in{\cal C}}\,1[y\in C ]
=
{\rm Pr}[y\in C ]
\le 2^{t_{\max}-n}\varepsilon
\label{H-11-7-2}
\end{eqnarray}
for $y \ne 0$.
When averaged over $C\in{\cal C}$, 
combining (\ref{H-11-7-1}) and (\ref{H-11-7-2}),
we can evaluate the average probability of
$P_{\rm hd}(x,C)$
\begin{eqnarray*}
E_{C\in{\cal C}}P_{\rm hd}(x,C)& \le &E_{C\in{\cal C}}\sum_{y:|y|\le k, y\ne x}
1[y\in (C \setminus \{0\})]\\
&\le &E_{C\in{\cal C}} \sum_{y:|y|\le k}
1[y\in (C \setminus \{0\})]\\
&=& E_{C\in{\cal C}}\sum_{y:|y|\le k, y\ne 0}1[y\in C ]\\
&=& \sum_{y:|y|\le k, y\ne0}E_{C\in{\cal C}}\,1[y\in C ]\\
&\le& \sum_{y:|y|\le k, y\ne0} 2^{t_{\max}-n}\varepsilon\\
&\le& 2^{n h\left(\min \{k/n,1/2\} \right)}2^{t_{\max}-n}\varepsilon,
\end{eqnarray*}
where 
the final inequality follows from the fact that $\sum_{i=0}^k{n\choose i}\le 2^{n h\left(\min \{k/n,1/2\} \right)}$ (see, e.g., Lemma 4.2.2 of \cite{Justesen}).
Also by noting the obvious bound $E_{C\in{\cal C}}P_{\rm hd}(x;C)\le 1$, we have
\begin{equation}
E_{C\in{\cal C}}P_{\rm hd}(x;C)\le
\varepsilon
2^{-n\left[1-h\left( \min\{|x|/n,1/2\} \right)-R\right]_+}
\label{eq:decoding_error_bound_with_memory_fixed_k}
\end{equation}
for $\varepsilon \ge 1$, where $[a]_+:=\max\{a,0\}$ for $a\in \RR$.

Since the behavior of the minimum Hamming distance decoder is independent of parameter $k$, the bound (\ref{eq:decoding_error_bound_with_memory_fixed_k}) can easily be generalized to the case in the following way
where a weight distribution $\hat{P}^X(k)$ of errors is given.
\begin{eqnarray}
&&\rE_{C\in {\cal C}} P_e(C) 
= 
\rE_{C\in {\cal C}} 
\sum_{x\ne 0\in \FF_2^n}
P^X(x) 
P_{\rm hd}(x;C)\nonumber\\
&=& 
\sum_{x\ne 0\in \FF_2^n}
P^X(x) 
\rE_{C\in {\cal C}} 
P_{\rm hd}(x;C)\nonumber\\
&\le &
\varepsilon
\sum_{x\ne 0 \in \FF_2^n}
P^X(x) 
2^{-n\left[1-h\left(\min \{|x| /n,1/2\} \right)-R\right]_+} \nonumber\\
&=& 
\varepsilon
\sum_{k=1}^n
\hat{P}^X(k) 
2^{-n\left[1-h\left(\min \{k/n,1/2\} \right)-R\right]_+}.
\Label{Ha20}
\end{eqnarray}
As to the asymptotic behavior, one can easily see that,
when the probability $\hat{P}^X \{k| 1-h\left(\min \{k/n,1/2\} \right) > R+\delta \}$ approaches $1$ for sufficiently small $\delta>0$,
the right hand side of (\ref{Ha20}) converges to zero.
We note that Inequality (\ref{Ha20}) is used in Ref. \cite{HT12} to prove the security of the BB84 protocol for the case of finite key lengths.

\begin{rem}
The essential point for the above evaluation for 
$\rE_{C\in {\cal C}} P_e(C) $ is 
the exchange of the orders of 
$\sum_{x\ne 0}$ and 
$E_{C\in{\cal C}}$.
For a fixed error $x$, 
the $\varepsilon$-almost universality$_2$ 
guarantees the evaluation of the average 
$E_{C\in{\cal C}}P_{\rm hd}(x;C)$
as (\ref{eq:decoding_error_bound_with_memory_fixed_k}).
If we fix a code $C$, we cannot obtain a similar evaluation. 
\end{rem}

Next we consider the cases of finite $n$.
In this case it is not easy to calculate similar bounds, hence we further assume that the channel is memoryless.
That is,
the probability distribution $P^X$ of errors $x$ is assumed to be the binary distribution with probability $p$.
In this channel, 
when $p$ is less than $1/2$,
the maximum-likelihood decoder is equivalent to 
the minimum Hamming distance decoder.
In this case, by modifying Gallager's bound for the random coding \cite{Gal}, we can obtain the following simple bound.
\begin{Thm}\Label{thm3}
When $P^X(x)$ is given as the $n$-th independent and identical distribution of the distribution
$(1-p,p)$, then the average decoding error probability of error correction using an $\varepsilon$-almost universal$_2$ code family ${\cal C}$ with maximum dimension $t_{\max}=nR$ satisfies
\begin{align}
\rE_{C\in {\cal C}} P_e(C) \le 
 \min_{0 \le s \le 1}
\varepsilon^s
2^{-n\left[-sR+E_0(s,p)\right]},
\Label{eq:Ecc-bound1}
\end{align}
where
\begin{equation}
E_0(s,p):=s-\log_2\left[p^{\frac1{1+s}}+\left(1-p\right)^{\frac1{1+s}}\right]^{1+s}.
\label{eq:def_E0}
\end{equation}
\end{Thm}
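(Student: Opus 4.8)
The plan is to adapt Gallager's random coding bound to the present setting, where the single feature we need to replace is the codeword-membership probability: for a genuinely random linear code one would use $\Pr[y\in C]=2^{t_{\max}-n}$, whereas here we only have the $\varepsilon$-almost universal$_2$ guarantee (\ref{eq:C-r-upperbound-2}), namely $\Pr[y\in C]\le \varepsilon\,2^{t_{\max}-n}$ for all $y\ne 0$. This substitution is exactly what produces the factor $\varepsilon^s$ in (\ref{eq:Ecc-bound1}), while the rest of the argument reproduces Gallager's manipulation of the error exponent.

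First I would use the linearity of each $C\in{\cal C}$ together with the translation invariance of the memoryless channel to assume, without loss of generality, that the zero codeword is transmitted; then the received word equals the noise $x$ and $\rE_{C\in{\cal C}}P_e(C)=\sum_x P^X(x)\,\rE_{C\in{\cal C}}P_{\rm hd}(x;C)$. Since $p<1/2$ makes the maximum-likelihood decoder coincide with the minimum Hamming distance decoder, a failure at noise $x$ means some $y\in C\setminus\{0\}$ is at least as likely as $0$, i.e.\ $P^X(x+y)\ge P^X(x)$. Invoking Gallager's pointwise bound on the failure indicator, valid for every $s\ge 0$,
\begin{align}
P_{\rm hd}(x;C)\le \left(\sum_{y\in C\setminus\{0\}}\left(\frac{P^X(x+y)}{P^X(x)}\right)^{\frac{1}{1+s}}\right)^{s},
\end{align}
where the bound holds because, under failure, the inner sum already exceeds $1$.

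Next I would take $\rE_{C\in{\cal C}}$ of this inequality. Here the restriction $0\le s\le 1$ enters: the map $z\mapsto z^{s}$ is then concave, so Jensen's inequality lets me pull the expectation through the outer power, reducing matters to $\rE_{C\in{\cal C}}\sum_{y\ne 0}1[y\in C]\,(P^X(x+y)/P^X(x))^{1/(1+s)}=\sum_{y\ne 0}\Pr[y\in C]\,(P^X(x+y)/P^X(x))^{1/(1+s)}$. Bounding $\Pr[y\in C]\le \varepsilon\,2^{t_{\max}-n}$ via (\ref{eq:C-r-upperbound-2}) and enlarging the sum to all of $\FF_2^n$ (the $y=0$ term only helps), the change of variable $z=x+y$ decouples the inner sum from $x$, and after collecting the powers of $P^X(x)$ one obtains
\begin{align}
\rE_{C\in{\cal C}}P_e(C)\le \varepsilon^{s}\,2^{(t_{\max}-n)s}\left(\sum_{x\in\FF_2^n}P^X(x)^{\frac{1}{1+s}}\right)^{1+s}.
\end{align}

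Finally, since $P^X$ is the $n$-fold i.i.d.\ product of $(1-p,p)$, the sum factorizes as $\sum_x P^X(x)^{1/(1+s)}=\bigl(p^{1/(1+s)}+(1-p)^{1/(1+s)}\bigr)^{n}$; substituting $t_{\max}=nR$ and comparing with the definition (\ref{eq:def_E0}) of $E_0(s,p)$ recasts the right-hand side as $\varepsilon^{s}2^{-n[-sR+E_0(s,p)]}$, and minimizing over $s\in[0,1]$ yields (\ref{eq:Ecc-bound1}). The step I expect to be the main obstacle is the averaging over the code family: the Jensen/concavity argument is precisely where the $\varepsilon$-almost universal$_2$ hypothesis is consumed and where the constraint $s\le 1$ becomes indispensable, so the accounting of the exponents of $2$, of $\varepsilon$, and of $(1+s)$ through the change of variables must be tracked carefully to land on the stated closed form.
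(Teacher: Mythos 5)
Your proposal is correct and follows essentially the same route as the paper's own proof in Appendix A: Gallager's pointwise bound with exponent $1/(1+s)$, Jensen's inequality via concavity of $z\mapsto z^s$ for $0\le s\le 1$ to pull the code-family average inside the outer power, the substitution of the $\varepsilon$-almost universal$_2$ bound $\Pr[y\in C]\le\varepsilon\,2^{t_{\max}-n}$ for the random-coding membership probability (yielding the factor $\varepsilon^s$), and the i.i.d.\ factorization giving $E_0(s,p)$. Your accounting of the exponents and the identification of where the constraint $s\le 1$ and the universality hypothesis are consumed match the paper exactly.
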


This theorem is shown in Appendix \ref{s7}.
The function $E_0(s,p)$ defined in (\ref{eq:def_E0}) is in fact the specialized form of Gallager's $E_0(s,\mbox{\boldmath$p$})$ for the binary symmetric channel and the uniform input distribution \cite{Gal}.
Hence by using the method of \cite{Gal}, the right hand side of (\ref{eq:Ecc-bound1}) can be used to evaluate the exponential decreasing rate of $\rE_{C\in {\cal C}} P_e(C)$ with respect to $n$ as follows.
\begin{Crl}\label{crl:ERp}
Under the same conditions as Theorem \ref{thm3}, $\rE_{C\in {\cal C}} P_e(C)$ can be bounded from above as
\begin{equation}
\rE_{C\in {\cal C}} P_e(C)\le 2^{-nE(R,p)}\max\{\varepsilon,1\}
\label{eq:upper_rECPeC_epsilon_ge1}
\end{equation}
where $E(R,p)$ is Gallager's reliability function
\begin{equation}
E(R,p):=\max_{0\le s\le1}-sR +E_0(s,p).
\label{eq:def_E_Rp}
\end{equation}
In particular, $E(R,p)$ is strictly positive for $R < 1-h(p)$.
\end{Crl}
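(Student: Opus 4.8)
The plan is to derive both assertions directly from the bound of Theorem \ref{thm3}, namely $\rE_{C\in {\cal C}} P_e(C) \le \min_{0\le s\le 1}\varepsilon^s 2^{-n[-sR+E_0(s,p)]}$. The first observation is that the factor $\varepsilon^s$ can be bounded uniformly in $s$ over the interval $[0,1]$: since $s\mapsto \varepsilon^s$ is monotone, one has $\varepsilon^s\le\varepsilon$ when $\varepsilon\ge 1$ (using $s\le 1$) and $\varepsilon^s\le 1$ when $\varepsilon<1$ (using $s\ge 0$), so in either case $\varepsilon^s\le\max\{\varepsilon,1\}$ for every $0\le s\le 1$.

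Using this, I would replace $\varepsilon^s$ by the constant $\max\{\varepsilon,1\}$ inside the minimization and then pull it out, obtaining
\begin{align}
\rE_{C\in {\cal C}} P_e(C)
&\le \max\{\varepsilon,1\}\min_{0\le s\le 1}2^{-n[-sR+E_0(s,p)]} \nonumber\\
&= \max\{\varepsilon,1\}\, 2^{-n\max_{0\le s\le 1}[-sR+E_0(s,p)]},
\end{align}
where the last equality uses that $2^{-nx}$ is decreasing in $x$. The exponent in the final line is precisely $E(R,p)$ by its definition in (\ref{eq:def_E_Rp}), which establishes (\ref{eq:upper_rECPeC_epsilon_ge1}).

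For the strict positivity of $E(R,p)$ when $R<1-h(p)$, I would analyze the function $\phi(s):=-sR+E_0(s,p)$ near $s=0$. From the definition (\ref{eq:def_E0}) one checks that at $s=0$ the bracket $p^{1/(1+s)}+(1-p)^{1/(1+s)}$ reduces to $p+(1-p)=1$, so $E_0(0,p)=0$ and hence $\phi(0)=0$. The key computation is the slope $\partial_s E_0(s,p)|_{s=0}$: differentiating (\ref{eq:def_E0}) and evaluating at $s=0$ yields $\partial_s E_0(s,p)|_{s=0}=1-h(p)$, the capacity of the binary symmetric channel. Consequently $\phi'(0)=1-h(p)-R>0$ under the hypothesis $R<1-h(p)$, so $\phi(s)>0$ for all sufficiently small $s>0$. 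Since $E(R,p)=\max_{0\le s\le 1}\phi(s)\ge\phi(s)$ for any such $s$, we conclude $E(R,p)>0$.

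The argument is essentially bookkeeping layered on top of Theorem \ref{thm3}; the only genuine calculation is the derivative $\partial_s E_0(s,p)|_{s=0}=1-h(p)$, and I expect this evaluation to be the one step requiring care, although it is the classical identification of the slope of Gallager's $E_0$ at the origin with the mutual information \cite{Gal}.
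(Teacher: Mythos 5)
Your proposal is correct and takes essentially the same route as the paper: the paper dismisses the first inequality as obvious (your bound $\varepsilon^s\le\max\{\varepsilon,1\}$ for $0\le s\le 1$ is exactly the implicit bookkeeping), and its positivity argument is the identical computation, namely $E_R(0,p)=0$ together with $\left.\frac{\partial}{\partial s}E_R(s,p)\right|_{s=0}=1-h(p)-R>0$, which forces the maximum over $s\in[0,1]$ to be strictly positive.
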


{\it Proof of Corollary \ref{crl:ERp}:}
The first half of the corollary is obvious.
Denote the argument of the maximum by $E_R(s,p):=-sR+E_0(s,p)$.
Then $E_R(0,p)=0$, and $\left.\frac{\partial}{\partial s}E_R(s,p)\right|_{s=0}=1-h(p)-R>0$ if $R < 1-h(p)$.
Hence $E_R(s,p)$ attains its positive maximum value at $s\in(0,1]$.
(Also see Ref. \cite{Gal}.)
\endproof

The exponential decreasing rate $E(R,p)$ of (\ref{eq:upper_rECPeC_epsilon_ge1}) can also be verified from (\ref{Ha20}) by using the type method \cite{CKbook}
when $p \le 1/2$.
For this purpose, we introduce the divergence function
$d(q\|p):=q\log\frac{q}{p}+(1-q)\log\frac{1-q}{1-p}$.
Since $\hat{P}^X(k)\le 2^{-nd(q\|p)}$ with $q=k/n$ for the binary symmetric channel \cite{CKbook}
and
$\sum_{k= \lceil n/2 \rceil }^n \hat{P}^X(k) 
\le 2^{-nd(1/2 \|p)}$,
the right hand side of (\ref{Ha20}) can be evaluated as
\begin{align}
& \varepsilon
\sum_{k=0}^n
\hat{P}^X(k) 
2^{-n\left[1-h\left( \min\{ k/n, 1/2\} \right)-R\right]_+} \nonumber \\
\le &
\varepsilon
(2^{-nd(1/2 \|p)}\nonumber \\
&+
\lfloor n/2+1 \rfloor
\max_{0 \le k \le n/2}
\hat{P}^X(k) 
2^{-n\left[1-h\left(k/n\right)-R\right]_+} 
)
\nonumber \\
\le &
\lfloor n/2+2 \rfloor
\varepsilon
\max_{0\le q\le 1/2}
2^{-n\left([1-h(q)-R]_+ +d(q\|p)\right)} \nonumber \\
=&
\lfloor n/2+2 \rfloor
\varepsilon
2^{-n \min_{0\le q\le 1/2}[1-h(q)-R]_+ +d(q\|p)}.
\Label{Ha20-4}
\end{align}

One can see that the exponential decreasing rate of (\ref{Ha20-4}) indeed equals $E(R,p)$ by using the relation
\begin{align}
\min_{0\le q \le 1/2}[1-h(q)-R]_+ +d(q\|p)
= \max_{0 \le s \le 1/2}-sR +E_0(s,p).
\Label{11-23-1}
\end{align}
The proof of this relation is given, e.g., in Csisz\'{a}r-K\"{o}rner \cite{CKbook} in a more general form.
However, since a simpler proof of (\ref{11-23-1}) can be given by using the property of additive channels, we reproduce it in Appendix \ref{app:Proof_min_DQP} for readers' convenience.

Now, we consider the case where the sender and the receiver use a fixed $t$-dimensional code $C$ that satisfies the condition of Theorem \ref{ht2}, i.e., a code $C$ whose permuted code family ${\cal C}_C$ is $(n+1)$-almost universal$_2$.
If the error distribution $P^X$ is permutation invariant, e.g., if the channel is binary symmetric, we have $P_e(C)=P_e\left(\sigma(C)\right)$ for any permutation $\sigma \in S_n$, which implies that $P_e(C)= \rE_{\sigma \in S_n} P_e(\sigma(C))$.
In other words, one may evaluate $P_e(C)$ as if the code family ${\cal C}_C$ were actually used.
Thus, by applying (\ref{eq:Ecc-bound1}) and by noting $n+1>1$, we obtain the inequality
\begin{align}
P_e(C) \le
(n+1) 2^{-nE(R,p)}
\Label{Ha20-2}
\end{align}
with $R=t/n$.
Note that the code $C$ satisfies this inequalities for any $p$.

In the rest of this section, we show that the above results also hold for the case where the information is encoded by the coset $C_1/C_2$
of two given codes $C_1$ and $C_2$ satisfying $C_2\subset C_1\subset \FF_2^n$.
These codes are used for constructions of the quantum Calderbank-Shor-Steane (CSS) codes,
and for this reason, they are often called the classical CSS codes.
In this section, we restrict ourselves to the following type of classical communication.
A message to be sent is a coset $[x]\in C_1/C_2$, and when the sender wants to send $[x]$, she chooses an element randomly from the set $x+C_2$ with the equal probability and sends it.
On the receiver's side, Bob first applies the maximum likelihood decoder of $C_1$ on the received sequence and obtains an element $y\in C_1$. Then, he obtains a coset $[y] \in C_1/C_2$ as the final decoded message.
We denote the decoding error probability of this decoder by $P_e(C_1/C_2)$.

We assume that the subcode $C_2$ is fixed, and 
the larger code $C_1$ is randomly chosen with the equal probability
from the $\varepsilon$-almost universal$_2$ extended code family ${\cal C}$ of $C_2$ with maximum dimension $t_{\max}$.
Again, the purpose of the following discussion is to evaluate $\rE_{C_1\in {\cal C}} P_e(C_1/C_2)$.
By a similar argument as above,
when the bit flip error occurs on $k$ bits in the noisy channel,
we can show that $\rE_{C_1\in {\cal C}} P_e(C_1/C_2)$ is less than $\min\{2^{nh\left(\min \{k/n,1/2\}\right)} \varepsilon 2^{t_{\max}-n},1\}
\le
\varepsilon
2^{-n[1-h\left(\min \{k/n,1/2\}\right)-R]_+}$, $R=t_{\max}/n$
for $\varepsilon \ge 1$.
Thus, for any weight distribution $\hat{P}^X$ of errors, we have
\begin{align}
\rE_{C_1\in {\cal C}} P_e(C_1/C_2)
\le 
\varepsilon
\sum_{k=0}^n 
\hat{P}^X(k)
2^{-n\left[1-h\left(\min \{k/n,1/2\}\right)-R\right]_+}.
\Label{Ha17}
\end{align}

If we further assume the channel is memoryless,
as a generalization of Theorem \ref{thm3} and Corollary \ref{crl:ERp}, we have the following.
\begin{Thm}\Label{thm4}
When $P^X(x)$ is given as the $n$-th independent and identical distribution of the distribution
$(1-p,p)$, then an $\varepsilon$-almost universal$_2$ extended code family
${\cal C}$ of $C_2$ with the maximum dimension $t_{\max}=nR$ satisfies
\begin{equation}
\rE_{C_1\in {\cal C}} 
P_e(C_1/C_2)
\le 
 \min_{0 \le s \le 1}
\varepsilon^s
2^{-n\left[-sR+E_0(s,p)\right]}.
\Label{Ha7-1}
\end{equation}
and thus
\begin{equation}
\rE_{C_1\in {\cal C}} 
P_e(C_1/C_2)
\le
2^{-nE(R,p)}\max\{\varepsilon,1\}.
\Label{Ha7-1a}
\end{equation}
Further, 
the above inequalities are valid even with
an $\varepsilon$-almost universal$_2$
extended code pair family $\{C_1 \subset C_2\}$.
\end{Thm}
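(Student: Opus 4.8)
The plan is to observe that the coset-coding setting has already been reduced, in the discussion immediately preceding the theorem, to a bound of exactly the same algebraic shape as the single-code case, and then to invoke the arguments of Theorem \ref{thm3} and Corollary \ref{crl:ERp} essentially verbatim. Concretely, inequality (\ref{Ha17}) bounds $\rE_{C_1\in{\cal C}}P_e(C_1/C_2)$ by $\varepsilon\sum_{k=0}^n\hat{P}^X(k)2^{-n[1-h(\min\{k/n,1/2\})-R]_+}$, which is identical in form to the single-code bound (\ref{Ha20}); likewise the per-weight estimate $\min\{2^{nh(\min\{k/n,1/2\})}\varepsilon 2^{t_{\max}-n},1\}$ matches (\ref{eq:decoding_error_bound_with_memory_fixed_k}). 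Since none of the subsequent steps distinguish a single code from a coset code, the two displayed conclusions follow by the same two routes already used for the single-code case.

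For the first inequality (\ref{Ha7-1}), I would adapt the modified Gallager random-coding argument of Appendix \ref{s7}. The only property of the code family that argument uses is the uniform collision bound $\Pr[z\in C]\le 2^{t_{\max}-n}\varepsilon$ for $z\ne 0$; in the coset setting the competing vectors are $z=y-w$ with $y,w\in C_1$ and $[y]\ne[w]$, i.e. $z\in C_1\setminus C_2$, and for exactly these $z$ the $\varepsilon$-almost universal$_2$ extended-code-family property gives $\Pr[z\in C_1]=\Pr[[z]\subset C_1]\le 2^{t_{\max}-n}\varepsilon$. Hence the Gallager computation produces the same factor $\varepsilon^s$ (from the $s$-th power of the collision bound), the same $-sR$ (from the $2^{nR}$ competing cosets), and the same channel term $E_0(s,p)$, yielding (\ref{Ha7-1}). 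The second inequality (\ref{Ha7-1a}) then follows from (\ref{Ha7-1}) by the identical optimization over $s\in[0,1]$ used to deduce Corollary \ref{crl:ERp} from Theorem \ref{thm3}; alternatively it can be read off from (\ref{Ha17}) via the type-method estimate (\ref{Ha20-4}) together with the identity (\ref{11-23-1}).

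For the final assertion, that the inequalities survive when the pair of codes is drawn from an $\varepsilon$-almost universal$_2$ extended code pair family rather than from an extended code family of a fixed subcode, I would note that the subcode enters the derivation of (\ref{Ha17}) in exactly one place: the union bound over competing vectors lying in the larger code but outside the smaller one, whose single ingredient is the collision bound $2^{t_{\max}-n}\varepsilon$ on the probability that such a vector belongs to the larger code of the pair. This is precisely the defining inequality $\Pr[x\in C_{2,r}\setminus C_{1,r}]\le 2^{t-n}\varepsilon$ of an $\varepsilon$-almost universal$_2$ code pair family, so it continues to hold when both codes of the pair vary. Consequently (\ref{Ha17}) is unchanged, and the two bounds follow as before.

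The step I expect to require the most care is verifying that a coset decoding error really does force the existence of a competing $z\in C_1\setminus C_2$ with $|e-z|\le|e|$, so that the collision bound applies to exactly these $z$ and to no others. A coset-level error occurs only when the minimum-distance decoder of $C_1$ selects a codeword in a different coset, whereas selecting a wrong codeword in the same coset is harmless; one must check that this refinement only shrinks the set of relevant $z$ and therefore cannot weaken the bound. Once this is confirmed, restricting the summation to $z\notin C_2$ can only decrease the right-hand side, so the $E_0(s,p)$ computation of the single-code case supplies a valid upper bound without modification.
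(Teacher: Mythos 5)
Your proposal is correct and follows essentially the same route as the paper: the paper's Appendix proof of Theorem \ref{thm4} likewise reduces (by linearity) to the transmission of $0\in C_2$, reruns the Gallager computation of Theorem \ref{thm3} with the competitor set $C\setminus\{0\}$ replaced by $C_1\setminus C_2$, and bounds $\rE_{C_1\in{\cal C}}\sum_{x\in C_1\setminus C_2}P^n_X(y+x)^{1/(1+s)}$ via exactly the extended-code-family (or code-pair-family) collision bound you identify, after which (\ref{Ha16}) and the $s$-optimization give (\ref{Ha7-1}) and (\ref{Ha7-1a}). Your closing verification that within-coset decoding mistakes are harmless, so the competitor set only shrinks, is the same (implicit) reduction the paper uses, handled with more care than the paper itself.
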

This theorem is also shown in Appendix \ref{s7} in a way similar to Theorem \ref{thm3}.

Finally, for a given code $C_2$, we can choose another fixed code $C_1$ satisfying the condition of Theorem \ref{ht2-2}, i.e., 
$C_2 \subset C_1$ and $\varepsilon(C_1/C_2)\le n+1$.
We then assume that the sender and the receiver use this fixed pair for error correction.
If the distribution $P^X$ is permutation invariant, we have $P_e(C_1/C_2)=P_e(\sigma(C_1)/\sigma(C_2))$ for any permutation $\sigma \in S_n$, which implies that
$P_e(C_1/C_2)= \rE_{\sigma \in S_n} P_e(\sigma(C_1)/\sigma(C_2))$.
Thus one may evaluate $P_e(C_1/C_2)$ as if the $n+1$-almost universal$_2$ permuted extended code pair family 
${\cal C}_{C_2\subset C_1}$ were actually used.
Applying (\ref{Ha7-1}), we obtain the inequality
\begin{align}
P_e(C_1/C_2)
 \le 
(n+1)
2^{-nE(R,p)}.
\Label{Ha7-2}
\end{align}
Note that the code $C_1$ satisfies this inequality for any $p$.

\section{Quantum key distribution}\Label{s5}
In this section, 
we show the strong security when an $\varepsilon$-almost dual universal hash 
function family is applied in the quantum key distribution (QKD).
For this purpose, 
we apply the results of previous sections 
to the phase error correction in the security proof of quantum key distribution (QKD).
Hence, we call this approach the phase error correction approach. 

In QKD, Alice and Bob need to perform a key distillation protocol to generate a secret key from the sifted key that they obtained as a result of the quantum communication.
We consider the following type of the BB84 protocol using a function family ${\cal F}=\{f_r:\FF_2^m\to\FF_2^l|r\in I\}$ for privacy amplification.
\begin{center}
\fbox{
\begin{minipage}{0.9\linewidth}
BB84 protocol using universal hash function family:
\begin{enumerate}
\item Alice and Bob establish sifted keys, and estimate the bit error rate by the usual procedure of the BB84 protocol, such as the one given in \cite{SP00}.
That is,
\begin{enumerate} 
\item Alice sends Bob qubit states chosen randomly out of $\{|0_z\rangle,|1_z\rangle,|0_x\rangle,|1_x\rangle\}$.
\item Bob receives and measures them with randomly chosen bases $\{z,x\}$.
\item By using the authenticated public channel, Bob announces his measurement bases for all qubits, and they keep only the bits for which they chose the same basis.
\item They reveal randomly sampled bits over the public channel, and  calculate the estimated bit error rate. If the rate is too high, they abort the protocol.
\end{enumerate}
As a result, Alice and Bob obtains sifted key $k_A, k_B\in\FF_2^n$, respectively.
\item Alice picks a random number $r_A\in\FF_2^l$, and announces $v=k_A\oplus G(C_1)r_A^T$,  with $\oplus$ denoting XOR.
\item Bob calculates $R_B=k_B\oplus v$ and by correcting its errors using $C_1$, he obtains $R'_B\in C_1$.
Then he calculate raw bit $r_B\in\FF_2^l$ satisfying $R_B=G(C_1)r_B^T$.
(Thus $r_A=r_B$ with high probability).
\item Alice selects a linear universal$_2$ function $f_r:\FF_2^m\to\FF_2^l$ randomly and announces it to Bob. Then they calculate secret keys $s_A=f_r(r_A)$ and $s_B=f_r(r_B)$.
\end{enumerate}
\end{minipage}
}
\end{center}
By using the widely used proof technique due to Shor and Preskill \cite{SP00,GLLP,WMU06,H07}, the unconditonal security of this protocol has been shown for the case where ${\cal F}$ consists of the completely random linear functions \cite{WMU06,H07}.
On the other hand, by using the quantum de Finneti representation theorem, Renner proved the unconditional security of the BB84 protocol using universal$_2$ hash functions for privacy amplification \cite{Renner}.
In this section, we present a security proof of the Shor-Preskill--type that holds with a weaker condition on ${\cal F}$, i.e., with ${\cal F}$ being an $\varepsilon$-almost dual universal$_2$ family.
Note that the condition on ${\cal F}$ is indeed relaxed, since, as shown in Sec. \ref{sec1}, the universal$_2$ function family is a limited case of $\varepsilon$-almost dual universal$_2$ families.

Note also that our method has an extra advantage that, unlike in \cite{Renner}, Alice and Bob do not need to perform random permutations of the sifted key bits.
Conversely, if the random permutation is already implemented in one's QKD system, 
or if the channel is permutation invariant, our hash function can be replaced 
by the one using the deterministic code obtained in Theorem \ref{ht2-2},
since the permuted codes of this code pair 
form an $(n+1)$-almost dual universal$_2$ subcode pair family.

For showing the security, it is convenient to rewrite the protocol in terms of the classical CSS code as follows.

\begin{center}
\fbox{
\begin{minipage}{0.9\linewidth}
BB84 protocol using code family ${\cal C}_2$:
\begin{enumerate}
\item Alice and Bob establish sifted keys $k_A, k_B\in\FF_2^n$ by the same procedure as  in the above protocol.
\item Alice picks $R_A\in C_1$ randomly and sends $v=k_A\oplus R_A$ to Bob over the public channel.
\item Bob calculates $R_B=v\oplus k_B$, and by correcting its errors using $C_1$, he obtains $R_B'\in C_1$. (Thus $R_A=R_B'$ with high probability.)
\item Alice selects code $C_{2,r}$ randomly and announces it to Bob.
They both obtain secret keys as cosets of $C_{2,r}$, i.e., $S_A=R_A+C_{2,r}$, $S_B=R_B'+C_{2,r}$.
\end{enumerate}
\end{minipage}
}
\end{center}

For the sake of simplicity, we will restrict ourselves to this protocol for the rest of this section.
We begin by reviewing some of the known results and clarify notations.
Assume that the quantum channel between Alice and Bob is given by an arbitrary quantum operation $\Lambda$, and thus the sifted key is affected by $\Lambda$.
As discussed in \cite{Hayashi3,H07}, since the above type of the BB84 protocol is invariant under twirling of qubits, without loss of generality, one may consider the Pauli channel $\Lambda_t$ obtained by twirling the original channel $\Lambda$.
The Pauli channel $\Lambda_t$ can generally be described by the joint probability distribution $P^{XZ}$ of phase error and bit error (in this section, we call an error in the $x$ basis the phase error, and in the $z$ basis the bit error).
That is, $\Lambda_t$ transforms an $n$-qubit state $\rho$ to
\begin{equation}
\Lambda_t(\rho)=\sum_{x,z\in \FF_2^n} P^{XZ}(x,z)
\mbox{\boldmath$Z$}^x\mbox{\boldmath$X$}^z
\rho
\left(\mbox{\boldmath$Z$}^x\mbox{\boldmath$X$}^z\right)^\dagger,
\end{equation}
where
\begin{eqnarray*}
\mbox{\boldmath$Z$}^x:&=&\sigma_z^{x_1}\otimes\cdots\otimes\sigma_z^{x_n},\\
\mbox{\boldmath$X$}^z:&=&\sigma_x^{z_1}\otimes\cdots\otimes\sigma_x^{z_n}
\end{eqnarray*}
with $\sigma_x$ and $\sigma_z$ being the Pauli matrices, and $x=(x_1,\dots,x_n)$, $z=(z_1,\dots,z_n)\in \{0,1\}^n$.
We denote the marginal distribution of phase error by $P^X(x)=\sum_{z\in \FF_2^n}P^{XZ}(x,z)$.
As in the previous section, $\hat{P}^X(k)$ denotes the distribution of the Hamming weight $k$ of $x$ obeying $P^X(x)$.

Next, before considering the secret key, we evaluate the security of the sifted key $v$ as an illustration.
The result here will also be used in later sections on wire-tap channels and randomness extraction.
Let $\rho_{A,E}$ be Alice's and Eve's total system when the when the first step of the protocol (i.e., the quantum communication part) is finished.
If one employs the security criteria that takes into account the universal composability \cite{Renner}, the security of the sifted key can be evaluated by Eve's distinguishability $\left\|\rho_{A,E}-\rho_A \otimes \rho_E \right\|_1$, with $\rho_A:=\Tr_E \rho_{A,E}$ and $\rho_E:=\Tr_A \rho_{A,E}$%
\footnote{
Recall that, in our protocol, Alice is assumed to choose her sifted key uniformly. 
Hence $\rho_{A,E}$ can generally be described as $\rho_{A,E}:=\sum_{v_1,\ldots,v_n}\frac{1}{2^n}
|v_1,\ldots,v_n\rangle \langle v_1,\ldots,v_n| \otimes \rho_E(v_1,\ldots,v_n)$,
where $\rho_E(v_1,\ldots,v_n)$ denotes Eve's density matrix when Alice's sifted key is $v=(v_1,\ldots,v_n)$.
In this case, $\Tr_E \rho_{A,E}=\sum_{v_1,\ldots,v_n}\frac{1}{2^n}
|v_1,\ldots,v_n\rangle \langle v_1,\ldots,v_n|$ gives the fully mixed state.}.
Alternatively, one may evaluate the security by Eve's Holevo information $\chi:= \Tr \rho_{A,E} (\log \rho_{A,E}-\log \rho_A \otimes \rho_E )$.
These values are known to be bounded from above as \cite{Hayashi3,H07}
\begin{eqnarray}
\left\|\rho_{A,E}-\rho_A \otimes \rho_E \right\|_1
&\le &
2 \sqrt{2} \sqrt{P_{\rm ph}} \Label{Ha3}\\
\chi
&\le&  \eta_n(P_{\rm ph}), \Label{Ha4}
\end{eqnarray}
where $P_{\rm ph}$ is the phase error probability of the channel $\Lambda_t$.
That is, $P_{\rm ph}:=1-P^X(x=0^n)$.
The function $\eta_n$ is defined as 
\begin{equation}
\eta_n(x)
:=
\left\{
\begin{array}{ll}
-x \log x-(1-x)\log (1-x) + n x & \hbox{ if }x \le 1/2  \\
1 + n x & \hbox{ if }x > 1/2. 
\end{array}
\right.
\end{equation}

Now we turn to the security of the secret key.
The only difference here is that the key is effectively sent through the quantum channel that is error-corrected by the quantum CSS code corresponding to the classical CSS code $C_1$, $C_2$.
Hence by using essentially the same argument as above, the security can be evaluated by the phase error probability that remains after the quantum error correction.
When one sees it in the phase basis (i.e., the $x$ basis),
this probability is given by the decoding error probability of the classical CSS code $C_2^{\perp}/C_1^{\perp}$,
which we denote by $P_{\rm ph}\left(C_2^{\perp}/C_1^{\perp}\right)$.
Then the security of the secret key can be evaluated as \cite{Hayashi3,H07}
\begin{eqnarray}
\left\|\rho_{A,E}-\rho_A \otimes \rho_E \right\|_1
&\le &
2 \sqrt{2} \sqrt{P_{\rm ph}\left(C_2^{\perp}/C_1^{\perp}\right)},  \Label{Ha3-1}\\
\chi
&\le & \eta_l\left(P_{\rm ph}(C_2^{\perp}/C_1^{\perp})\right). \Label{Ha4-1}
\end{eqnarray}
The same evaluation as (\ref{Ha3-1})
has been done by Renes\cite[Theorem 5.1]{Renes10}.
For the case of $C_1=\FF_2^n$, essentially the same relation was noted by Koashi \cite{Koashi} and Miyadera \cite{Miyadera}.

Then we apply Theorem \ref{thm4}
to evaluate $P_{\rm ph}\left(C_2^{\perp}/C_1^{\perp}\right)$.
In our BB84 protocol, the subcode $C_2 \subset C_1$ is randomly chosen from an $\varepsilon$-almost {\it dual} universal subcode family ${\cal C}$ with minimum dimension $m-l$ of a fixed code $C_1$.
This corresponds to the case where the dual code $C_2^{\perp}$ is chosen from the $\varepsilon$-almost universal$_2$ extended code family of the fixed code $C_1^{\perp}$ with maximum dimension $n-m+l$.
Thus by applying inequality (\ref{Ha17}), we have
\begin{align}
\rE_{C_2 \in {\cal C}} P_{\rm ph}\left(C_2^{\perp}/C_1^{\perp}\right)
\le \varepsilon 
\sum_{k=0}^{n} \hat{P}^X(k)
2^{-n\left[S-h\left(\min \{k/n,1/2\} \right)\right]_+}
,\Label{Ha6}
\end{align}
where $S=(m-l)/n$ is the sacrificed bit rate, i.e. the ratio of bits reduced by privacy amplification.
Therefore, from (\ref{Ha3}), (\ref{Ha4}), and from the concavity of $x \mapsto \sqrt{x}$,
 $x\mapsto \eta_{l}$, we have
\begin{align}
& \rE_{C_2 \in {\cal C}}
\left\|\rho_{A,E}-\rho_A \otimes \rho_E \right\| \nonumber \\
\le &
2 \sqrt{2} \sqrt{
\varepsilon 
\sum_{k=0}^{n}
\hat{P}^X(k)
2^{-n\left[S-h\left(\min \{k/n,1/2\} \right)\right]_+}
} ,
\Label{1-10-1}
\\
& \rE_{C_2 \in {\cal C}}\,
\chi \nonumber \\
\le &
\eta_l
\left(
\varepsilon 
\sum_{k=0}^{n} \hat{P}^X(k) 
2^{-n\left[S-h\left(\min \{k/n,1/2\} \right)\right]_+}
\right).
\Label{1-10-2}
\end{align}
In practical QKD systems, the weight distribution $\hat{P}^X$ needs to be estimated from the bit error rate of sampled bits (see, e.g., \cite{Hayashi3,H07}).
If the phase error rate $p_{\rm ph}=k/n$ is estimated to be less than a certain value $\hat{p}_{\rm ph}$ with the exception of a negligiblly small probability,
and if $S > h(\hat{p}_{\rm ph})$,
then the argument $
\varepsilon 
2^{-n\left[S-h\left(\min \{k/n,1/2\} \right)\right]_+}$ 
converges to zero for $n\to\infty$.
Asymptotically, it is sufficient to sacrifice $n \left[h\left(\hat{p}_{\rm ph}\right)+\delta\right]$ bits by privacy amplification
with an arbitrary $\delta>0$.

From the above argument, we see that for the security of QKD, it is sufficient to choose the code $C_2$ from 
an $\varepsilon$-almost {\it dual} universal$_2$ subcode family of $C_1$,
while the existing results (e.g., \cite{Renner}) guarantee the security only when 
the code $C_2$ is randomly chosen 
from a universal$_2$ subcode family of $C_1$.
Since a universal$_2$ subcode family of $C_1$
is a $2$-almost dual universal$_2$ subcode family of $C_1$ (Theorem \ref{thm:excode-dual}),
our condition 
is strictly weaker than that by \cite{Renner}.

It should also be noted that by setting $C_1=\FF_2^n$, our argument also applies to  Koashi's proof technique \cite{Koashi}; that is, random matrices appearing in Koashi's protocol can be replaced by an almost dual universal$_2$ code family.

Further, the above discussion can be extended to 
an $\varepsilon$-almost {\it dual} universal$_2$ subcode pair family of 
$\{C_2 \subset C_1\}$.
Now, we choose 
$m-l$ dimensional subcode $C_2$ of $C_1$ such that
the dual code $C_2^{\perp}$ satisfies the condition of Theorem \ref{ht2-2}.
When the Pauli channel is permutation invariant,
this code satisfies (\ref{1-10-1}) and (\ref{1-10-2}) with $\varepsilon=n+1$.

\section{Quantum wire-tap channel}\Label{s6}
\subsection{Evaluation by phase error correction approach}\Label{s6-a}
We apply our results of the previous section on QKD for showing the security in the quantum wire-tap channel model.
In this model, the channel from Alice to Bob and the channel from Alice to Eve are both specified.
Particularly, in this section, we assume that the channel from Alice to Bob is given by the $n$-multiple use of the Pauli channel which is described by the joint distribution $P^{ZX}$ of bit error and phase error on a single qubit system.
We also assume that phase error and bit error occur independently,
and denote the phase error probability by $p_{\rm ph}$.
This corresponds to a limited case of the Pauli channel discussed in the previous section, i.e., $P^{X^nZ^n}(x,z)=\prod_{i=1}^nP^X(x_i)P^Z(z_i)$ with $P^X(1)=1-P^X(0)=p_{\rm ph}$.
As to the channel to Eve, we assume that Eve can access all part of the environment system corresponding to this channel.

Our goal is to show that Alice can send secret classical information via the quantum channel to Bob by the following coding protocol (c.f. the paragraph below (\ref{Ha20-2})).
First, Alice chooses a classical CSS code $C_1$, $C_2$.
A message to be sent is a coset $[x]\in C_1/C_2$, and when the sender wants to send $[x]$, she chooses an element randomly from the set $x+C_2$ with the equal probability and sends it.
On the receiver's side, Bob first applies the maximum likelihood decoder of $C_1$ on the received bit sequence and obtains an element $y\in C_1$. Then, he obtains a coset $[y] \in C_1/C_2$ as the final decoded message.

From Eve's point of view, this protocol is equivalent to the situation where Alice sends her classical information $[x]\in C_1/C_2$ by encoding it to a state $|[x]\rangle$ of the quantum CSS code (see, e.g., \cite{SP00}).
Hence we can evaluate the security of $[x]$ by the same argument as the previous section, i.e., by inequality (\ref{Ha3-1}) or by (\ref{Ha4-1}), depending on one's security criteria.
By noting that the channel between Alice and Bob is i.i.d., we can apply a simple bound given in Theorem \ref{thm4}.
Thus, if a fixed a code $C_1$, and an $\varepsilon$-almost dual universal$_2$ subcode family of ${\cal C}$ of $C_1$ are used, the average of $P_{\rm ph}\left(C_2^{\perp}/ C_1^{\perp} \right)$ satisfies
\begin{align}
\rE_{C_2 \in {\cal C}}
P_{\rm ph}\left(C_2^{\perp}/ C_1^{\perp} \right)
\le 
2^{-nE(1-S, p_{\rm ph})}
\max\{\varepsilon,1\}.
\Label{Ha7}
\end{align}
Here $t_{\min}=n(1-S)$ is the minimum dimension of $C_2$, and  
$t_{\max}=nS$ is the maximum dimension of $C_2^{\perp}$, which equals the sacrificed bit length.
As one can see from Corollary \ref{crl:ERp}, the exponential decreasing rate $E\left(1-S,p_{\rm ph}\right)$ on the right hand side of (\ref{Ha7}) is strictly positive for $S>h(p_{\rm ph})$.
By using (\ref{Ha7}),
the averages of Eve's distingushability
$\|\rho_{AE}-\rho_{A} \otimes \rho_{E} \|_1$
and the Holevo information
$\chi=\Tr \rho_{AE} \left(\log \rho_{AE}-\log \rho_{A} \otimes \rho_{E} \right)$
can be evaluated as
\begin{align}
& \rE_{C_2\in {\cal C}} 
\|\rho_{AE}-\rho_{A} \otimes \rho_{E} \|_1 \nonumber \\
\le &
2^{-\frac12nE(1-S, p_{\rm ph})+\frac32}
\max\left\{\sqrt{\varepsilon},1\right\},
\Label{Ha21}\\
& \rE_{C_2\in {\cal C}}\, \chi \nonumber \\
\le & \eta_{l}\left(
2^{-nE(1-S, p_{\rm ph})}
\max\{\varepsilon,1\}\right).
\Label{Ha22}
\end{align}
with $l=\dim C_1-t_{\min}$ being the length of message.

\subsection{Deterministic universal hash function}\Label{s6-a2}
In fact, 
the above argument is valid even for 
a $\varepsilon$-almost dual universal$_2$ code pair family. 
Since 
our setting is permutation invariant,
a deterministic code pair given in Proposition \ref{2-7-3} can be used.
That is,
given a code $C_1$, we can choose another $t$-dimensional subcode $C_2$ such that
$C_1^{\perp} \subset C_2^{\perp}$ and $\varepsilon(C_2^{\perp}/C_1^{\perp}) \le n+1$.
Then by combining (\ref{Ha7-2}), (\ref{Ha3-1}), and (\ref{Ha4-1}), we see that the security of $C_1$, $C_2$ can be evaluated as
\begin{align}
\|\rho_{AE}-\rho_{A} \otimes \rho_{E} \|_1
\le &
\sqrt{n+1}\ 
2^{-\frac12nE(1-S,\,p_{\rm ph})+\frac32},
\Label{Ha21-1}\\
\chi
\le & \eta_{l}\left(
(n+1)\,
2^{-nE(1-S,\,p_{\rm ph})}
\right)
\Label{Ha22-1}
\end{align}
with the message length $l=\dim C_1-t$.
Note that the construction of code $C_2$ is universal in that it does not depend on the value of $p_{\rm ph}$.
Hence, the linear map defined by $C_1\to C_1/C_2$ can be regarded as a type of deterministic universal hash function which is secure for 
independent and identical applications of
an arbitrarily given quantum Pauli channel.

\subsection{Comparison with $\delta$-biased approach}\Label{s6-b}


Now, we treat the same setting as the above by using
the $\delta$-biased approach.
When 
the subcode $C_2 \subset C_1$ 
is chosen from
an $\varepsilon$-{\it almost dual universal$_2$ subcode} family 
${\cal C}$
of a fixed code $C_1$,
we can evaluate the average performance after 
the combination of the error correction by $C_1$
and the privacy amplification by $C_2$
by using Lemma \ref{Lem6-3-q} (the $\delta$-biased approach).

When $\varepsilon\ge 1$,
attaching the smoothing to Lemma \ref{Lem6-3-q},
Hayashi \cite{cq-security} derived the following inequalities:
\begin{align}
& \rE_{C_2\in {\cal C}} 
\|\rho_{AE}-\rho_{A} \otimes \rho_{E} \|_1 \nonumber \\
\le &
(4+(n+1)^{1/2} \sqrt{\varepsilon})
2^{-\frac12nE(1-S,\,p_{\rm ph})} 
\Label{1-3-a}
\\
& \rE_{C_2\in {\cal C}} 
\chi \nonumber \\
\le &
\eta_{n}\left(
(4+(n+1)^{1/2} \sqrt{\varepsilon})
2^{-\frac12nE(1-S,\,p_{\rm ph})} \right) 
\Label{1-3-b}
\\
& \rE_{C_2\in {\cal C}} \chi \nonumber \\
\le &
2 \eta_{u(\varepsilon,n)}
( 2^{ 1-n  \max_{0\le s \le 1} \frac{s}{2-s} (S  -H_{1-s}(p_{\rm ph}))} ),
\Label{1-3-c}
\end{align}
where
$
H_{1-s}(p):= \frac{1}{s}\log_2 (p^{1-s}+(1-p)^{1-s})$
and 
$u(\varepsilon,n):=\frac{\varepsilon (n+1)}{4 \log 2} + n $.
When $\varepsilon$ increases at most polynomially,
(\ref{Ha21}) and (\ref{1-3-a}) give the same exponential evaluation:
\begin{align}
\liminf_{n \to \infty}\frac{-1}{n}\log \rE_{C_2} 
\|\rho_{AE}-\rho_{A} \otimes \rho_{E} \|_1
&\ge
\frac12 E(1-S,\,p_{\rm ph}) .
\end{align}
However, for $\varepsilon\ge 1$,
\begin{align}
\frac{\hbox{RHS of }(\ref{Ha21})}{\hbox{RHS of }(\ref{1-3-a})}
=
\frac{2^{3/2} \sqrt{\varepsilon}}{(4+(n+1)^{1/2} \sqrt{\varepsilon})}
\to 0.
\end{align}
Hence, we can conclude that the evaluation (\ref{Ha21}) 
by the phase error correction approach
gives a better evaluation for 
$\|\rho_{AE}-\rho_{A} \otimes \rho_{E} \|_1 $.

In this case, 
(\ref{1-3-b}) yields the following exponential evaluation
for $\chi$:
\begin{align}
\liminf_{n \to \infty}\frac{-1}{n}\log \rE_{C_2\in {\cal C}} 
\chi
\ge 
\frac12 E(1-S,\,p_{\rm ph}) ,
\end{align}
which is
better than that of (\ref{1-3-c}), as is shown in Hayashi \cite{cq-security}.
However, the evaluation (\ref{Ha22}) by the phase error correction approach
gives the following:
\begin{align}
\liminf_{n \to \infty}\frac{-1}{n}\log \rE_{C_2\in {\cal C}} 
\chi
\ge 
E(1-S,\,p_{\rm ph}) ,
\end{align}
which is twice of the above.
Hence, 
in the case of QKD, 
we can conclude that 
the phase error correction approach
is better than
the $\delta$-biased approach
based on Lemma \ref{Lem6-3-q}.

\section{Relation with existing results}\Label{s8}

\subsection{Comparison with existing results}\Label{s8-a}
In order to compare our results of this section with existing ones, we here review the history of the studies of the information theoretic security.

Wyner \cite{Wyner}, and Csisz\'{a}r and K\"{o}rner \cite{CK79} 
showed the weak security 
with the wire-tap channel model
in terms of Maurer and Wolf \cite{MW}.
Csisz\'{a}r \cite{Csiszar} showed the strong security
with the same model
in terms of Maurer and Wolf \cite{MW}.
Hayashi \cite{Hayashi1} gave the concrete exponential decreasing rate for the strong security
with the same model.
These studies use completely random coding as privacy amplification process. 
That is, no linear functions are used in this process.
Bennett et al. \cite{BBCM} 
and 
H\r{a}stad et al. \cite{ILL}
proposed to use universal$_2$ hash functions for privacy amplification.
Maurer and Wolf \cite{MW} applied this idea to the secret key agreement, which is different setting form wire-tap channel.
They showed the strong security with universal$_2$ hash functions for privacy amplification.
Based on these ideas, Hayashi \cite{Hayashi2} showed the strong security 
with universal$_2$ hash functions
when the sacrifice bit rate is greater than the mutual information $I(A:E)$.
Muramatsu and Miyake \cite{MM2} considered
a more general condition \cite{MM1}
than the $\varepsilon$-almost universal$_2$ functions of the code for privacy amplification.
Under this condition, they showed the weak security 
However, 
Watanabe et al. \cite{WMU} pointed out that their method cannot derive the strong security
based on Hayashi's idea \cite{Hayashi4} in the case of secret key agreement from correlated source.
Further,
the impossibility of the strong security under the condition of 
$\varepsilon$-almost universal$_2$ 
 will be shown in Theorem \ref{thm:counterexample} by giving a counterexample.
Overall, our concept ``$\varepsilon$-almost universal$_2$"
is a larger class of hash function families 
than any known classes of linear hash function families guaranteeing the strong security.

\subsection{$\varepsilon$-almost {\it dual} universality$_2$ vs. $\varepsilon$-almost universality$_2$}
\Label{s8-b}
Finally, as mentioned earlier,
we present an example of the classical wire-tap channel model that can vividly contrast the properties of the $\varepsilon$-almost {\it dual} universality$_2$ and the $\varepsilon$-almost universality$_2$.
Tomamichel et al. 
showed that 
when $\varepsilon$ converges to $1$,
any sequence of 
$\varepsilon$-almost universal$_2$ subcode families (of $C_1=\FF_2^n$) 
guarantees the strong security\cite[Lemma 1]{TSSR11}\footnote{Their $\delta$
corresponds to $\varepsilon 2^{m}$ when the bit length of final keys
is $m$.}.
However, 
one sees that, if $\varepsilon\ge2$, an $\varepsilon$-almost universal$_2$ subcode family (of $C_1=\FF_2^n$) cannot necessarily guarantee the strong security.
In other words, the choice of the code $C_2$ from an $\varepsilon$-almost universal$_2$ subcode family of $C_1$ is not sufficient for the strong security.
Note that we have shown in this section that the $\varepsilon$-almost {\it dual} universality$_2$ is indeed sufficient for this purpose.
Hence, at least in the setting of this section, the $\varepsilon$-almost {\it dual} universality$_2$ is the more relevant criterion for security.

\begin{Thm}\label{thm:counterexample}
Assume that the channel from Alice to Bob is noiseless,
and the channel to Eve is binary symmetric with error probability $p$.
There exists an example of a 2-almost universal$_2$ code family ${\cal C}$ for which the hash functions (i.e., $\FF_2^n\to \FF_2^n/C_2$ with $C_2\in{\cal C}$) cannot guarantee the strong security.
\end{Thm}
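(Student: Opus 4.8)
The plan is to realize the failure predicted by Corollary \ref{Hacor}, item 3, as a concrete operational leak. First I would invoke that corollary (equivalently, the tightness statement of Theorem \ref{thm:equality} specialized to $\varepsilon=2$ and minimum dimension $t=n-1$, the value for which the right-hand side of (\ref{eq:inequality-thm1}) equals exactly $1$) to produce a $2$-almost universal$_2$ code family ${\cal C}$ together with a fixed nonzero vector $x_0$ satisfying $\Pr[x_0\in C_r^\perp]=1$, i.e.\ $x_0\in C_r^\perp$ for every $r$. Since $x_0$ is chosen before the family is built, I would take it to have Hamming weight one, say $x_0=(0,\dots,0,1)$; the simplest instance is then the single $(n-1)$-dimensional code $C_2=\{y\in\FF_2^n\,:\,y_n=0\}$, whose dual is $\{0,x_0\}$ and which is exactly $2$-almost universal$_2$ because the bound $\Pr[y\in C_2]\le 2^{(n-1)-n}\cdot 2=1$ of (\ref{eq:C-r-upperbound-1}) is trivially met.

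Next I would identify the surviving leaked bit. Because $x_0\in C_2^\perp$, the linear functional $y\mapsto (x_0,y)=y_n$ is constant on every coset of $C_2$, hence it is a genuine component of the hash value $[y]\in\FF_2^n/C_2$. In the wire-tap protocol of Section \ref{s6} Alice encodes her message into a uniformly random representative $X'$ of a coset and transmits it; since the Alice-to-Bob channel is noiseless, Bob recovers $X'$ and in particular the bit $X'_n$ exactly. Eve, on the other hand, receives $X'\oplus E$ with $E$ i.i.d.\ Bernoulli($p$), so her value of the same functional is $(x_0,X'\oplus E)=X'_n\oplus E_n$, which coincides with the true secret bit $X'_n$ unless $E_n=1$.

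Finally I would quantify the leak. Because $x_0$ has weight one, the induced noise on this output bit is the single coordinate $E_n$, so the secret bit $X'_n$ reaches Eve through an effective binary symmetric channel with crossover probability $p$. Taking $X'_n$ uniform, Eve's information about this one component of the hash output is at least $1-h(p)$, and the corresponding trace distance $\|\rho_{AE}-\rho_A\otimes\rho_E\|_1$ is bounded below by a positive constant independent of $n$ (of order $|1-2p|$). Neither quantity tends to $0$ as $n\to\infty$, so the family cannot guarantee strong security, which proves the theorem.

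The main obstacle is not any calculation but the two structural points that make the example legitimate: showing that a single hyperplane (or, if one prefers a genuinely randomized family, a universal$_2$ family of $(n-2)$-dimensional subcodes of $\{x_0\}^\perp$) is simultaneously $2$-almost universal$_2$ and traps the fixed weight-one vector $x_0$ in all dual codes --- precisely the content of the equality case of Theorem \ref{thm:equality} --- and then certifying that $(x_0,\cdot)$ is an honest coordinate of the hash output, so that the surviving parity can be tracked unchanged through both the noiseless and the noisy channel. Once these are in place, the residual mutual information $1-h(p)>0$ follows immediately.
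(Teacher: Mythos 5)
Your proposal is correct and follows essentially the same route as the paper: the paper's proof takes an arbitrary universal$_2$ family ${\cal C}'$ in $\FF_2^{n-1}$ and appends a zero coordinate to every codeword, so that $(0,\dots,0,1)$ lies in every dual code, the last bit survives privacy amplification as a component of the coset, and Eve's binary symmetric noise on that single bit leaves her with mutual information at least $1-h(p)$, which does not vanish as $n\to\infty$. Your single-hyperplane family $C_2=\{y\,:\,y_n=0\}$ is exactly the special case ${\cal C}'=\{\FF_2^{n-1}\}$ of that construction, so the mechanism --- a weight-one vector trapped in all dual codes producing one untouched leaked bit --- is identical.
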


\begin{proof}
Choose an arbitrary universal$_2$ code family ${\cal C}'=\{C_2' \subset \FF_2^{n-1}\}$.
Then define another code family ${\cal C}$ in $\FF_2^n$, consisting of $C_2:=\left\{\,x||0\,|\,x\in C_2'\,\right\}$ with $C_2'\in{\cal C}$.
Here, $a\|b$ denotes the concatenation of $a$ and $b$.
Hence for any $C_2\in{\cal C}$, there exists $C_2'\in {\cal C}'$, such that 
$C_2$ consists of $x\in C_2'$ concatenated with a zero.
Note that the code family ${\cal C}$ is obviously $2$-almost universal$_2$,
but its dual code family ${\cal C}^\perp$ cannot be $\varepsilon$-almost universal$_2$ for any $\varepsilon<1$, because $x=0\dots01\in C$ for all $C\in{\cal C}^\perp$.

When Alice transmits a coset $[x]\in\FF_2^n/C_2$ as her secret message, 
she chooses $x\in[x]$ randomly and sends it to Bob.
Due to our construction of ${\cal C}$, the $n$-th bit of $x$ is preserved in $[x]$ as it is without being canceled by privacy amplification.
Since Eve receives this $n$-th bit with the error probability $p$,
Eve's mutual information regarding $[x]$ is greater than $1-h(p)$.
Therefore, the strong security does not hold with these hash functions.
\end{proof}

\subsection{Deterministic universal hash function}\Label{s8-c}
When there exist errors, one needs error correction as well as
 hash functions.
Here we denote the code for error correction by $C_1$
and the code for the hash function by $C_2$.
Then, the relation $C_2\subset C_1$ holds.
Now, we consider what kind of code pairs $C_2\subset C_1$
yields the strong security.

First note that the phase error correction approach
has an additional advantage over 
the $\delta$-biased approach;
that is, the phase error correction approach allows us to use
an $\varepsilon$-{\it almost dual universal$_2$ code pair} family 
$C_2\subset C_1$.

Note also that the situation is quite different for the $\delta$-biased approach,
because it requires hash functions to be applied after error correction.
That is, one needs to perform an $\varepsilon$-almost dual universal$_2$ code family 
to a fixed code space.
Hence, the $\delta$-biased approach can guarantee the strong security only with 
an $\varepsilon$-{\it almost dual universal$_2$ subcode} family of a fixed code $C_1$.
This relation among classes of code pair families are summarized in Fig. \ref{fig:code-pair}.

\begin{figure*}[!t]

\begin{center}
\scalebox{1.0}{\includegraphics[scale=0.5]{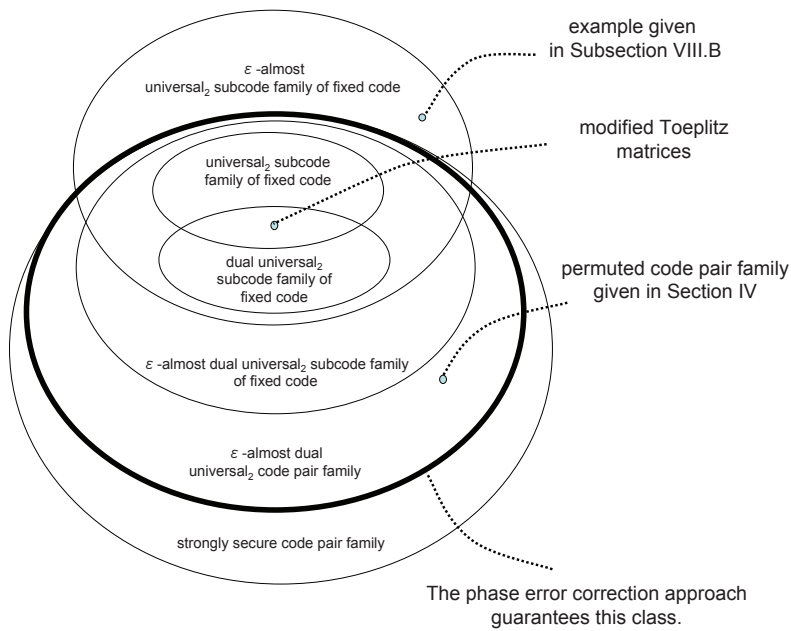}}
\end{center}
\caption{Relation among class of code pairs.}
\Label{fig:code-pair}

\end{figure*}

In order to illustrate this advantage of the phase error correction approach with an example, let us take an arbitrary code $C_2$,
and choose a subcode $C_1$ of $C_2$ based on Proposition \ref{2-7-3}.
Then,
the permuted code pair family ${\cal C}_{C_2\subset C_1}$
is an $(n+1)$-{\it almost dual universal$_2$ code pair} family,
but 
is not an $(n+1)$-{\it almost dual universal$_2$ subcode} family
of a fixed code $C_1$.
Hence, 
as is discussed in Subsection \ref{s6-a2},
for a given error correction code $C_1$, 
the phase error correction approach guarantees the existence of
a deterministic hash function that universally works for 
an independent and identical setting.
In particular, 
if the error correcting code $C_1$ 
universally works for additive errors given by an
independent and identical distribution,
the code pair $C_2\subset C_1$
universally works for error correction as well as privacy amplification. 

However, in the $\delta$-biased approach,
it is impossible to construct such a deterministic hash function
because this approach cannot treat the security for 
an $(n+1)$-{\it almost dual universal$_2$ code pair} family.


Finally, we explain the relation of our results to a universal quantum CSS code found by Hamada \cite{Ham} for sending quantum states.
In his paper, he focused on an family of classical self-dual codes.
Then combining qubits based on the bit basis and qubits based on the phase basis,
he succeeded in constructing a universal quantum CSS code from a set of universal classical self-dual codes by choosing $C_1^\perp=C_2$.
His code can be applied to QKD, where Alice can send information by using both of the bit basis and the phase basis. 
On the other hand, it cannot be applied to our quantum wire-tap channel model in a straightforward manner, where only the bit basis is used for sending the classical message.
This is because our method employs two codes $C_1$ and $C_2$ chosen separately.
Our method for constructing a deterministic universal hash function would not work either, if we were to restrict our codes to self-dual codes.
Recall that the key point of our method is the concept of a ``permuted code pair family."

\section{Conclusion}
In this paper, we have first introduced the concept of
``$\varepsilon$-almost dual universal$_2$ hash function family".
Then, 
we have shown that
the class of $\varepsilon$-almost dual universal$_2$ hash function families
includes the class of universal$_2$ hash function families.

Employing the relation between quantum error correction and the security,
we have shown that application of $\varepsilon$-almost dual universal$_2$ hash function family
yields the strong security.
We have also mentioned that the results concerning the $\delta$-biased family
\cite{DS05,FS08} imply this fact,
while their original result does not refer the privacy amplification.

We have compared these two approaches, i.e.,
the phase error correction approach and the
$\delta$-biased approach in the following two points.
As the first point, we have shown that
the phase error correction approach 
yields a better security bound 
in terms of the trace distance and the Holevo information, 
than the $\delta$-biased approach.
As the second point, we have shown that
the phase error correction approach 
guarantees the strong security 
with a larger class of protocols
than the $\delta$-biased approach
when we apply error correction as well as privacy amplification.

In particular, as a byproduct, 
we have shown the existence of a universal code for privacy amplification with error correction.
Due to the above difference, the phase error correction approach can guarantee the existence of such a code,
while the $\delta$-biased approach cannot.

\section*{Acknowledgments}
The authors are grateful to the
referee of the previous version for informing the literatures \cite{DS05,FS08}
and giving a crucial comment for Lemma \ref{Lem6-3-q}.
MH is grateful to Dr. Markus Grassl for a helpful discussion.
TT and MH are partially supported by the National Institute of Information and Communication Technolgy (NICT), Japan.
MH is partially supported by a MEXT Grant-in-Aid for Young Scientists (A) No. 20686026 and Grant-in-Aid for Scientific Research (A) No. 23246071.
The Centre for Quantum Technologies is funded by the
Singapore Ministry of Education and the National Research Foundation
as part of the Research Centres of Excellence programme.

\appendices

\section{Proofs of Theorems \ref{thm3} and \ref{thm4}}\label{s7}
First, we show Theorem \ref{thm3}.
Due to the linearity, it is sufficient to evaluate the probability that 
the received signal is erroneously decoded to $C \setminus \{0\}$ 
when $0 \in C$ is sent.
Let $P^n_X(x)$ be the $n$-independent and identical extension of the distribution $(1- p,p)$.
Since the phase error $x$ occurs on $n$-bits sequence with the probability $P^n_X(x)$,
applying Gallager's evaluation\cite{Gal} to this error probability,
for $0\le s \le 1$ and $0\le a= \frac{1}{1+s}$, 
we obtain 
\begin{align*} 
& P_e(C) \le
\sum_{y\in \FF_2^n}
P^n_X(y)
\left( 
\sum_{x \in C \setminus \{0\} }
\left(\frac{P^n_X(y+x)}{P^n_X(y)}\right)^a
\right)^s \\
= &
\sum_{y\in \FF_2^n}
(P^n_X(y))^{\frac{1}{1+s}}
\left( \sum_{x \in C \setminus \{0\} }
(P^n_X(y+x))^{\frac{1}{1+s}}
\right)^s.
\end{align*}
Thus, the error probability 
$P(C)$ is bounded from above by this value. 
Any $\varepsilon$-almost universal$_2$ code family satisfies the inequality
$\rE_{C\in{\cal C}} \sum_{x \in C \setminus \{0\}}
P^n_X(y+x)^{\frac{1}{1+s}}
\le  
\varepsilon 2^{t_{\max}-n}
\sum_{x \in \FF_2^n }
P^n_X(y+x)^{\frac{1}{1+s}}$.
Taking the average concerning the family for $C$, 
we obtain the upper bound
\begin{align}
& \rE_{C\in{\cal C}} P_e(C) \nonumber \\
\le &
\rE_{C\in{\cal C}} \sum_{y\in \FF_2^n}
P^n_X(y)^{\frac{1}{1+s}}
\left( \sum_{x \in C \setminus \{0\}}
P^n_X(y+x)^{\frac{1}{1+s}}
\right)^s \nonumber \\
\le &
\sum_{y\in \FF_2^n}
P^n_X(y)^{\frac{1}{1+s}}
\left( \rE_{C\in{\cal C}} \sum_{x \in C \setminus \{0\}}
P^n_X(y+x)^{\frac{1}{1+s}}
\right)^s \nonumber \\
\le &
\sum_{y\in \FF_2^n}
P^n_X(y)^{\frac{1}{1+s}}
\left( 
\varepsilon 2^{t_{\max}-n}
\sum_{x \in \FF_2^n }
P^n_X(y+x)^{\frac{1}{1+s}}
\right)^s ,\Label{Ha12}
\end{align}
where 
the concavity of $x \mapsto x^s$ is used.
Since the quantity
$\left( 
\varepsilon 2^{t_{\max}-n}
\sum_{x \in \FF_2^n }
P^n_X(y+x)^{\frac{1}{1+s}}
\right)^s$
does not depend on $y$, it can be replaced with
$\left( 
\varepsilon 2^{t_{\max}-n}
\sum_{x \in \FF_2^n }
P^n_X(x)^{\frac{1}{1+s}}
\right)^s
=
\varepsilon^s 2^{st_{\max}-sn}
\left( 
\sum_{x \in \FF_2^n }
P^n_X(x)^{\frac{1}{1+s}}
\right)^s$.
Hence, the right hand side of (\ref{Ha12}) becomes
\begin{align}
& \sum_{y\in \FF_2^n}
P^n_X(y)^{\frac{1}{1+s}}
\varepsilon^s 2^{st_{\max}-sn}
\left( 
\sum_{x \in \FF_2^n }
P^n_X(x)^{\frac{1}{1+s}}
\right)^s \nonumber \\
=&
\varepsilon^s 2^{st_{\max}-sn}
\left(
\sum_{x \in \FF_2^n }
P^n_X(x)^{\frac{1}{1+s}}
\right)^{1+s} \nonumber \\
=&
\varepsilon^s 2^{st_{\max}-sn}
2^{n\left[s-E_0(s,p)\right]}.\Label{Ha16}
\end{align}From this, we obtain Theorem \ref{thm3}.

Next, we show Theorem \ref{thm4}.
Due to the linearity, it is sufficient to evaluate the probability that 
the received signal is erroneously decoded to $C_1 \setminus C_2$ 
when Alice sends $0 \in C_2$.
The difference from the above case is the derivation of (\ref{Ha12}).
This part of derivation can be replaced as follows.
\begin{eqnarray*} 
&& \rE_{C_1\in{\cal C}} 
P_e(C_1 / C_2 ) \nonumber \\
&\le&
\rE_{C_1\in{\cal C}} \sum_{y\in \FF_2^n}
P^n_X(y)^{\frac{1}{1+s}}
\left( \sum_{x \in C_1 \setminus C_2}
P^n_X(y+x)^{\frac{1}{1+s}}
\right)^s \\
&\le&
\sum_{y\in \FF_2^n}
P^n_X(y)^{\frac{1}{1+s}}
\left( \rE_{C_1\in{\cal C}} \sum_{x \in C_1 \setminus C_2}
P^n_X(y+x)^{\frac{1}{1+s}}
\right)^s \\
&\le&
\sum_{y\in \FF_2^n}
P^n_X(y)^{\frac{1}{1+s}}
\left( 
\varepsilon 2^{t_{\max}-n}
\sum_{x \in \FF_2^n }
P^n_X(y+x)^{\frac{1}{1+s}}
\right)^s .
\end{eqnarray*}
Combining this and (\ref{Ha16}), we obtain (\ref{Ha7-1}) and (\ref{Ha7-1a}).
This discussion can be extended to the case of 
$\varepsilon$-almost universal$_2$ extended code pair family.
Thus, we obtain
Theorem \ref{thm4}.

\section{Proof of Equation (\ref{11-23-1})}
\label{app:Proof_min_DQP}
In order to prove this equation, it is convenient to introduce another binary distribution $P_\theta=(p_\theta,1-p_\theta)$ that is derived from $P=(p,1-p)$, where $p_\theta$ is defined by
\[
p_\theta
:= \frac{p^\theta}{p^\theta+(1-p)^\theta}
\]
with the convention that $p^0=0$ if $p=0$.
The distribution $P_\theta$, parameterized by a real number $\theta\ge0$, is often called the exponential family of $P$.
We also define a function $\psi(\theta)$ by
\[
\psi(\theta):=\log \left[p^\theta+(1-p)^\theta\right].
\]
Then the following relations are useful for simplifying calculations of divergence $d(p\|q)$ and entropy $h(p)$.
For $\theta \ge0$, we have
\[
\psi'(\theta)=-d(p_\theta\|p)-h(p_\theta),\ \ 
\psi''(\theta)
\ge 0,
\]
\begin{eqnarray}
h(p_\theta)
&=&
-\theta \psi'(\theta)+\psi(\theta), \nonumber \\
\frac{d h(p_\theta)}{d\theta}
&=& - \theta \psi''(\theta) \le 0, \nonumber
\end{eqnarray}
\begin{eqnarray}
d(p_\theta\|p)
&=&
-\psi(\theta) 
-(1-\theta) \psi'(\theta),\nonumber\\
d(p_\theta\|p)+h(p_\theta)
&=&
- \psi'(\theta).\nonumber
\end{eqnarray}
We shall make frequent use of these formulas in what follows.
Note that $E_0(s,p)$ can be rewritten as
\[
E_0(s,p)=s-(1+s)\psi\left(\frac1{1+s}\right).
\]

First, we prove Equation (\ref{11-23-1}) for the limited case where the minimum is evaluated over $q=p_\theta$ with $0\le\theta\le1$.
\begin{Lmm}
\label{Lmm:E_RP_P_theta}
If $R<1-h(p)$,
\begin{equation}
\min_{0\le \theta\le 1} d(p_\theta\|p)+[1 - h(p_\theta)-R]_+=E(R,p).
\label{eq:Lmm_E_RP_P_theta}
\end{equation}
\end{Lmm}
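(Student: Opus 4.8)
The plan is to regard both sides of (\ref{eq:Lmm_E_RP_P_theta}) as one–dimensional optimisations over the exponential–family parameter $\theta$, linked by the substitution $s=\tfrac{1-\theta}{\theta}$, equivalently $\theta=\tfrac1{1+s}$, which carries $s\in[0,1]$ onto $\theta\in[\tfrac12,1]$. Since $R<1-h(p)$ and $h(p_\theta)$ decreases continuously from $h(p_0)=1$ to $h(p_1)=h(p)$, there is a unique threshold $\theta_R\in[0,1)$ with $h(p_{\theta_R})=1-R$; this $\theta_R$ is the hinge of the $[\,\cdot\,]_+$ on the left, and I would show it is simultaneously the stationary point of Gallager's objective on the right.

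First I would minimise the left-hand side $f(\theta):=d(p_\theta\|p)+[1-h(p_\theta)-R]_+$. Using the listed identities one computes $\frac{d}{d\theta}d(p_\theta\|p)=-(1-\theta)\psi''(\theta)\le0$ together with the given $\frac{d}{d\theta}h(p_\theta)=-\theta\psi''(\theta)$. Hence on $[0,\theta_R]$, where the bracket vanishes, $f=d(p_\theta\|p)$ is nonincreasing, while on $[\theta_R,1]$ one gets $f'(\theta)=(2\theta-1)\psi''(\theta)$, whose sign is that of $2\theta-1$. Therefore $f$ is minimised at $\theta^\star=\max\{\theta_R,\tfrac12\}$.

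Next I would reparametrise the right-hand side. From $E_0(s,p)=s-(1+s)\psi(\tfrac1{1+s})$ a direct differentiation (again via the $\psi$-formulas) gives the clean relation $\frac{d}{ds}E_0(s,p)=1-h(p_\theta)$, so that $\frac{d}{ds}[-sR+E_0(s,p)]=1-h(p_\theta)-R$. This vanishes exactly at $\theta=\theta_R$ and is positive for $\theta>\theta_R$ (i.e.\ small $s$), so $-sR+E_0$ is unimodal in $s$ and, under the constraint $s\le1$ (that is $\theta\ge\tfrac12$), attains its maximum over $[0,1]$ at the same $\theta^\star=\max\{\theta_R,\tfrac12\}$.

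Finally I would match the two extremal values at $\theta^\star$, in two cases. If $\theta_R\ge\tfrac12$ both optima sit at $\theta_R$; there the bracket is zero and a short computation using $d(p_\theta\|p)=-\psi(\theta)-(1-\theta)\psi'(\theta)$ and $R=1-h(p_{\theta_R})$ reduces $-sR+E_0$ to $d(p_{\theta_R}\|p)$, so both sides equal $d(p_{\theta_R}\|p)$. If $\theta_R<\tfrac12$ both optima sit at $\theta=\tfrac12$ (i.e.\ $s=1$), and one checks $E_0(1,p)=1-2\psi(\tfrac12)=d(p_{1/2}\|p)+1-h(p_{1/2})$, so both sides equal $E_0(1,p)-R$. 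I expect the main obstacle to be the bookkeeping around the kink: verifying that the hinge $\theta_R$ of the $[\,\cdot\,]_+$ coincides with the interior stationary point on the right, and that the endpoint $s=1$ matches the sign change of $f'$ at $\theta=\tfrac12$, so that the two genuinely different objectives (minimised over $\theta\in[0,1]$ versus maximised over $\theta\in[\tfrac12,1]$) share the same optimiser and value.
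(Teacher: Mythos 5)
Your proof is correct and takes essentially the same route as the paper's: the same $\psi$-calculus on the exponential family, the same two-case split (your $\theta_R\ge\tfrac12$ vs.\ $\theta_R<\tfrac12$ is exactly the paper's $R\ge R_c$ vs.\ $R\le R_c$ with $R_c=1-h(p_{1/2})$), and the same value-matching identities $-s_R R+E_0(s_R,p)=d\left(\left.p_{\theta_R}\right\|p\right)$ and $E_0(1,p)=d\left(\left.p_{1/2}\right\|p\right)+1-h(p_{1/2})$. The only difference is organizational: you locate the common optimizer $\theta^\star=\max\{\theta_R,\tfrac12\}$ of both objectives first and then match values, whereas the paper computes $E(R,p)$ in each case and then identifies the minimizer of the left-hand side.
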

\begin{proof}
$E_R(s,p)=-sR+E_0(s,p)$ is convex with respect to $s$, since
$E_R''(s,p)=
(1+s)^{-3}\psi''\left(1/(1+s)\right) \ge 0$.
We define the critical rate $R_c$ by
\[
R_c:=1-h\left(p_{1/2}\right),
\]
such that, if $R\le R_c$ (resp., $R\ge R_c$), then $\left.\frac{\partial E_R}{\partial s}\right|_{s=1}\ge 0$ (resp., $\left.\frac{\partial E_R}{\partial s}\right|_{s=1}\le 0$).

Then, if $R\le R_c$, the maximum of $E_R$ is attained at $s=1$:
\begin{eqnarray*}
E(R,p)
&=&E_{R}(1,p)
=-R+1 -2 \psi(1/2)\\
&=&
d\left(\left.p_{1/2}\right\|p\right)+1-h(p_{1/2})-R\\
&=& \min_{0\le\theta\le1} d\left(\left.p_{\theta}\right\|p\right)+1-h(p_\theta)-R.
\end{eqnarray*}
The last line follows by noting that $d(p_\theta\|p)+1-h(p_\theta)-R$ attains its minimum at $\theta=1/2$, since  $\frac{\partial}{\partial \theta}\left[d(p_\theta\|p)-h(p_\theta)\right]=(\theta-1/2)\psi''(\theta)$ with $\psi''(\theta)\ge0$.
Also by noting that $1-h(p_{1/2})-R\ge 0$ for $R\le R_c$, we see that (\ref {eq:Lmm_E_RP_P_theta}) is satisfied for $R\le R_c$.

On the other hand, if $R>R_c$, we have $\left.\frac{\partial E_R}{\partial s}\right|_{s=1}\le 0$, and also $\left.\frac{\partial E_R}{\partial s}\right|_{s=0}>0$ from $R<1-h(p)$.
Thus the maximum is attained at $s_R\in(0,1]$ satisfying $\left.\frac{\partial E_R}{\partial s}\right|_{s=s_R}= 0$, i.e.,
\begin{equation}
\psi\left(\frac{1}{1+s_R}\right)
-\frac{1}{1+s_R}\psi'\left(\frac{1}{1+s_R}\right)
= 1 -R.
\label{eq:cond_sR}
\end{equation}
Hence
\begin{align}
& E(R,p) =E_{R}(s_R,p) \nonumber \\
= & -\psi\left(\frac{1}{1+s_R}\right) -\frac{s_R}{1+s_R} \psi'\left(\frac{1}{1+s_R}\right)\nonumber \\
=&d\left(\left.p_{(1+{s_R})^{-1}}\right\|p\right).
\Label{28-6}
\end{align}
Note that the condition (\ref{eq:cond_sR}) can also be written as  $1-h\left(p_{(1+s_R)^{-1}}\right)-R=0$.
Then by noting that $d(p_\theta\|p)-h(p_\theta)$ is monotonically increasing for $1/2\le \theta\le1$, whereas $d(p_\theta\|p)$ decreasing, we see that the minimum of 
(\ref{eq:Lmm_E_RP_P_theta}) is attained for $\theta=(1+s_R)^{-1}$.
Hence (\ref{eq:Lmm_E_RP_P_theta}) holds for $R>R_c$ as well.
\end{proof}

{\it Proof of Equation (\ref{11-23-1}):}
Let
\begin{eqnarray*}
M_1&:=&\min_{0\le q\le1}d(q\|p)+[1-h(q)-R]_+,\\
M_2&:=&\min_{0\le \theta\le 1} d(p_\theta\|p)+[1 - h(p_\theta)-R]_+.
\end{eqnarray*}
Then from Lemma \ref{Lmm:E_RP_P_theta}, it suffices to show $M_1=M_2$.
Since $M_1\le M_2$ holds trivially, it remains to show $M_1\ge M_2$.

Denote the value of $q$ attaining the minimum of $M_1$ by $\tilde{q}$.
Then we have
\begin{equation}
d(\tilde{q}\|p)\le d(p_0\|p)
\label{eq:D_tildeQ_inequality}
\end{equation}
since otherwise,
\begin{eqnarray}
M_1&>&d(p_0\|p)+[1-h(\tilde{q})-R]_+\nonumber\\
&\ge&d(p_0\|p)+[1-h(p_0)-R]_+\ge M_2,
\label{eq:M1_M2_contradiction}
\end{eqnarray}
which contradicts $M_1\le M_2$.
The second line of (\ref{eq:M1_M2_contradiction}) follows by noting that $h(\tilde{q})\le h(p_0)$ with $p_0$ being the uniform distribution.
Note that this is true even when $p=0$ (resp. $p=1$) because then $\tilde{q}=0$ (resp. $\tilde{q}=1$) due to the condition $d(\tilde{q}\|p)<\infty$. 

By a straightforward calculation, one can show that, given an arbitrary combination of $p,q,\theta$ satisfying $d(q\|p)=d(p_\theta\|p)$,
\begin{equation}
h(p_\theta)-h(q)
=\frac{d(q\|p_\theta)}{1-\theta}\
\label{eq:HP_theta-HQ}
\end{equation}
holds.From (\ref{eq:D_tildeQ_inequality}), $d(\tilde{q}\|p)=d(p_{\tilde{\theta}}\|p)$ holds for some $\tilde{\theta}\in[0,1]$.
Then by using (\ref{eq:HP_theta-HQ}), we see that $h(p_{\tilde{\theta}})\ge h(\tilde{q})$, and thus $M_1\ge M_2$.
\endproof

\end{document}